\def\sint{\ifmmode{- \!\!\!\!\!\! \int}
    \else{\hbox{$- \!\!\!\! \int \ $}}\fi}
\newcommand{\tr}{\mbox{tr}}
\newtheorem{theorem}{Theorem}
\newtheorem{lemma}{Lemma}
\newtheorem{corollary}[theorem]{Corollary}
\begin{document}

\title{Deterministic 
	entanglement distribution on %
	series-parallel quantum networks%
}%

\author{Xiangyi~Meng}%
\affiliation{Network Science Institute and Department of Physics, Northeastern University, Boston, Massachusetts 02115, USA}%
\author{Yulong~Cui}%
\affiliation{State Key Laboratory of Industrial Control Technology, Zhejiang University, Hangzhou, Zhejiang 310027, China}%
\author{Jianxi~Gao}%
\affiliation{Department of Computer Science, Rensselaer Polytechnic Institute, Troy, New York 12180, USA}
\author{Shlomo~Havlin}
\affiliation{Department of Physics, Bar-Ilan University, 52900 Ramat-Gan, Israel}%
\affiliation{Department of Physics, Boston University, Boston, Massachusetts 02215, USA}%
\author{Andrei~E.~Ruckenstein}
\email{andreir@bu.edu}
\affiliation{Department of Physics, Boston University, Boston, Massachusetts 02215, USA}%

\date{\today}

\begin{abstract}
	The performance of distributing entanglement between two distant nodes in a large-scale quantum network (QN) of partially entangled bipartite pure states is generally benchmarked against the classical entanglement percolation (CEP) scheme. Improvements beyond CEP were only achieved by nonscalable strategies for restricted QN topologies. 
	This paper explores and amplifies a new and more effective mapping of a QN, referred to as concurrence percolation theory (ConPT), that suggests using {deterministic} rather than probabilistic protocols for scalably improving on CEP across arbitrary QN topologies.
	More precisely, we implement ConPT via a {deterministic entanglement transmission} (DET) scheme that is fully analogous to resistor network analysis, with the corresponding series and parallel rules represented by deterministic entanglement swapping and concentration protocols, respectively. 
{The main contribution of this paper is to establish a powerful mathematical framework, which is applicable to arbitrary $d$-dimensional information carriers (qudits), that provides different natural optimality metrics in terms of generalized $k$-concurrences (a family of fundamental entanglement measures) for different QN topology. In particular, we conclude that the introduced DET scheme (a) is optimal over the well-known nested repeater protocol  for distilling entanglement from partially entangled qubits and (b) leads to higher success probabilities of obtaining a maximally entangled state than using CEP. The implementation of the DET scheme is experimentally feasible as tested on IBM's quantum computation platform.}
\end{abstract}

\maketitle

\section{Introduction}

Most quantum communication technologies find their roots in entanglement, a primordial quantum resource~\cite{q-resour_cg19}.
Yet, the nonlocality of entanglement facilitating quantum communication becomes unrealistic at large space scales where faithful quantum information is submerged in thermal incoherent noise~\cite{q-dissipative-syst}, making it very inefficient to directly establish point-to-point long-distance entanglement.
To overcome this limit of classical locality and %
{indirectly} distribute entanglement among arbitrary parties requires utilizing some intermediate parties as ``relays'' and applying to them a variety of fundamental communication protocols, including entanglement swapping~\cite{entangle-swap_zzhe93,QEP-series-rule_bvk99} and entanglement concentration~\cite{entangle-conc_bbps96,entangle-conc_mk01},
as well as composite protocols that are usually recursive and more complex, such as the nested repeater protocol~\cite{q-repeater_bdcz98,q-repeater_ssrg11,q-repeater-graph-state_bbe17}, etc.

Since all parties are physically separated from one another so that locality holds, only local operations and classical communication (LOCC) can be applied to the information
carriers (e.g.,~qubits) of different parties. Such consideration allows us to treat each party as a \emph{node} that consists of a collection of information carriers and treat each partially entangled bipartite (multipartite) state formed by internode information carriers as a 
\emph{link} (hyperlink), essentially giving rise to a network representation of the structure of locality of entanglement resources, widely known as the \emph{quantum network} (QN)~\cite{QEP_acl07}. 
A number of QN-based protocols have since been introduced, e.g.,~$q$-swapping~\cite{QEP-q-swap_cc09} and path routing~\cite{q-netw-route_p19}, combining communication protocols with network science~\cite{q-netw-summ_bfd19}. In practice, basic communication protocols for small-scale QNs have been experimentally demonstrated using diamond nitrogen-vacancy centers~\cite{q-netw_krhbknbtmh17,q-netw_hkmsvtmh18,q-netw_phbbhsvtmdwh21} and ion traps~\cite{entangle-swap_rmkvschb08,q-netw-ion-trap_ahprshde09}. %
It is believed that practical QNs are easier to {scale} than universal quantum computing platforms, since the nonlocality of information carriers only needs to be maintained within each node and not across the global QN platform. %
The future of a worldwide large-scale QN, namely, a ``quantum Internet''~\cite{q-internet_k08}, is hence foreseeable. This, on the other hand, urgently demands a more efficient design of QN-based communication protocols that should be \emph{scalable} with network size, be \emph{adaptable} to different network topologies, and align with well-developed methods in complex network analysis~\cite{stat-of-netw_ab02,struct-dyn-netw}.

The particular task of entanglement distribution~\cite{entangle-distrib_czkm97} between \emph{two} arbitrary nodes, which we coin as {entanglement transmission} in the QN context, is of special interest. %
The discovery of an exact mapping between entanglement transmission and classical bond percolation theory~\cite{percolation-theor_e80} gives rise to a straightforward entanglement transmission scheme, called classical entanglement percolation (CEP)~\cite{QEP_acl07}, that greatly reduces the task of distributing a singlet (i.e.,%
~a pair of maximally entangled qubits) between two nodes to a pure percolation problem. The CEP scheme is naturally scalable and adaptable for arbitrary QN. However, immediately after it was introduced, {it was realized that the CEP does not give the {optimal} success probability of obtaining such a singlet~\cite{QEP_acl07}.} The problem of calculating the optimal probability was later proved to be extremely complicated even for a one-dimensional (1D) chain~\cite{QEP-detail_pcalw08,QEP-1d_s16}. %
Even just improving this success probability for large-scale QN can only be systematically done for special network topology with a few other limitations~\cite{QEP-detail_pcalw08}, making it difficult to find a better entanglement transmission scheme than CEP that is scalable and adaptable in the same satisfactory way. %

This difficulty eventually leads to the questioning of the exclusivity of the classical-percolation-theory-based mapping itself: It was only recently realized that another mapping from entanglement transmission to a new statistical theory called concurrence percolation theory (ConPT)~\cite{conpt_mgh21,conpt_mmhksg22} exists, bearing its name due to its analogy to percolation theory yet rooted not in probability but in concurrence, a fundamental measure of bipartite entanglement~\cite{concurrence_hw97,k-concurrence_g05}. The ConPT predicts a lower entanglement transmission threshold than the classical percolation threshold, exhibiting a ``quantum advantage'' that is purely structural, independent of nontopological details. To achieve the ConPT threshold, it has been suggested that \emph{deterministic} communication protocols that only produce outcomes with certainty (up to unitary equivalence) should be used, so that the protocols can be applied recursively without becoming involved with probabilistic distributions~\cite{conpt_mgh21}.

\begin{figure*}[t!]
	\centering
	\includegraphics[width=5.514in]{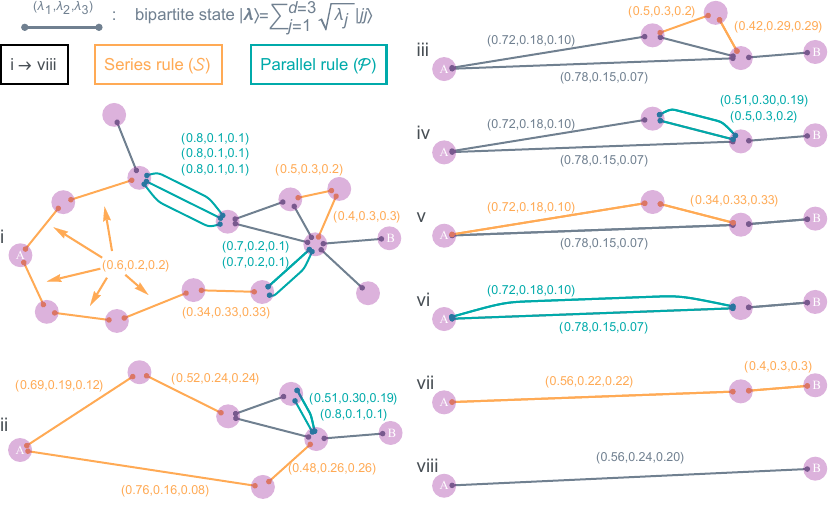}
	\caption{\label{fig_det-example}
		Demonstration of the deterministic entanglement transmission (DET) scheme for arbitrary series-parallel quantum network (QN) with $d$-dimensional information carriers [e.g.,~qutrits ($d=3$)]. Starting from an initial QN (i), a final pure state (viii) can be deterministically produced between Alice (A) and Bob (B) by {canonically applying the series and parallel rules (Table~\ref{table_qdet-rules}).}
		\hfill\hfill}
\end{figure*}

In light of this, here we formalize a specific {deterministic entanglement transmission} (DET) scheme  {(Sec.~\ref{sec_qdet})} that works for two arbitrary nodes in a QN, provided that the network topology between the two nodes is series-parallel, 
i.e.,~can be decomposed into only two connectivity rules---\emph{series and parallel rules}---in accordance with resistor network analysis. This DET scheme is built by explicitly expressing the series and parallel rules as two functions---a swapping function $\mathcal{S}$ and a concentration function $\mathcal{P}$---which can be implemented by two fundamental communication protocols: \emph{entanglement swapping and concentration}, respectively.
{This is possible since, while both protocols are generically probabilistic, they can also be implemented deterministically.} Our scheme has two features:
\begin{enumerate}
	\item The DET scheme is fully scalable and adaptable for arbitrary series-parallel QN, as it {reduces} the entanglement transmission task to a problem of calculating path connectivity between two arbitrary nodes, fully analogous to calculating the total resistance in a series-parallel resistor network (Fig.~\ref{fig_det-example}).
	\item  The DET scheme is defined for $d$-dimensional {qudits (qutrits~\cite{qutrit-scramble_brsobkdmyys21}, ququarts~\cite{ququart-superdense_hglhlg18}, etc.)} in general, making the DET an all-purpose scheme for more exotic design of quantum information devices.
\end{enumerate}
{Unlike the CEP scheme, the DET exhibits different levels of \emph{optimality} {(Sec.~\ref{sec_optim})}, %
given in terms of a family of $d$ concurrence monotones~\cite{k-concurrence_g05}. Using majorization theory, we calculate inequalities for the $\mathcal{S}$ and $\mathcal{P}$ functions and their recursive combinations, finding that the outcome of DET is optimal (Table~\ref{table_qdet-optimal}) 
over that of probabilistic schemes.
The powerful mathematical treatment also allows us to show that, if the deterministic outcome of DET is further converted to a maximally entangled qudit state, then the success probability of doing so
is always higher than the CEP result.
}

\begin{table}[h]
	\centering 
	\caption{\label{table_qdet-optimal}Optimality of DET in terms of $k$-concurrences $C_k$ (Appendix~\ref{sec_concurrence}) with different QN topologies (Fig.~\ref{fig_seri-para-qn}).~\hfill\hfill}
	\begin{tabular}{r|l}
		\hline\hline
		Network topology & DET optimizes...\\
		\hline
		Simple series & Average $C_d$\\
		Simple parallel & Average $C_k$ ($1<k\le d$)\\
		Parallel-then-series & Average $C_d$\\
		Series-then-parallel & Worst-case $C_d$ ($d=2$ only)\\
		Series-parallel & Worst-case $C_d$ ($d=2$ only)\\
		\hline\hline
	\end{tabular}
\end{table}

{ %
Our results also suggest that}
the well-known {nested quantum repeater protocol}~\cite{q-repeater_bdcz98} introduced for a 1D chain of parties with multiple bipartite partially entangled qubits shared in between---which is essentially a parallel-then-series QN---is \emph{not} a good strategy for optimizing concurrence. %
Instead of applying entanglement swapping and concentration in a nested way, we argue that the best strategy is to apply entanglement concentration {once and for all} between every two adjacent nodes and then perform entanglement swapping along the concentrated links.
{The proof of this result} relies on a special reverse arithmetic-mean--geometric-mean (AM--GM) inequality (which we will prove in Appendix~\ref{sec_reverse}). The lack of a necessity for nesting could greatly simplify communication protocol design and make DET preferable practically.

{
For demonstration, we show that both the series and parallel rules are experimentally feasible {(Sec.~\ref{sec_implement})}, typically performing with fidelities of $92.4\%$ and $78.2\%$, respectively, tested on IBM's quantum computation platform \emph{Qiskit}~\cite{qiskit_a21}. The performance could be further improved considering recent breakthroughs in raising the fidelity of two-qubit gates above $99\%$~\cite{q-cnot-fidel_cbdsbe19,q-cnot-fidel_kwsmckkdm21}.
}

\section{Quantum Network}

In this paper, we focus on a pure-state version of a QN where each link is a $d^2$-dimensional bipartite pure state shared by the two connected nodes~\cite{QEP_acl07}. Such a bipartite pure state can always be written as $\sum_{\mu,\nu}\Psi_{\mu\nu}\left|\mu\nu\right\rangle\in\mathbb{C}^{d\times d}$, which allows {matricization}, i.e.,~allows it to be mapped to a $d\times d$ matrix $\boldsymbol{\Psi}$ with elements $\Psi_{\mu\nu}$, $\mu,\nu=1,2,\cdots,d$.
Left or right multiplication of $\boldsymbol{\Psi}$ by a unitary matrix corresponds to a local unitary transformation performed by either of the two parties, respectively. Thus, up to unitary equivalence, the state can always be locally transformed into a diagonal form $\left|\boldsymbol{\lambda}\right\rangle=\sum_{j=1}^{d}\sqrt{\lambda_{j}}\left|jj\right\rangle$ by a singular value decomposition, so that each link is exclusively represented by $d$ positive Schmidt numbers, $\boldsymbol{\lambda}\equiv\left(\lambda_1,\lambda_2,\cdots,\lambda_d\right)$. Additionally, note that $\boldsymbol{\lambda}$ is always subject to the normalization constraint %
$\sum_{j=1}^{d}\lambda_{j}= 1$. %
Thus, when $d=2$ (qubits), each link can be represented by only one parameter~\cite{conpt_mgh21}.

This representation by a sequence of Schmidt numbers is effective, since singular values are equipped with a built-in majorization {preorder} (Appendix~\ref{sec_majorization}). This obviously only works for {bipartite} states. Tripartite states, for example, although they can be completely classified by five parameters including one phase and four moduli~\cite{tripartit-entangle_aacjlt00}, are not compatible with such a preorder, thus manifesting as a fundamentally difficult \emph{``(quantum) three-body problem.''}
Note that traditionally, each link in a network is also a ``bipartite'' pairwise notation, always connecting two nodes. Hence it seems a perfect match to study the statistics of a large number of bipartite states using a ``bipartite'' network theory, namely, a\emph{``statistical theory of (quantum) two-body problems.''} %
In contrast, a ``complex'' QN consisting of multipartite entangled states as ``hyperlinks'' would be beyond the scope of this paper, since the difficulties coming from both quantum many-body theory and hypergraph theory are actually twofold.

\subsection{Quantum Communication Protocols}
\label{sec_protocol}

{We briefly review two LOCC protocols for bipartite states that are often used in quantum communication:}

\subsubsection{Entanglement swapping}

{The motivation for the entanglement swapping protocol is as follows: Suppose there are three parties, Alice--Relay--Bob (A--R--B), who share two bipartite states, $\left|\boldsymbol{\lambda}_a\right\rangle=\sum_{j=1}^{d}\sqrt{\lambda_{a,j}}\left|jj\right\rangle_\text{AR}$ (shared between A--R) and $\left|\boldsymbol{\lambda}_b\right\rangle=\sum_{k=1}^{d}\sqrt{\lambda_{b,k}}\left|kk\right\rangle_\text{RB}$ (shared between R--B), respectively. %
Since A and B do not directly share any entanglement, a swapping protocol~\cite{QEP-series-rule_bvk99} must be performed, which requires a set of quantum measurements to be conducted on R. The result, which is probabilistic in general, is an ensemble of bipartite states directly shared between A and B, entangled.}

To be specific, if we matricize $\left|\boldsymbol{\lambda}_a\right\rangle$ and $\left|\boldsymbol{\lambda}_b\right\rangle$ and denote them by two diagonal matrices $\text{diag}(\boldsymbol{\lambda}_a)^{1/2}$ and $\text{diag}(\boldsymbol{\lambda}_b)^{1/2}$, then after R performs the measurements, the resultant (unnormalized) state between A and B can be written in the matrix form,
\begin{equation}
\label{eq_swap}
\boldsymbol{\Psi}^{\alpha}=\text{diag}(\boldsymbol{\lambda}_a)^{1/2} \mathbf{X}^{\alpha} \text{diag}(\boldsymbol{\lambda}_b)^{1/2}
\end{equation} 
for outcome $\alpha$, subject to probability $p_\alpha=\left\|\boldsymbol{\Psi}^{\alpha}\right\|_2^2$ of the outcome~\cite{q-netw-summ_pjcla13}. The quantum measurements are encoded by a set of arbitrary matrices $\{\mathbf{X}^{\alpha}\}$ constrained by the completeness relation, 
\begin{equation}
\label{eq_povm}
\sum_{\alpha}\left(X^{\alpha}_{\mu\nu}\right)^* X^{\alpha}_{\mu'\nu'}=\delta_{\mu\mu'}\delta_{\nu\nu'},%
\end{equation} 
which is sufficient for $\{\mathbf{X}^{\alpha}\}$ to denote a positive operator-valued measure (POVM).

\subsubsection{Entanglement concentration}
Given $\left|\boldsymbol{\lambda}\right\rangle=\sum_{j=1}^{m}\sqrt{\lambda_{j}}\left|jj\right\rangle$ for $m\in \mathbb{Z}^+$, what is the maximum probability of converting it to a new state $\left|\mathbf{x}\right\rangle=\sum_{j=1}^{m}\sqrt{x_{j}}\left|jj\right\rangle$ by LOCC? Such a probability %
is well known~\cite{QEP-parallel-rule_v99}:
\begin{equation}
\label{eq_concentrate}
p_{\boldsymbol{\lambda}\to\mathbf{x}}=\min_{0\le k< m}\frac{1-\sum_{j=1}^{k}\lambda^{\downarrow}_j}{1-\sum_{j=1}^{k}x^{\downarrow}_j}.
\end{equation}
In particular, note that it is possible to obtain a new state $\left|\mathbf{x}\right\rangle$ of a different dimension $d$ if $d \le m$, since one can pad $\mathbf{x}$ by zeros and build a new sequence of $m$ Schmidt numbers, $\mathbf{x}'=\mathbf{x}\oplus(\overbrace{0,0,\cdots,0}^{m-d})$, which when inserted into Eq.~\eqref{eq_concentrate} yields the maximum probability $p_{\boldsymbol{\lambda}\to\mathbf{x}}$~\cite{monotone_v00}.
{The motivation of the entanglement concentration protocol is that, by treating multiple partially entangled bipartite states together as one tensor-product state of a much larger dimension $m$, one can convert such a tensor-product state to a bipartite state of fewer dimensions $d$ but with potentially higher entanglement than each previously individual bipartite state.} 
{A concentration protocol can be either probabilistic or deterministic, for $p_{\boldsymbol{\lambda}\to\mathbf{x}}=1$ is clearly possible by Eq.~\eqref{eq_concentrate}. %
}

\subsection{Entanglement Transmission Schemes}
\label{sec_scheme}
A \emph{scheme} is a (possibly infinite) number of protocols that can be collectively applied to a QN. One of the well known entanglement transmission schemes is the classical entanglement percolation (CEP) scheme for $d=2$. The CEP scheme consists of two steps~\cite{QEP_acl07}: (i) Use the concentration protocol to convert each link to a singlet, the probability of which is given by Eq.~\eqref{eq_concentrate}. (ii) Find a path of links connecting two arbitrary nodes A and B that have all been successfully converted to singlets, and apply the swapping protocol to them one by one. The final state between A and B is guaranteed to be a singlet by Eq.~\eqref{eq_swap}. The probability of finding such a path can be considered as a measure of the entanglement transmission ability between A and B, determined by the network topology. This probability, however, is known to be nonoptimal, since for some special topology, e.g.,~a honeycomb with double links between every two connected nodes, adding a ``preprocessing'' step of entanglement swapping to the CEP scheme may change the network topology to a triangular type and potentially increase the probability of establishing a singlet between A and B~\cite{QEP_acl07}.

A deterministic entanglement transmission (DET) scheme, on the other hand, does not use generic probabilistic protocols to increase the singlet conversion probability, but uses {deterministic} protocols to produce a single highly entangled state between A and B (demonstrated in Fig.~\ref{fig_det-example}). Note that after performing each deterministic protocol, the new system is still an intact QN, i.e.,~a pure state as a whole, not a probabilistic ensemble. This makes DET {purely ``nonclassical'' (without mixing with probability measures) and} easily scalable with network size.

The specific DET scheme we introduce, which utilizes only deterministic swapping and concentration protocols, can be applied to arbitrary series-parallel QN.

\section{A Deterministic Entanglement Transmission Scheme}
\label{sec_qdet}

\subsection{Series-Parallel Network Topology}

\begin{figure}[t]
	\centering
	\begin{minipage}[b]{1.65in}	
		\centering
		{\includegraphics[width=1.65in]{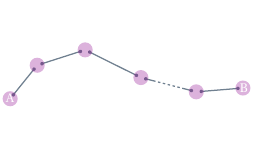}
			\vspace{-6mm}\subcaption{Simple series QN.\label{fig_seri}}}
	\end{minipage}
	\begin{minipage}[b]{1.65in}
		{\includegraphics[width=1.65in]{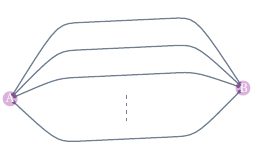}
			\vspace{-6mm}\subcaption{Simple parallel QN.\label{fig_para}}}
	\end{minipage}

	\begin{minipage}[b]{1.65in}	
		\centering
		{\includegraphics[width=1.65in]{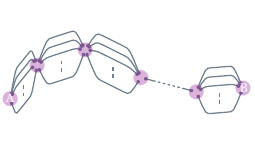}
			\vspace{-6mm}\subcaption{Parallel-then-series QN.\label{fig_para-then-seri}}}
	\end{minipage}
	\begin{minipage}[b]{1.65in}
		{\includegraphics[width=1.65in]{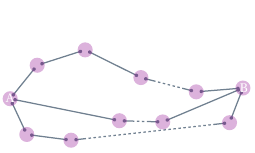}
			\vspace{-6mm}\subcaption{Series-then-parallel QN.\label{fig_seri-then-para}}}
	\end{minipage}

	\begin{minipage}[b]{1.65in}	
		\centering
		{\includegraphics[width=1.65in]{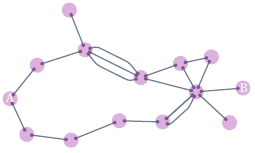}
			\vspace{-6mm}\subcaption{Series-parallel QN.\label{fig_seri-para}}}
	\end{minipage}
	\begin{minipage}[b]{1.65in}
		{\includegraphics[width=1.65in]{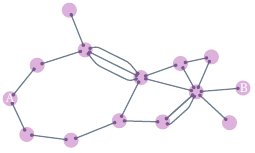}
			\vspace{-6mm}\subcaption{General QN.\label{fig_general}}}
	\end{minipage}	
	\caption{\label{fig_seri-para-qn}Different QN topologies between A and B. 
		\hfill\hfill}
\end{figure}

Whether a network is series-parallel or not depends on which two nodes of interest are chosen~\cite{series-parallel-netw_d65}. Given two nodes A and B, %
the network topology can be characterized into different categories (Fig.~\ref{fig_seri-para-qn}). All topologies between A and B given in Figs.~\ref{fig_seri}--\ref{fig_seri-para} are considered series-parallel, but the topology in Fig.~\ref{fig_general} is not, due to an existing ``bridge'' compared with Fig.~\ref{fig_seri-para}. %
It is worth noting that most known realistic complex networks can be approximated as series-parallel, since cycles can usually be ignored in infinite-dimensional systems by the Bethe approximation~\cite{netw-percolation_ceah00}. %

\subsection{Series and Parallel Rules}

\begin{table}[h]
	\centering 
	\caption{\label{table_qdet-rules}Series and parallel rules for DET.\hfill\hfill}
	\begin{tabular}{p{2.5cm}| p{5.7cm}}
		\hline\hline
		\textbf{S}eries rule\newline (Swapping) & %
		$\boldsymbol{\lambda}=\boldsymbol{\lambda}_1\texttildelow \mathcal{S} \texttildelow \boldsymbol{\lambda}_2 \texttildelow \mathcal{S} \texttildelow \cdots \texttildelow \mathcal{S} \texttildelow \boldsymbol{\lambda}_n$,\newline 
		where $\mathcal{S}:\left(\mathbb{R}_{+}^{d},\mathbb{R}_{+}^{d}\right)\to\mathbb{R}_{+}^d$%
		~[Eq.~\eqref{eq_s}]
		\\
		\hline
		\textbf{P}arallel rule\newline (Concentration) &  $\boldsymbol{\lambda}=\mathcal{P}(\boldsymbol{\lambda}_1\otimes \boldsymbol{\lambda}_2\otimes\cdots\otimes \boldsymbol{\lambda}_n)$,\newline 
		where {{$\mathcal{P}:\mathbb{R}_{+}^{d^n}\to\mathbb{R}_{+}^d$}}%
		~[Eq.~\eqref{eq_p}]\\
		\hline\hline
	\end{tabular}
\end{table}

The series and parallel rules of our DET scheme are given in Table~\ref{table_qdet-rules}, where $\otimes$ stands for the Kronecker product, and, for simplicity, the infix notation $\texttildelow$ is introduced for representing binary operations, i.e.,~$x\texttildelow f \texttildelow y \equiv f(x,y)$, which is left grouping, i.e.,~$x\texttildelow f \texttildelow y \texttildelow f \texttildelow z \equiv (x\texttildelow f \texttildelow y) \texttildelow f \texttildelow z$.
{The two functions, $\mathcal{S}(\mathbf{x},\mathbf{y})$ and $\mathcal{P}(\mathbf{x})$, as well as the motivations for introducing them  are given as follows.}

\subsubsection{Swapping function $\mathcal{S}(\mathbf{x},\mathbf{y})$}
\emph{Definition.---}We define %
$\mathcal{S}:\left(\mathbb{R}_{+}^{d},\mathbb{R}_{+}^{d}\right)\to\mathbb{R}_{+}^d$ by
\begin{eqnarray}
\label{eq_s}
\mathcal{S}(\mathbf{x},\mathbf{y})=d\times \sigma^2(\text{diag}(\mathbf{x}^{\downarrow})^{1/2}\mathbf{V}\text{diag}(\mathbf{y}^{\downarrow})^{1/2})
\end{eqnarray}
where %
$\sigma(\boldsymbol{\Psi})$ denotes the singular values 
of matrix $\boldsymbol{\Psi}$ and $\sigma^2(\boldsymbol{\Psi})$ denotes the entry-wise square of $\sigma(\boldsymbol{\Psi})$, both arranged in descending order so that $\mathcal{S}(\mathbf{x},\mathbf{y})\equiv\left[\mathcal{S}(\mathbf{x},\mathbf{y})\right]^{\downarrow}$.
The matrix $\mathbf{V}$ is constant and unitary, with elements $V_{\mu\nu}=d^{-1/2}\exp\left(-2\pi i\mu\nu/d\right)$, $\mu,\nu=1,2,\cdots,d$.

\emph{Properties.---}$\mathcal{S}(\mathbf{x},\mathbf{y})$ has the following properties.
\begin{itemize}
	\item {Permutation invariance [$\mathbf{x}\to\text{perm}(\mathbf{x})$]:}
	\begin{equation}
	\label{eq_s_perm}
	\mathcal{S}(\mathbf{x},\mathbf{y})=\mathcal{S}(\text{perm}(\mathbf{x}),\mathbf{y}).
	\end{equation}
	\item {Trace preserving:}
	\begin{equation}
	\label{eq_s_trace}
	\tr(\mathcal{S}(\mathbf{x},\mathbf{y}))=\tr(\mathbf{x}) \tr(\mathbf{y}).
	\end{equation}
	\item {Isotone:}
	\begin{equation}
	\label{eq_s_isotone}
	\mathcal{S}(\mathbf{x},\mathbf{z})\succ \mathcal{S}(\mathbf{y},\mathbf{z}),\forall \mathbf{z} \text{ if } \mathbf{x}\succ \mathbf{y}.
	\end{equation}
	\item {Commutativity:}
	\begin{equation}
	\label{eq_s_commut}
	\mathcal{S}(\mathbf{x},\mathbf{y})=\mathcal{S}(\mathbf{y},\mathbf{x}).
	\end{equation}	
	\item  {Associativity:}
	\begin{equation}
	\label{eq_s_assoc}
	\mathcal{S}(\mathcal{S}(\mathbf{x},\mathbf{y}),\mathbf{z})= \mathcal{S}(\mathbf{x},\mathcal{S}(\mathbf{y},\mathbf{z})) \text{ for } d \le 3.
	\end{equation}
\end{itemize}

Equations~\eqref{eq_s_perm}~and~\eqref{eq_s_commut} hold because $\sigma(\boldsymbol{\Psi})$ is invariant under Hermitian conjugate and unitary transformation of $\boldsymbol{\Psi}$.
Equation~\eqref{eq_s_isotone} holds as a result of Theorem~\ref{theorem_convex-perm-isotone}, given the facts that $\mathcal{S}$ is permutation invariant [Eq.~\eqref{eq_s_perm}] and convex (Lemma~\ref{theorem_s_convex}). 
Equation~\eqref{eq_s_trace} can be inferred from its validity for the special case $\tr(\mathbf{x})= \tr(\mathbf{y})=1$. %

In particular, Eq.~\eqref{eq_s_assoc} is valid for $d\le 3$ because both its left-hand side (LHS) and right-hand side (RHS) have two constraints: trace preserving [Eq.~\eqref{eq_s_trace}] and determinant preserving (Lemma~\ref{theorem_s_det}). Additionally, employing the duality (Lemma~\ref{theorem_s_adj}) produces two more constraints in terms of ``dual'' trace preserving and ``dual'' determinant preserving. The latter, however, can be shown to be equivalent to the original determinant preserving constraint. Therefore, there are three independent constraints in total, and thus the LHS and RHS must be equal when $d\le 3$. When $d> 3$, counterexamples can be easily found.

\emph{Implementability by LOCC.---}It remains to show that $\mathcal{S}(\boldsymbol{\lambda}_a,\boldsymbol{\lambda}_b)$ can be implemented by the swapping protocol [Eq.~\eqref{eq_swap}] in a deterministic manner. 
In short,  given a special set of matrices $\{\mathbf{X}^{\alpha}\}$ each with elements $X^{\alpha}_{\mu\nu}=d^{-1}e^{-\alpha \left(d \mu +\nu\right)2\pi i/d^2-2\pi i \mu \nu/d}$ for $\alpha=1,2,\cdots,d^2$, one can verify that $\{\mathbf{X}^{\alpha}\}$ satisfies Eq.~\eqref{eq_povm}, thus denoting a POVM. Hence, performing the swapping protocol and following Eq.~\eqref{eq_swap}, the resultant (unnormalized) state $\boldsymbol{\Psi}^{\alpha}$ is obtained, for each outcome $\alpha$, with elements $\Psi^{\alpha}_{jk}=d^{-1}\sqrt{\lambda^{\downarrow}_{a,j}\lambda^{\downarrow}_{b,k}}e^{-\left(\alpha d j +\alpha k + d j k\right)2\pi i/d^2}$.
Now, as long as Alice and Bob are shared by the Relay the classical information of which $\alpha$ is obtained, they can always rotate their shared state $\boldsymbol{\Psi}^{\alpha}$ by some phase accordingly and transform it locally into a new $\alpha$-independent state $\boldsymbol{\Psi}'$ with elements $\Psi'_{jk}=d^{-1}\sqrt{\lambda^{\downarrow}_{a,j}\lambda^{\downarrow}_{b,k}}e^{-\left(2\pi i\right)jk/d} $, the Schmidt numbers of which are then given by $\mathcal{S}(\boldsymbol{\lambda}_a,\boldsymbol{\lambda}_b)$~[Eq.~\eqref{eq_s}] after normalization of $\boldsymbol{\Psi}'$. The independence of $\boldsymbol{\Psi}'$ on $\alpha$ confirms that the protocol is deterministic.
{An example of the full construction of $\mathcal{S}(\boldsymbol{\lambda}_a,\boldsymbol{\lambda}_b)$ is given in Sec.~\ref{sec_implement}.}

{The construction above was first given in Ref.~\cite{k-concurrence_g05}, behind which the motivation is that under such a special choice of $\{\mathbf{X}^{\alpha}\}$, the swapping protocol actually optimizes a determinant-based entanglement measure ($G$-concurrence) of the final state. This is made manifest by Theorem~\ref{theorem_simple-series} in Sec.~\ref{sec_optim}.}

\subsubsection{Concentration function $\mathcal{P}(\mathbf{x})$}

\emph{Definition.---}We define %
$\mathcal{P}:\mathbb{R}_{+}^m\to\mathbb{R}_{+}^d$ ($m\ge d$) by the following pseudocode:

\begin{algorithmic}
	\Function{$\mathcal{P}$}{$\mathbf{x}$}: \Comment{$\mathbf{x}^{\downarrow}=\left(x^{\downarrow}_1,x^{\downarrow}_2,\cdots,x^{\downarrow}_m\right)$}
	\State $s\gets\sum_{j=1}^{m}x_j$
	\For{$l\gets 1, 2, \cdots, d$}
	\State $\chi_l\gets \max\left\{x^{\downarrow}_l,s/\left(d+1-l\right)\right\}$
	\State $s\gets s-\chi_l$
	\EndFor
	\State \Return $\boldsymbol{\chi}$ \Comment{$\boldsymbol{\chi}=\left(\chi_1,\cdots,\chi_d\right)$}
	\EndFunction
\end{algorithmic}
\vspace{-8mm}
\begin{equation}
\label{eq_p}
\end{equation}

The above definition implies that $\mathcal{P}(\mathbf{x})$ is always given in descending order, $\mathcal{P}(\mathbf{x})\equiv\left[\mathcal{P}(\mathbf{x})\right]^{\downarrow}$, and
\begin{equation}
\label{eq_p1}
\left[\mathcal{P}(\mathbf{x})\right]_j\ge x^{\downarrow}_j, \quad j=1,2,\cdots,d.
\end{equation}
Moreover, if $\left[\mathcal{P}(\mathbf{x})\right]_l > x^{\downarrow}_l$ for some $l$, then
\begin{equation}
\label{eq_p2}
\left[\mathcal{P}(\mathbf{x})\right]_j=\left[\mathcal{P}(\mathbf{x})\right]_l \text{ and } \left[\mathcal{P}(\mathbf{x})\right]_j > x^{\downarrow}_j 
\end{equation}
for all $l \le j \le d$.

\emph{Properties.---}$\mathcal{P}(\mathbf{x})$ has the following properties.
\begin{itemize}
	\item {Permutation invariance [$\mathbf{x}\to\text{perm}(\mathbf{x})$]:}
	\begin{equation}
	\label{eq_p_perm}
	\mathcal{P}(\mathbf{x})=\mathcal{P}(\text{perm}(\mathbf{x})).
	\end{equation}
	\item {Trace preserving:}
	\begin{equation}
	\label{eq_p_trace}
	\tr(\mathcal{P}(\mathbf{x}))=\tr(\mathbf{x}).
	\end{equation}
	\item {Isotone:}
	\begin{equation}
	\label{eq_p_isotone}
	\mathcal{P}(\mathbf{x})\succ \mathcal{P}(\mathbf{y}) \text{ if } \mathbf{x}\succ \mathbf{y}.
	\end{equation}
	\item {Commutativity:}
	\begin{equation}
	\label{eq_p_commut}
	\mathcal{P}(\mathbf{x}\otimes\mathbf{y})=\mathcal{P}(\mathbf{y}\otimes\mathbf{x}).
	\end{equation}	
	\item  {Associativity:}
	\begin{equation}
	\label{eq_p_assoc}
	\mathcal{P}(\mathcal{P}(\mathbf{x}\otimes\mathbf{y})\otimes\mathbf{z})= \mathcal{P}(\mathbf{x}\otimes \mathcal{P}(\mathbf{y}\otimes\mathbf{z})).
	\end{equation}
\end{itemize}
Equations~\eqref{eq_p_perm},~\eqref{eq_p_trace}~and~\eqref{eq_p_commut} all hold by definition [Eq.~\eqref{eq_p}]. Again, Eq.~\eqref{eq_p_isotone} holds as a result of Theorem~\ref{theorem_convex-perm-isotone}, given the facts that $\mathcal{P}$ is permutation invariant [Eq.~\eqref{eq_p_perm}] and convex (Lemma~\ref{theorem_d_convex}). 

In particular,  Eq.~\eqref{eq_p_assoc} is valid because, on the one hand, by Eq.~\eqref{eq_p_isotone}, we have $\mathcal{P}(\mathcal{P}(\mathbf{x}\otimes\mathbf{y})\otimes\mathbf{z}) \succ \mathcal{P}(\mathbf{x}\otimes\mathbf{y}\otimes\mathbf{z})$; on the other hand, since $\mathcal{P}(\mathbf{x}\otimes\mathbf{y}\otimes\mathbf{z})\oplus(\overbrace{0,0,\cdots,0}^{d^2-d})\succ \mathcal{P}(\mathbf{x}\otimes\mathbf{y})\otimes\mathbf{z}$, by Lemma~\ref{theorem_d_optimal} we have $\mathcal{P}(\mathbf{x}\otimes\mathbf{y}\otimes\mathbf{z}) \succ \mathcal{P}(\mathcal{P}(\mathbf{x}\otimes\mathbf{y})\otimes\mathbf{z})$. Thus we must have $\mathcal{P}(\mathbf{x}\otimes\mathbf{y}\otimes\mathbf{z}) = \mathcal{P}(\mathcal{P}(\mathbf{x}\otimes\mathbf{y})\otimes\mathbf{z})$ since $\mathcal{P}$ is permutation invariant. The same is also true for  $\mathcal{P}(\mathbf{x}\otimes \mathcal{P}(\mathbf{y}\otimes\mathbf{z}))$. Hence together we have $\mathcal{P}(\mathcal{P}(\mathbf{x}\otimes\mathbf{y})\otimes\mathbf{z})= \mathcal{P}(\mathbf{x}\otimes \mathcal{P}(\mathbf{y}\otimes\mathbf{z}))= \mathcal{P}(\mathbf{x}\otimes \mathbf{y}\otimes\mathbf{z})$.

\emph{Implementability by LOCC.---}
The definition [Eq.~\eqref{eq_p1}] immediately implies that
\begin{eqnarray}
\label{eq_p_locc}
\mathcal{P}(\boldsymbol{\lambda})\oplus(\overbrace{0,0,\cdots,0}^{m-d})\succ\boldsymbol{\lambda}.
\end{eqnarray} 
Thus, any new bipartite state of Schmidt numbers $\mathcal{P}(\boldsymbol{\lambda})$ 
can be produced deterministically (i.e.,~$p_{\boldsymbol{\lambda}\to \mathcal{P}(\boldsymbol{\lambda})}=1$) from a bipartite state of Schmidt numbers $\boldsymbol{\lambda}$ by the concentration protocol [Eq.~\eqref{eq_concentrate}]. {However, the full LOCC construction is complicated, and only a sketch was originally given in Ref.~\cite{nielsen_n99}. 
An example of the full construction of $\mathcal{P}(\boldsymbol{\lambda})$ is given in Sec.~\ref{sec_implement}.}

{The choice of $\mathbf{x}$ such that $p_{\boldsymbol{\lambda}\to \mathbf{x}}=1$ is not unique. The motivation behind our special choice of $\mathcal{P}(\boldsymbol{\lambda})$ [Eq.~\eqref{eq_p}] is that it denotes the unique state that contains the maximum entanglement to be concentrated deterministically from $\boldsymbol{\lambda}$ after reducing the dimension. This is made manifest by Theorem~\ref{theorem_simple-parallel} in Sec.~\ref{sec_optim}.}

{\section{Optimality} \label{sec_optim}}
We have shown that the series and parallel rules in Table~\ref{table_qdet-rules} can be implemented by the entanglement swapping and concentration protocols. 
Naturally, we are interested in how good the DET scheme built by these rules is. Here, we proceed to study the {optimality} of these rules. As we will see, the degree of optimality of our series and parallel rules actually varies for different series-parallel network topologies. 

When discussing the optimality of the series and parallel rules, we will compare our specific rules with generic probabilistic protocols. We will show that, to some degree, the deterministic outcomes by the series and parallel rules are still the best, even when we \emph{relax the requirement of determinacy} and consider the average of general probabilistic ensembles of outcomes. 
Since an entanglement-efficient protocol that yields generic probabilistic outcomes would necessarily require %
more storage and more sophisticated algorithms to keep track of every possible outcome, the complexity of the full scheme would inevitably scale with the QN size, suggesting that a DET scheme is more practical for entanglement transmission over a large-scale QN.

\emph{Concurrence monotones.---}It is necessary to introduce a proper {entanglement measure} on the Schmidt numbers $\boldsymbol{\lambda}$. Although $\boldsymbol{\lambda}$ only admits a preorder, the entanglement measure as a real function admits a {strict total order}, which can be used to quantify the amount of entanglement. We will adopt a special family of entanglement monotones, called the {concurrence monotones}~\cite{k-concurrence_g05}, $C_k$, $k=1,2,\cdots,d$ (see Appendix~\ref{sec_concurrence} for more details). 
We will show that the series and parallel rules optimize the task of entanglement transmission for series-parallel QNs when measured by the average or worst-case $k$-concurrence, or $G$-concurrence (another name for $C_d$), depending on the network topology.

{\subsection{Some lemmas}}
We first prove some lemmas for the swapping and concentration functions, $\mathcal{S}(\mathbf{x},\mathbf{y})$ %
and $\mathcal{P}(\mathbf{x})$. %
\begin{lemma}
	\label{theorem_s_convex}
	(Convexity)	
	$\mathcal{S}(\mathbf{x}_\alpha,\mathbf{z})$ is convex, i.e.,
	\begin{equation*}
	\label{theorem_s_convex_1}
	\sum_\alpha p_\alpha \mathcal{S}(\mathbf{x}_\alpha,\mathbf{z}) \succ \mathcal{S}(\sum_\alpha p_\alpha \mathbf{x}_\alpha,\mathbf{z}), \quad p_\alpha\in\mathbb{R}_{+}, \quad \mathbf{z}\in\mathbb{R}_{+}^{d}.
	\end{equation*}
\end{lemma}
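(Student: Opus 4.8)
The plan is to trade the manifestly non-linear dependence of $\mathcal{S}(\mathbf{x},\mathbf{z})$ on $\mathbf{x}$ for a linear one and then read off convexity from Ky Fan's majorization inequality for sums of Hermitian matrices. Set $\boldsymbol{\Psi}(\mathbf{x})=\text{diag}(\mathbf{x}^{\downarrow})^{1/2}\mathbf{V}\,\text{diag}(\mathbf{z}^{\downarrow})^{1/2}$, so that by Eq.~\eqref{eq_s}, $\mathcal{S}(\mathbf{x},\mathbf{z})=d\,\sigma^{2}(\boldsymbol{\Psi}(\mathbf{x}))=d\,\lambda^{\downarrow}(\boldsymbol{\Psi}(\mathbf{x})^{\dagger}\boldsymbol{\Psi}(\mathbf{x}))$, where $\lambda^{\downarrow}(\cdot)$ denotes the eigenvalue vector arranged in nonincreasing order. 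The key observation is that
\[
\boldsymbol{\Psi}(\mathbf{x})^{\dagger}\boldsymbol{\Psi}(\mathbf{x})=\text{diag}(\mathbf{z}^{\downarrow})^{1/2}\,\mathbf{V}^{\dagger}\,\text{diag}(\mathbf{x}^{\downarrow})\,\mathbf{V}\,\text{diag}(\mathbf{z}^{\downarrow})^{1/2}=:A(\mathbf{x}^{\downarrow})
\]
is a positive semidefinite matrix that depends \emph{linearly} on $\text{diag}(\mathbf{x}^{\downarrow})$, hence on $\mathbf{x}^{\downarrow}$; only the fact that $\mathbf{V}$ is a fixed unitary matters here, not its explicit Fourier form. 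Thus $\mathcal{S}(\mathbf{x},\mathbf{z})=d\,\lambda^{\downarrow}(A(\mathbf{x}^{\downarrow}))$.

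First I would dispose of the sorting bookkeeping. Since every Schmidt vector in this paper is listed in nonincreasing order, and a nonnegative combination of nonincreasing vectors is again nonincreasing, one may take $\mathbf{x}_{\alpha}$ and $\sum_{\alpha}p_{\alpha}\mathbf{x}_{\alpha}$ all sorted, so the $\downarrow$'s inside $A$ are inert and $A$ is simply applied to $\sum_{\alpha}p_{\alpha}\text{diag}(\mathbf{x}_{\alpha})=\text{diag}(\sum_{\alpha}p_{\alpha}\mathbf{x}_{\alpha})$. (For genuinely unsorted arguments one would have to combine $(\sum_{\alpha}p_{\alpha}\mathbf{x}_{\alpha})^{\downarrow}\prec\sum_{\alpha}p_{\alpha}\mathbf{x}_{\alpha}^{\downarrow}$ with Eq.~\eqref{eq_s_isotone}, which is itself a consequence of this Lemma, so keeping the inputs sorted is the clean route.) Linearity of $A$ then gives $A(\sum_{\alpha}p_{\alpha}\mathbf{x}_{\alpha})=\sum_{\alpha}p_{\alpha}A(\mathbf{x}_{\alpha})$, a nonnegative combination of positive semidefinite matrices.

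Next I would invoke Ky Fan's maximum principle, $\sum_{i=1}^{k}\lambda_{i}^{\downarrow}(H)=\max\{\tr(\Pi H):\Pi=\Pi^{\dagger}=\Pi^{2},\ \tr\Pi=k\}$ for Hermitian $H$. Applied to a sum it yields $\sum_{i\le k}\lambda_{i}^{\downarrow}(H_{1}+H_{2})\le\sum_{i\le k}\lambda_{i}^{\downarrow}(H_{1})+\sum_{i\le k}\lambda_{i}^{\downarrow}(H_{2})$, with equality at $k=d$ because $\tr$ is additive; hence $\lambda^{\downarrow}(H_{1}+H_{2})\prec\lambda^{\downarrow}(H_{1})+\lambda^{\downarrow}(H_{2})$, and, together with $\lambda^{\downarrow}(pH)=p\,\lambda^{\downarrow}(H)$ for $p\ge 0$ and a trivial induction, $\lambda^{\downarrow}(\sum_{\alpha}p_{\alpha}H_{\alpha})\prec\sum_{\alpha}p_{\alpha}\lambda^{\downarrow}(H_{\alpha})$ for any Hermitian $H_{\alpha}$ and $p_{\alpha}\ge 0$. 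Taking $H_{\alpha}=A(\mathbf{x}_{\alpha})$ and multiplying by $d$ gives exactly $\mathcal{S}(\sum_{\alpha}p_{\alpha}\mathbf{x}_{\alpha},\mathbf{z})\prec\sum_{\alpha}p_{\alpha}\mathcal{S}(\mathbf{x}_{\alpha},\mathbf{z})$, the asserted convexity; Eq.~\eqref{eq_s_trace} confirms that the two sides have equal trace, so this is a majorization between vectors of equal sum.

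I do not expect a genuinely hard step: the single real idea is the linearization $\boldsymbol{\Psi}^{\dagger}\boldsymbol{\Psi}=A(\mathbf{x}^{\downarrow})$, which kills the square roots so that Ky Fan does the rest; the remaining ingredients (the squared-singular-value/eigenvalue identity, the ordering remark, additivity of the trace) are routine. This Lemma, paired with the permutation invariance Eq.~\eqref{eq_s_perm}, is precisely what feeds into Theorem~\ref{theorem_convex-perm-isotone} to produce the isotonicity Eq.~\eqref{eq_s_isotone}; the companion convexity of $\mathcal{P}$ (Lemma~\ref{theorem_d_convex}) will instead call for a direct combinatorial analysis of the greedy construction in Eq.~\eqref{eq_p}.
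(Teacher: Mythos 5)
Your proof is correct and follows essentially the same route as the paper's: the paper likewise rewrites $\mathcal{S}(\mathbf{x},\mathbf{z})$ as $d$ times the eigenvalues of a matrix that is linear in $\mathrm{diag}(\mathbf{x}^{\downarrow})$ and then applies the eigenvalue-sum majorization for Hermitian matrices, citing Lidskii's theorem (Theorem III.4.1 of the matrix-analysis reference) where you derive the same inequality from Ky Fan's maximum principle. Your explicit Hermitian linearization $\boldsymbol{\Psi}^{\dagger}\boldsymbol{\Psi}=A(\mathbf{x}^{\downarrow})$ and the remark about keeping the arguments sorted are just slightly more careful renderings of steps the paper leaves implicit.
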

\begin{proof}
	Equation~\eqref{eq_s} can be rewritten as
	\begin{equation}
	\label{theorem_s_convex_2}
	\mathcal{S}(\mathbf{x},\mathbf{y})=d\times\mathrm{eig}(\text{diag}(\mathbf{x}^{\downarrow})\mathbf{V}^\dagger\text{diag}(\mathbf{y}^{\downarrow})\mathbf{V}),
	\end{equation}
	where ``$\mathrm{eig}$'' denotes the eigenvalues, arranged in descending order. Thus
	\begin{eqnarray*}
		\label{theorem_s_convex_3}
		\mathcal{S}(\sum_\alpha p_\alpha \mathbf{x}_\alpha,\mathbf{z})%
		&\prec& d \sum_\alpha p_\alpha\mathrm{eig}(  \text{diag}(\mathbf{x}^{\downarrow}_\alpha)\mathbf{V}^\dagger\text{diag}(\mathbf{z}^{\downarrow})\mathbf{V})\nonumber\\
		&=&\sum_\alpha p_\alpha \mathcal{S}(\mathbf{x}_\alpha,\mathbf{z})
	\end{eqnarray*}
	by Lidskii's theorem on eigenvalues (Theorem III.4.1 of Ref.~\cite{matrix-anal}). %
\end{proof}

\begin{lemma}
	\label{theorem_s_det}
	(Determinant preserving)
	\begin{equation*}
	\label{theorem_s_det_1}
	\det(\mathcal{S}(\mathbf{x},\mathbf{y}))=d^d \det(\mathbf{x})\det(\mathbf{y}),
	\end{equation*}
	where $\det(\mathbf{x})\equiv \det(\mathrm{diag}(\mathbf{x}))=x_1x_2\cdots x_d$.
\end{lemma}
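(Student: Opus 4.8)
The plan is to reduce the claim to the multiplicativity of the determinant together with the fact that for a square matrix the product of its singular values equals the modulus of its determinant. Write $\boldsymbol{\Psi}=\text{diag}(\mathbf{x}^{\downarrow})^{1/2}\mathbf{V}\,\text{diag}(\mathbf{y}^{\downarrow})^{1/2}$, so that by Eq.~\eqref{eq_s} the $j$-th component of $\mathcal{S}(\mathbf{x},\mathbf{y})$ is $d\,\sigma_j^2(\boldsymbol{\Psi})$. Since $\det(\mathcal{S}(\mathbf{x},\mathbf{y}))$ is by definition the product of these $d$ components, and this product does not depend on the descending arrangement, I would compute
\begin{equation*}
\det(\mathcal{S}(\mathbf{x},\mathbf{y}))=\prod_{j=1}^{d}\bigl(d\,\sigma_j^2(\boldsymbol{\Psi})\bigr)=d^{d}\Bigl(\prod_{j=1}^{d}\sigma_j(\boldsymbol{\Psi})\Bigr)^{2}=d^{d}\,\lvert\det\boldsymbol{\Psi}\rvert^{2}.
\end{equation*}

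The remaining step is to evaluate $\lvert\det\boldsymbol{\Psi}\rvert^2$ using multiplicativity of the determinant. One has $\det\boldsymbol{\Psi}=\det\bigl(\text{diag}(\mathbf{x}^{\downarrow})^{1/2}\bigr)\,\det\mathbf{V}\,\det\bigl(\text{diag}(\mathbf{y}^{\downarrow})^{1/2}\bigr)$, where $\det\bigl(\text{diag}(\mathbf{x}^{\downarrow})^{1/2}\bigr)=\prod_j\sqrt{x^{\downarrow}_j}=\sqrt{\det(\mathbf{x})}$ and likewise for $\mathbf{y}$. Since $\mathbf{V}$ is unitary (it is the normalized discrete-Fourier matrix with entries $V_{\mu\nu}=d^{-1/2}e^{-2\pi i\mu\nu/d}$), we have $\lvert\det\mathbf{V}\rvert=1$, hence $\lvert\det\boldsymbol{\Psi}\rvert^{2}=\det(\mathbf{x})\det(\mathbf{y})$. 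Substituting back gives $\det(\mathcal{S}(\mathbf{x},\mathbf{y}))=d^{d}\det(\mathbf{x})\det(\mathbf{y})$, as claimed.

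There is essentially no serious obstacle here: the only points requiring care are that the determinant of $\text{diag}(\mathcal{S})$ is insensitive to the ordering of the entries (so that the $\downarrow$ arrangement in Eq.~\eqref{eq_s} is harmless), and that the constant matrix $\mathbf{V}$ drops out precisely because it is unitary. If one wishes, the $\text{diag}(\cdot)^{1/2}$ factors may instead be absorbed using the rewriting in Eq.~\eqref{theorem_s_convex_2}, $\mathcal{S}(\mathbf{x},\mathbf{y})=d\,\mathrm{eig}\bigl(\text{diag}(\mathbf{x}^{\downarrow})\mathbf{V}^{\dagger}\text{diag}(\mathbf{y}^{\downarrow})\mathbf{V}\bigr)$, whose determinant is $d^d\det(\mathbf{x})\,\lvert\det\mathbf{V}\rvert^{2}\det(\mathbf{y})=d^d\det(\mathbf{x})\det(\mathbf{y})$ by the same reasoning.
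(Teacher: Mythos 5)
Your argument is correct and essentially the same as the paper's: both reduce the claim to multiplicativity of the determinant together with $\lvert\det\mathbf{V}\rvert=1$, with the paper working directly from the eigenvalue form in Eq.~\eqref{theorem_s_convex_2}, which you also note as an equivalent route. No gaps.
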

\begin{proof}
	By Eq.~\eqref{theorem_s_convex_2}, we directly have  
	$\det(\mathcal{S}(\mathbf{x},\mathbf{y}))=d^d
	\det(\mathbf{x})\left|\det(\mathbf{V})\right|^2\det(\mathbf{y})=d^d \det(\mathbf{x})\det(\mathbf{y})$.
\end{proof}

\begin{lemma}
	\label{theorem_s_adj}
	(Duality)
	Let $\mathbf{z}=\mathcal{S}(\mathbf{x},\mathbf{y})$. Then,
	\begin{equation*}
	\label{theorem_s_adj_1}
	\mathrm{adj}(\mathbf{z})^{\downarrow}=d^{d-2}\mathcal{S}(\mathrm{adj}(\mathbf{x}),\mathrm{adj}(\mathbf{y})),
	\end{equation*}
	where $\mathrm{adj}(\mathbf{x})\equiv %
	\left(\det(\mathbf{x})/x_1,\det(\mathbf{x})/x_2,\cdots,\det(\mathbf{x})/x_d\right)$.
\end{lemma}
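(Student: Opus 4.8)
The plan is to work from the eigenvalue form \eqref{theorem_s_convex_2}, writing $\mathcal{S}(\mathbf{x},\mathbf{y})=d\,\mathrm{eig}(M)$ with $M=\mathrm{diag}(\mathbf{x}^{\downarrow})\mathbf{V}^\dagger\mathrm{diag}(\mathbf{y}^{\downarrow})\mathbf{V}$, and to extract $\mathrm{adj}(\mathbf{z})$ from the spectrum of $M^{-1}$. I would first assume all components of $\mathbf{x},\mathbf{y}$ are strictly positive so that $M$ is invertible, and recover the general (degenerate) case afterwards by continuity, since both sides of the claimed identity are continuous in $\mathbf{x},\mathbf{y}$. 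Inverting $M$ and using $\mathrm{eig}(AB)=\mathrm{eig}(BA)$ to cycle $\mathrm{diag}(\mathbf{x}^{\downarrow})^{-1}$ to the front gives $\mathrm{eig}(M^{-1})=\mathrm{eig}\bigl(\mathrm{diag}(\mathbf{x}^{\downarrow})^{-1}\mathbf{V}^\dagger\mathrm{diag}(\mathbf{y}^{\downarrow})^{-1}\mathbf{V}\bigr)$. On the spectral side, since the eigenvalues of $M$ are $\mathbf{z}^{\downarrow}/d$, those of $M^{-1}$ sorted in descending order are $d/z^{\downarrow}_{d+1-j}=\tfrac{d}{\det(\mathbf{z})}[\mathrm{adj}(\mathbf{z})^{\downarrow}]_j$; this order reversal induced by inversion is exactly what makes $\mathrm{adj}(\mathbf{z})^{\downarrow}$ rather than $\mathrm{adj}(\mathbf{z})$ appear in the statement.

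It then remains to massage $\mathrm{diag}(\mathbf{x}^{\downarrow})^{-1}\mathbf{V}^\dagger\mathrm{diag}(\mathbf{y}^{\downarrow})^{-1}\mathbf{V}$ into the form $\mathrm{diag}(\mathrm{adj}(\mathbf{x})^{\downarrow})\mathbf{V}^\dagger\mathrm{diag}(\mathrm{adj}(\mathbf{y})^{\downarrow})\mathbf{V}$ up to an eigenvalue-preserving conjugation and an overall scalar. Writing $\mathrm{diag}(\mathbf{x}^{\downarrow})^{-1}=\det(\mathbf{x})^{-1}\,J\,\mathrm{diag}(\mathrm{adj}(\mathbf{x})^{\downarrow})\,J$, where $J$ is the anti-diagonal flip permutation (inverting a descending sequence and re-sorting is a reversal), and likewise for $\mathbf{y}$, one is left with the factors $J\mathbf{V}^\dagger J$ and $J\mathbf{V}J$. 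The key identity is that the Fourier matrix $\mathbf{V}$, $V_{\mu\nu}=d^{-1/2}e^{-2\pi i\mu\nu/d}$, obeys $J\mathbf{V}J=e^{-2\pi i/d}\,D\mathbf{V}D$ with $D=\mathrm{diag}(e^{2\pi i\mu/d})_{\mu=1}^{d}$, which follows from $(d+1-\mu)(d+1-\nu)\equiv 1-\mu-\nu+\mu\nu \pmod d$. Substituting, the scalar phases cancel, the diagonal $D$'s commute through the remaining diagonal matrices, and the outer $D$'s drop out under conjugation, leaving exactly $\tfrac1d\det(\mathbf{x})^{-1}\det(\mathbf{y})^{-1}\,\mathcal{S}(\mathrm{adj}(\mathbf{x}),\mathrm{adj}(\mathbf{y}))$ by \eqref{theorem_s_convex_2} applied to the arguments $\mathrm{adj}(\mathbf{x}),\mathrm{adj}(\mathbf{y})$.

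Equating the two expressions for $\mathrm{eig}(M^{-1})^{\downarrow}$ yields $\mathrm{adj}(\mathbf{z})^{\downarrow}=\tfrac{\det(\mathbf{z})}{d^{2}\det(\mathbf{x})\det(\mathbf{y})}\,\mathcal{S}(\mathrm{adj}(\mathbf{x}),\mathrm{adj}(\mathbf{y}))$, and invoking Lemma~\ref{theorem_s_det}, $\det(\mathbf{z})=d^{d}\det(\mathbf{x})\det(\mathbf{y})$, collapses the prefactor to $d^{d-2}$, which is the claim. I expect the only genuine obstacle to be the bookkeeping in the middle paragraph: keeping track of the reversals forced by inversion and re-sorting so that $\mathrm{adj}$ enters $\mathcal{S}$ in the $\downarrow$-orientation its definition \eqref{eq_s} demands, and verifying the conjugation identity $J\mathbf{V}J=e^{-2\pi i/d}D\mathbf{V}D$ cleanly. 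Everything else is routine manipulation of eigenvalues under similarity and scalar rescaling.
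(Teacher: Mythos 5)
Your proof is correct, and the route is genuinely different from the paper's. The paper never inverts $M$ explicitly: it rewrites the first line of \eqref{theorem_s_convex_2} entirely in terms of matrix adjugates, replacing $\mathbf{V}^\dagger$ and $\mathbf{V}$ by $\mathrm{adj}(\mathbf{V})$ and $\mathrm{adj}(\mathbf{V}^\dagger)$ (the phases $\det(\mathbf{V})\overline{\det(\mathbf{V})}=1$ cancel), collapsing the four factors with the multiplicativity $\mathrm{adj}(A)\,\mathrm{adj}(B)=\mathrm{adj}(BA)$, and then using that the spectrum of $\mathrm{adj}(M)$ is the entrywise adjugate of the spectrum of $M$. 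That packages in one stroke the inversion, the determinant prefactors, and the order reversal that you handle by hand, so the paper needs neither the flip permutation $J$ nor any special property of the Fourier matrix. Your version pays for that with the bookkeeping you anticipated --- the conjugation identity $J\mathbf{V}J=e^{-2\pi i/d}D\mathbf{V}D$, which does check out via $(d+1-\mu)(d+1-\nu)\equiv 1-\mu-\nu+\mu\nu \pmod d$, and the cyclic shuffling needed so that the $D$'s commute through the diagonal factors and drop out --- and it is tied to the specific Fourier form of $\mathbf{V}$, whereas the paper's phase-cancellation argument works for any unitary $\mathbf{V}$. What your approach buys in exchange is that it is more elementary (only $\mathrm{eig}(AB)=\mathrm{eig}(BA)$ and direct computation, no facts about adjugates of products or spectra of adjugates need to be recalled), and you explicitly dispose of the degenerate case of vanishing Schmidt coefficients by a continuity argument, a point the paper's use of $\mathrm{adj}(\boldsymbol{\Psi})=\det(\boldsymbol{\Psi})\boldsymbol{\Psi}^{-1}$ leaves implicit. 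The final assembly --- reading off $\mathrm{eig}(M^{-1})^{\downarrow}=\tfrac{d}{\det(\mathbf{z})}\,\mathrm{adj}(\mathbf{z})^{\downarrow}$ and collapsing the prefactor with Lemma~\ref{theorem_s_det} --- is sound and lands exactly on $d^{d-2}$.
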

\begin{proof}
	By Eq.~\eqref{theorem_s_convex_2},
	\begin{eqnarray*}
		\label{theorem_s_adj_3}
		&& \mathcal{S}(\mathrm{adj}(\mathbf{x}),\mathrm{adj}(\mathbf{y}))\nonumber\\
		&=& d\times\mathrm{eig}(\text{diag}(\mathrm{adj}(\mathbf{x}^{\downarrow}))\mathbf{V}^\dagger\text{diag}(\mathrm{adj}(\mathbf{y}^{\downarrow}))\mathbf{V})\nonumber\\
		&=& d\times\mathrm{eig}(\mathrm{adj}(\text{diag}(\mathbf{x}^{\downarrow}))\mathrm{adj}(\mathbf{V})\mathrm{adj}(\text{diag}(\mathbf{y}^{\downarrow}))\mathrm{adj}(\mathbf{V}^\dagger))\nonumber\\
		&=& d\times\mathrm{eig}(\mathrm{adj}(\mathbf{V}^\dagger\text{diag}(\mathbf{y}^{\downarrow})\mathbf{V}\text{diag}(\mathbf{x}^{\downarrow})))\nonumber\\
		&=& d\times\mathrm{adj}(\mathrm{eig}(\mathbf{V}^\dagger\text{diag}(\mathbf{y}^{\downarrow})\mathbf{V}\text{diag}(\mathbf{x}^{\downarrow})))^{\downarrow}\nonumber\\
		&=& d\times\mathrm{adj}(\mathcal{S}(\mathbf{x},\mathbf{y})/d)^{\downarrow}=d^{2-d}\mathrm{adj}(\mathbf{z})^{\downarrow},
	\end{eqnarray*}
	where we have used several properties of the adjugate of a matrix $\boldsymbol{\Psi}$, e.g.,~$\mathrm{adj}(\boldsymbol{\Psi})=\det(\boldsymbol{\Psi}) \boldsymbol{\Psi}^{-1}$ and $\mathrm{adj}(c \boldsymbol{\Psi})=c^{d-1}\mathrm{adj}(\boldsymbol{\Psi})$ for $c\in\mathbb{R}$.
\end{proof}

\begin{lemma}
	\label{theorem_d_optimal}
	(Majorization extremity)
	Let $\mathbf{z}'\in\mathbb{R}_{+}^{d}$. If $\mathbf{z}'\oplus(\overbrace{0,0,\cdots,0}^{m-d})\succ\mathbf{x}$, then $\mathbf{z}'\succ \mathcal{P}(\mathbf{x})$.
\end{lemma}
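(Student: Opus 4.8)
The plan is to reduce the statement to two facts already at hand in the excerpt: the isotonicity of the purification map $\mathcal{P}$ [Eq.~\eqref{eq_p_isotone}], and the observation that $\mathcal{P}$ leaves unchanged any $d$-vector that has merely been zero-padded up to length $m$. For the latter, I would first check that $\mathcal{P}\bigl(\mathbf{w}\oplus(\overbrace{0,\cdots,0}^{m-d})\bigr)=\mathbf{w}^{\downarrow}$ for every $\mathbf{w}\in\mathbb{R}_{+}^{d}$: running the pseudocode of Eq.~\eqref{eq_p} on the sorted, zero-padded input, a one-step induction on the loop index $l$ shows the running total $s$ entering step $l$ is $\sum_{j=l}^{d}w_{j}^{\downarrow}$, so $s/(d+1-l)$ is the arithmetic mean of $w_{l}^{\downarrow},\dots,w_{d}^{\downarrow}$, which is $\le w_{l}^{\downarrow}$; hence the $\max$ returns $\chi_{l}=w_{l}^{\downarrow}$ at every step. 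Granting this, the lemma is immediate: applying the isotone map $\mathcal{P}:\mathbb{R}_{+}^{m}\to\mathbb{R}_{+}^{d}$ to the hypothesis $\mathbf{z}'\oplus(\overbrace{0,\cdots,0}^{m-d})\succ\mathbf{x}$ yields $\mathbf{z}'=\mathcal{P}\bigl(\mathbf{z}'\oplus(\overbrace{0,\cdots,0}^{m-d})\bigr)\succ\mathcal{P}(\mathbf{x})$.

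On this route the only point requiring care --- and the one I would settle before writing anything --- is a logical-hygiene check: Eq.~\eqref{eq_p_isotone} is obtained in the excerpt from Theorem~\ref{theorem_convex-perm-isotone} together with the convexity of $\mathcal{P}$ (Lemma~\ref{theorem_d_convex}), so one must confirm this chain does not itself invoke the present Lemma~\ref{theorem_d_optimal}. It does not, since the convexity of $\mathcal{P}$ follows directly from the definition, Eq.~\eqref{eq_p}; hence there is no circularity and the three-line argument above stands.

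If one prefers a proof independent of Eq.~\eqref{eq_p_isotone}, I would keep in reserve a self-contained argument by partial sums. Write $S_{t}(\mathbf{v})=\sum_{j=1}^{t}v_{j}^{\downarrow}$. The hypothesis $\mathbf{z}'\oplus(\overbrace{0,\cdots,0}^{m-d})\succ\mathbf{x}$ is equivalent to ``$S_{t}(\mathbf{z}')\ge S_{t}(\mathbf{x})$ for $t=1,\dots,d$, with equality at $t=d$'', and since $\tr(\mathcal{P}(\mathbf{x}))=\tr(\mathbf{x})$ [Eq.~\eqref{eq_p_trace}] it suffices to prove $S_{t}(\mathbf{z}')\ge S_{t}(\mathcal{P}(\mathbf{x}))$ for $t=1,\dots,d-1$. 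Eqs.~\eqref{eq_p1}--\eqref{eq_p2} give the water-filling structure of the (already sorted) vector $\mathcal{P}(\mathbf{x})$: a threshold $l_{0}$ with $[\mathcal{P}(\mathbf{x})]_{j}=x_{j}^{\downarrow}$ for $j<l_{0}$ and $[\mathcal{P}(\mathbf{x})]_{j}=c:=(\tr(\mathbf{x})-S_{l_{0}-1}(\mathbf{x}))/(d+1-l_{0})$ for $j\ge l_{0}$. For $t<l_{0}$ the claim is just $S_{t}(\mathbf{z}')\ge S_{t}(\mathbf{x})$. For $l_{0}\le t\le d-1$, combining $S_{l_{0}-1}(\mathbf{z}')\ge S_{l_{0}-1}(\mathbf{x})$ with $S_{d}(\mathbf{z}')=\tr(\mathbf{x})$ bounds the tail block $S_{d}(\mathbf{z}')-S_{l_{0}-1}(\mathbf{z}')\le(d+1-l_{0})c$; since this block is the sum of the $d+1-l_{0}$ smallest entries of $\mathbf{z}'$ (in decreasing order), its sub-block $S_{d}(\mathbf{z}')-S_{t}(\mathbf{z}')$, comprising the $d-t$ smallest entries, is at most the fraction $(d-t)/(d+1-l_{0})$ of it --- the remaining $t+1-l_{0}$ entries having average at least that of the whole block --- hence at most $(d-t)c$; rearranging gives $S_{t}(\mathbf{z}')\ge\tr(\mathbf{x})-(d-t)c=S_{t}(\mathcal{P}(\mathbf{x}))$. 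Here the main obstacle is purely the bookkeeping of $l_{0}$ and the boundary cases $l_{0}=1$ (empty prefix) and $l_{0}=d+1$ (empty flat part) --- all of which the isotonicity route avoids.
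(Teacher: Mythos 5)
Your main route is correct but genuinely different from the paper's. The paper proves Lemma~\ref{theorem_d_optimal} directly by induction on partial sums: at the first index $l$ where $[\mathcal{P}(\mathbf{x})]_l>x^{\downarrow}_l$ the water-filling structure of Eq.~\eqref{eq_p2} makes $[\mathcal{P}(\mathbf{x})]_l$ the mean of the remaining mass, and the step in Eq.~\eqref{theorem_d_optimal_3} combines ``max $\ge$ mean'' for $\mathbf{z}'$ with trace preservation. You instead observe that $\mathcal{P}$ acts as the identity (up to sorting) on zero-padded $d$-vectors and then apply the isotonicity property \eqref{eq_p_isotone} to the hypothesis; this three-line argument is valid, and your circularity check reaches the right conclusion: isotonicity rests on Theorem~\ref{theorem_convex-perm-isotone} and the convexity Lemma~\ref{theorem_d_convex}, neither of which invokes Lemma~\ref{theorem_d_optimal}. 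Be aware, though, that your justification ``convexity follows directly from the definition'' understates matters: in the paper convexity is Lemma~\ref{theorem_d_convex}, itself proved by exactly the same kind of partial-sum induction, so your route does not avoid that work but merely relocates it behind the isotonicity property --- what it buys is modularity and brevity, while the paper's direct induction keeps the lemma self-contained (which matters since Lemma~\ref{theorem_d_convex} is presented as parasitic on the computation in Eq.~\eqref{theorem_d_optimal_3}). Your reserve partial-sum argument is essentially the paper's proof repackaged non-inductively; the only blemish there is the phrase ``with equality at $t=d$'', which literally asserts $S_d(\mathbf{z}')=S_d(\mathbf{x})$ and is not implied by the hypothesis when $\mathbf{x}$ has more than $d$ nonzero entries --- the correct statement, which you in fact use afterwards, is the trace equality $\tr(\mathbf{z}')=\tr(\mathbf{x})$.
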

\begin{proof}
	We will prove this lemma by induction. Suppose there exists $\left[\mathcal{P}(\mathbf{x})\right]_l>x^{\downarrow}_l$ for some $l$, and
	\begin{equation}
	\label{theorem_d_optimal_1}
	\sum_{j=1}^{k}z'^{\downarrow}_j \ge \sum_{j=1}^{k}\left[\mathcal{P}(\mathbf{x})\right]_j, \quad \forall k=1,\cdots,l-1;
	\end{equation}
	we would like to prove
	\begin{equation}
	\label{theorem_d_optimal_2}
	\sum_{j=1}^{l}z'^{\downarrow}_j \ge \sum_{j=1}^{l}\left[\mathcal{P}(\mathbf{x})\right]_j.
	\end{equation}
	Indeed,
	\begin{eqnarray}
	\label{theorem_d_optimal_3}
	&&\sum_{j=1}^{l}z'^{\downarrow}_j =\sum_{j=1}^{l-1}z'^{\downarrow}_j + z'^{\downarrow}_l \ge
	\sum_{j=1}^{l-1}z'^{\downarrow}_j + \frac{\sum_{j=1}^{d}z'^{\downarrow}_j-\sum_{j=1}^{l-1}z'^{\downarrow}_j}{d+1-l}\nonumber\\
	&&\ge\sum_{j=1}^{l-1}\left[\mathcal{P}(\mathbf{x})\right]_j+ \frac{\sum_{j=1}^{d}\left[\mathcal{P}(\mathbf{x})\right]_j-\sum_{j=1}^{l-1}\left[\mathcal{P}(\mathbf{x})\right]_j}{d+1-l}\nonumber\\
	&&=\sum_{j=1}^{l-1}\left[\mathcal{P}(\mathbf{x})\right]_j+ \left(d+1-l\right)^{-1}{\sum_{j=l}^{d}\left[\mathcal{P}(\mathbf{x})\right]_j}\nonumber\\
	&&=\sum_{j=1}^{l-1}\left[\mathcal{P}(\mathbf{x})\right]_j+\left[\mathcal{P}(\mathbf{x})\right]_l=\sum_{j=1}^{l}\left[\mathcal{P}(\mathbf{x})\right]_j,
	\end{eqnarray}
	where the first inequality holds because the maximum is never less than the mean and the second inequality holds because %
	$\tr (\mathcal{P}(\mathbf{x}))=\tr (\mathbf{z}')$
	and $1-\left(d+1-l\right)^{-1}\ge0$. Hence Eq.~\eqref{theorem_d_optimal_2} is proved.
	
	Finally, choose the minimum $l$ that satisfies $\left[\mathcal{P}(\mathbf{x})\right]_l>x^{\downarrow}_l$. In other words, $\left[\mathcal{P}(\mathbf{x})\right]_j=x^{\downarrow}_j$ for $j <l$. Hence
	\begin{eqnarray*}
		\label{theorem_d_optimal_4}
		&&\sum_{j=1}^{k}\left[\mathcal{P}(\mathbf{x})\right]_j=\sum_{j=1}^{k}x^{\downarrow}_j \le \sum_{j=1}^{k}  z'^{\downarrow}_j, \quad \forall k=1,\cdots,l-1, \qquad
	\end{eqnarray*}
	which completes the induction.	
\end{proof}

\begin{lemma}
	\label{theorem_d_convex}
	(Convexity)	
	$\mathcal{P}(\mathbf{x}_\alpha)$ is convex, i.e.,
	\begin{equation*}
	\label{eq_p-convex}
	\sum_\alpha p_\alpha \mathcal{P}(\mathbf{x}_\alpha) \succ \mathcal{P}(\sum_\alpha p_\alpha \mathbf{x}_\alpha), \quad p_\alpha\in\mathbb{R}_{+}.
	\end{equation*}
\end{lemma}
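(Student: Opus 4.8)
The plan is to prove convexity of $\mathcal{P}$ by reducing it to the same Lidskii-type argument that handled $\mathcal{S}$, but since $\mathcal{P}$ is defined by a greedy pseudocode rather than through eigenvalues, the reduction must be done through the majorization-extremity Lemma~\ref{theorem_d_optimal} together with the explicit characterization in Eqs.~\eqref{eq_p1}--\eqref{eq_p2}. First I would observe that for each fixed $\alpha$, Eq.~\eqref{eq_p_locc} gives $\mathcal{P}(\mathbf{x}_\alpha)\oplus(0,\dots,0)\succ\mathbf{x}_\alpha$, hence $\sum_\alpha p_\alpha\mathcal{P}(\mathbf{x}_\alpha)\oplus(0,\dots,0)\succ\sum_\alpha p_\alpha\mathbf{x}_\alpha$ because the partial-sum functionals $\mathbf{v}\mapsto\sum_{j=1}^k v_j^{\downarrow}$ are convex (a sum of largest-$k$ entries is a max over index subsets, hence convex) while the trace is linear. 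With this chain of majorizations in hand, Lemma~\ref{theorem_d_optimal}, applied with $\mathbf{z}'=\sum_\alpha p_\alpha\mathcal{P}(\mathbf{x}_\alpha)\in\mathbb{R}_+^d$ and $\mathbf{x}=\sum_\alpha p_\alpha\mathbf{x}_\alpha$, immediately yields $\sum_\alpha p_\alpha\mathcal{P}(\mathbf{x}_\alpha)\succ\mathcal{P}(\sum_\alpha p_\alpha\mathbf{x}_\alpha)$, which is exactly the claim.

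The only subtlety is a dimensional/bookkeeping one: Lemma~\ref{theorem_d_optimal} is stated for inputs $\mathbf{x}\in\mathbb{R}_+^m$ padded to match a $\mathbf{z}'\in\mathbb{R}_+^d$, so I need all the $\mathbf{x}_\alpha$ to live in a common $\mathbb{R}_+^m$ (if they have different lengths, pad the shorter ones with zeros, which changes neither $\mathcal{P}(\mathbf{x}_\alpha)$ by permutation invariance Eq.~\eqref{eq_p_perm} nor the convex combination), and I need the trace-matching hypothesis used inside Eq.~\eqref{theorem_d_optimal_3}: indeed $\tr(\mathbf{z}')=\sum_\alpha p_\alpha\tr(\mathcal{P}(\mathbf{x}_\alpha))=\sum_\alpha p_\alpha\tr(\mathbf{x}_\alpha)=\tr(\sum_\alpha p_\alpha\mathbf{x}_\alpha)=\tr(\mathcal{P}(\sum_\alpha p_\alpha\mathbf{x}_\alpha))$ by the trace-preserving property Eq.~\eqref{eq_p_trace}. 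Note Lemma~\ref{theorem_d_optimal} does not require $\sum p_\alpha=1$; it only uses the majorization relation and trace equality, so the statement with general $p_\alpha\in\mathbb{R}_+$ is covered verbatim (or one rescales).

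The step I expect to be the genuine content—though it is already packaged as Lemma~\ref{theorem_d_optimal}—is the majorization-extremity fact that $\mathcal{P}(\mathbf{x})$ is the majorization-\emph{least} element among all $d$-vectors $\mathbf{z}'$ with $\mathbf{z}'\oplus 0\succ\mathbf{x}$; everything else here is the routine observation that majorization in the $\mathbf{z}'$-slot is preserved under convex combinations. So the proof reduces to two lines once Lemma~\ref{theorem_d_optimal} is invoked, and the main obstacle is purely presentational: making sure the padding dimension $m$ and the trace equality are set up so that Lemma~\ref{theorem_d_optimal} applies cleanly to $\mathbf{z}'=\sum_\alpha p_\alpha\mathcal{P}(\mathbf{x}_\alpha)$ and $\mathbf{x}=\sum_\alpha p_\alpha\mathbf{x}_\alpha$. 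I would write it as: establish $\sum_\alpha p_\alpha\mathcal{P}(\mathbf{x}_\alpha)\oplus(0,\dots,0)\succ\sum_\alpha p_\alpha\mathbf{x}_\alpha$ from Eq.~\eqref{eq_p_locc} and convexity of largest-$k$-sum functionals, then apply Lemma~\ref{theorem_d_optimal}. This mirrors exactly how Eq.~\eqref{eq_p_assoc} was argued above, so it fits the paper's established pattern.
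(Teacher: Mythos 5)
Your proof is correct, but it takes a genuinely different route from the paper's. The paper proves the convexity of $\mathcal{P}$ by re-running, almost verbatim, the same induction on partial sums used to establish Lemma~\ref{theorem_d_optimal} (assuming the top-$(l-1)$ partial-sum inequalities, deducing the $l$-th via the max-versus-mean and trace-preservation argument of Eq.~\eqref{theorem_d_optimal_3}, and anchoring the induction at the minimal index $l$ where $\mathcal{P}$ strictly exceeds the input). You instead treat Lemma~\ref{theorem_d_optimal} as a black box expressing that $\mathcal{P}(\mathbf{x})$ is the $\prec$-least $d$-vector $\mathbf{z}'$ with $\mathbf{z}'\oplus(0,\dots,0)\succ\mathbf{x}$, and reduce convexity to the single claim $\left(\sum_\alpha p_\alpha\mathcal{P}(\mathbf{x}_\alpha)\right)\oplus(0,\dots,0)\succ\sum_\alpha p_\alpha\mathbf{x}_\alpha$; this is cleaner, avoids duplicating the induction, and mirrors how the paper itself argues associativity [Eq.~\eqref{eq_p_assoc}]. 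One point deserves to be made explicit in your write-up: ``convexity of the largest-$k$-sum functionals'' by itself only yields $\sum_{j=1}^{k}\bigl[\sum_\alpha p_\alpha\mathbf{x}_\alpha\bigr]^{\downarrow}_j\le\sum_\alpha p_\alpha\sum_{j=1}^{k}\bigl[\mathbf{x}_\alpha\bigr]^{\downarrow}_j\le\sum_\alpha p_\alpha\sum_{j=1}^{k}\bigl[\mathcal{P}(\mathbf{x}_\alpha)\oplus 0\bigr]^{\downarrow}_j$; to convert the last expression into the top-$k$ sum of the combined vector you also need that each $\mathcal{P}(\mathbf{x}_\alpha)$ is already arranged in descending order (guaranteed by the convention $\mathcal{P}(\mathbf{x})\equiv[\mathcal{P}(\mathbf{x})]^{\downarrow}$), so that the convex combination is itself sorted and the convexity inequality becomes an equality on that side. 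Without this co-sortedness, convex combinations do not preserve majorization, so this is the one step that must be stated rather than absorbed into ``convexity.'' Your bookkeeping on padding, trace equality via Eq.~\eqref{eq_p_trace}, and the irrelevance of normalizing $\sum_\alpha p_\alpha$ is all fine, and there is no circularity since Lemma~\ref{theorem_d_optimal} is proved independently.
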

\begin{proof}
	Similar to Lemma~\ref{theorem_d_optimal}, we will prove this lemma by induction. Suppose there exists $\left[\mathcal{P}(\sum_\alpha p_\alpha \mathbf{x}_\alpha)\right]_l>\left[\sum_\alpha p_\alpha \mathbf{x}_\alpha\right]^{\downarrow}_l$ for some $l$, and
	\begin{eqnarray}
	\label{theorem_d_convex_1}
	&&\sum_{j=1}^{k}\left[\sum_\alpha p_\alpha \mathcal{P}(\mathbf{x}_\alpha)\right]^{\downarrow}_j \ge \sum_{j=1}^{k}\left[\mathcal{P}(\sum_\alpha p_\alpha \mathbf{x}_\alpha)\right]_j,
	\end{eqnarray}
	$\forall k=1,\cdots,l-1$; we can prove
	\begin{equation}
	\label{theorem_d_convex_2}
	\sum_{j=1}^{l}\left[\sum_\alpha p_\alpha \mathcal{P}(\mathbf{x}_\alpha)\right]^{\downarrow}_j \ge \sum_{j=1}^{l}\left[\mathcal{P}(\sum_\alpha p_\alpha \mathbf{x}_\alpha)\right]_j
	\end{equation}
    the same way as in Eq.~\eqref{theorem_d_optimal_3}.
	Now, choose the minimum $l$ that satisfies $\left[\mathcal{P}(\sum_\alpha p_\alpha \mathbf{x}_\alpha)\right]_l>\left[\sum_\alpha p_\alpha \mathbf{x}_\alpha\right]^{\downarrow}_l$. In other words, $\left[\mathcal{P}(\sum_\alpha p_\alpha \mathbf{x}_\alpha)\right]_j=\left[\sum_\alpha p_\alpha \mathbf{x}_\alpha\right]^{\downarrow}_j$ for $j <l$. Hence
	\begin{eqnarray*}
		\label{theorem_d_convex_4}
		&&\sum_{j=1}^{k}\left[\mathcal{P}(\sum_\alpha p_\alpha \mathbf{x}_\alpha)\right]_j =
		\sum_{j=1}^{k}\left[\sum_\alpha p_\alpha \mathbf{x}_\alpha\right]^{\downarrow}_j\nonumber\\
		&\le& \sum_{j=1}^{k}\left[\sum_\alpha p_\alpha \mathbf{x}_\alpha^{\downarrow}\right]^{\downarrow}_j
		\le \sum_{j=1}^{k}\left[\sum_\alpha p_\alpha \mathcal{P}(\mathbf{x}_\alpha)\right]^{\downarrow}_j,
	\end{eqnarray*}
	$\forall k=1,\cdots,l-1$, which completes the induction. 
\end{proof}

\begin{lemma}
	\label{theorem_d_log}
	(A sum-product mixing majorization inequality)
	Let $\mathbf{x},\mathbf{y},\mathbf{z}\in\mathbb{R}_{+}^{m}$. If %
	\begin{equation*}
	\ln \mathbf{x}^{\downarrow}+ \ln \mathbf{y}^{\downarrow} \succ_w \ln \mathbf{z}^{\downarrow},
	\end{equation*}
	then
	\begin{equation*}
	\label{theorem_d_log_0}
	\ln \mathcal{P}(\mathbf{x})+\ln \mathcal{P}(\mathbf{y})\succ_w\ln \mathcal{P}(\mathbf{z}).
	\end{equation*}
\end{lemma}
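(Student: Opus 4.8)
The plan is to reduce the two weak-majorization statements to families of partial-product inequalities and then exploit the explicit ``water-filling'' shape of $\mathcal P$ read off from Eq.~\eqref{eq_p}. By permutation invariance [Eq.~\eqref{eq_p_perm}] we may take $\mathbf x=\mathbf x^{\downarrow}$, $\mathbf y=\mathbf y^{\downarrow}$, $\mathbf z=\mathbf z^{\downarrow}$. Since each $\mathcal P(\mathbf w)$ is nonincreasing by construction, so are $\ln\mathcal P(\mathbf x)+\ln\mathcal P(\mathbf y)$ and $\ln\mathcal P(\mathbf z)$, and the asserted $\succ_w$ is equivalent to the partial-product inequalities
\begin{equation*}
\prod_{j=1}^{k}[\mathcal P(\mathbf x)]_j[\mathcal P(\mathbf y)]_j\ \ge\ \prod_{j=1}^{k}[\mathcal P(\mathbf z)]_j,\qquad k=1,\dots,d,
\end{equation*}
while the hypothesis reads $\prod_{j=1}^{k}x^{\downarrow}_jy^{\downarrow}_j\ge\prod_{j=1}^{k}z^{\downarrow}_j$ for $k=1,\dots,m$. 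Moreover, for each argument $\mathbf w\in\{\mathbf x,\mathbf y,\mathbf z\}$ the pseudocode [Eq.~\eqref{eq_p}] yields a breakpoint $l_{\mathbf w}$ with $[\mathcal P(\mathbf w)]_j=w^{\downarrow}_j$ for $j<l_{\mathbf w}$ and $[\mathcal P(\mathbf w)]_j=C_{\mathbf w}:=(d+1-l_{\mathbf w})^{-1}\sum_{i\ge l_{\mathbf w}}w^{\downarrow}_i$ for $l_{\mathbf w}\le j\le d$, i.e. $[\mathcal P(\mathbf w)]_j=\max\{w^{\downarrow}_j,C_{\mathbf w}\}$ [Eqs.~\eqref{eq_p1}--\eqref{eq_p2}].

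Fix $k$. If $k<l_{\mathbf z}$ the inequality is immediate: $\prod_{j\le k}[\mathcal P(\mathbf z)]_j=\prod_{j\le k}z^{\downarrow}_j\le\prod_{j\le k}x^{\downarrow}_jy^{\downarrow}_j\le\prod_{j\le k}[\mathcal P(\mathbf x)]_j[\mathcal P(\mathbf y)]_j$, using the hypothesis and $[\mathcal P(\mathbf w)]_j\ge w^{\downarrow}_j$ [Eq.~\eqref{eq_p1}]. So the whole content lies in the regime $k\ge l_{\mathbf z}$, where
\begin{equation*}
\prod_{j=1}^{k}[\mathcal P(\mathbf z)]_j=\Big(\prod_{j<l_{\mathbf z}}z^{\downarrow}_j\Big)\,C_{\mathbf z}^{\,k-l_{\mathbf z}+1}.
\end{equation*}
Here is the tension: $C_{\mathbf z}$ is the arithmetic mean of the $m-l_{\mathbf z}+1$ tail numbers $z^{\downarrow}_{l_{\mathbf z}}\ge\dots\ge z^{\downarrow}_m$ distributed over only $d-l_{\mathbf z}+1$ slots, and the breakpoint condition forces $C_{\mathbf z}>z^{\downarrow}_{l_{\mathbf z}}$, so $C_{\mathbf z}^{\,k-l_{\mathbf z}+1}$ strictly exceeds the geometric-type quantity $\prod_{j=l_{\mathbf z}}^{k}z^{\downarrow}_j$ that the hypothesis controls. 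That is, purifying $\mathbf z$ \emph{enlarges} the right-hand side, and this must be paid for by the simultaneous enlargement on the left, where $[\mathcal P(\mathbf x)]_j[\mathcal P(\mathbf y)]_j\ge\max\{x^{\downarrow}_jy^{\downarrow}_j,\;C_{\mathbf x}C_{\mathbf y}\}$.

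The target therefore reduces to a scalar comparison: one bounds the arithmetic mean $C_{\mathbf z}$, and its powers $C_{\mathbf z}^{\,k-l_{\mathbf z}+1}$, from above by products $\prod_{j=l_{\mathbf z}}^{k'}z^{\downarrow}_j$ of the ordered tail reaching to some $k'$ with $k\le k'\le m$ (so the hypothesis applies), and then checks that the leftover factor $\prod_{j=k+1}^{k'}x^{\downarrow}_jy^{\downarrow}_j$ is absorbed by the gain $\prod_{j\le k}[\mathcal P(\mathbf x)]_j[\mathcal P(\mathbf y)]_j/\prod_{j\le k}x^{\downarrow}_jy^{\downarrow}_j$ supplied by the flattened levels $C_{\mathbf x},C_{\mathbf y}$ (which themselves absorb $x^{\downarrow}_{>k},y^{\downarrow}_{>k}$). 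The upper bound on $C_{\mathbf z}^{\,k-l_{\mathbf z}+1}$ by such ordered partial products is precisely the content of the special reverse AM--GM inequality of Appendix~\ref{sec_reverse}; the compensating lower bound on the gain is a routine forward AM--GM on the flattened blocks. The main obstacle I anticipate is calibrating these two estimates so that the exponent and index bookkeeping closes for every $k$ at once. A convenient way to keep this under control is to reduce first to the single-merge case $m=d+1$: since $\mathcal P_{m\to d}=\mathcal P_{m'\to d}\circ\mathcal P_{m\to m'}$ for any $d\le m'\le m$ (two applications of Lemma~\ref{theorem_d_optimal}, in the spirit of the proof of Eq.~\eqref{eq_p_assoc}), one has $\mathcal P_{m\to d}=\mathcal P_{d+1\to d}\circ\cdots\circ\mathcal P_{m\to m-1}$, so it suffices to show that the hypothesis for length-$(n+1)$ vectors passes to their images under $\mathcal P_{n+1\to n}$ and then iterate; in that case $C_{\mathbf z}$ carries only one extra summand and the reverse AM--GM needed is its mildest instance.
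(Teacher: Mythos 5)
Your skeleton coincides with the paper's own: reduce to the single-merge case (the paper takes $m=d+1$ w.l.o.g., which is exactly your composition $\mathcal{P}_{m\to d}=\mathcal{P}_{d+1\to d}\circ\cdots\circ\mathcal{P}_{m\to m-1}$), dispose of $k<l_{\mathbf z}$ via Eq.~\eqref{eq_p1} and the hypothesis, and for the flattened range play the reverse AM--GM of Appendix~\ref{sec_reverse} against $C_{\mathbf z}$ while a forward AM--GM works on the $\mathbf x,\mathbf y$ side. But the step you explicitly postpone---``calibrating these two estimates so that the exponent and index bookkeeping closes for every $k$''---is the actual content of the lemma, so as written the argument has a hole at its center. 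In the paper this calibration is performed only once, for the full product $k=d$ [Eq.~\eqref{theorem_d_log_1}], through a specific chain: Corollary~\ref{theorem_reverse-d} gives $C_{\mathbf z}^{\,d+1-l_{\mathbf z}}\le e^{z^{\downarrow}_{d+1}/z^{\downarrow}_{d}}\prod_{j=l_{\mathbf z}}^{d}z^{\downarrow}_j$, the prefactor is then bounded by $\bigl(1+\sqrt{z^{\downarrow}_{d+1}/z^{\downarrow}_{d}}\bigr)^{2}$ (valid since $z^{\downarrow}_{d+1}\le z^{\downarrow}_{d}$) so that the bound becomes $\bigl(\sqrt{z^{\downarrow}_{d}}+\sqrt{z^{\downarrow}_{d+1}}\bigr)^{2}\prod_{j\le d-1}z^{\downarrow}_j$, Theorem~\ref{theorem_sum-product} (with a fractional power $s$) transfers the hypothesis through the square-root cross term, the ordinary AM--GM factors the result as $(x^{\downarrow}_{d}+x^{\downarrow}_{d+1})(y^{\downarrow}_{d}+y^{\downarrow}_{d+1})\prod_{j\le d-1}x^{\downarrow}_jy^{\downarrow}_j$, and the definition of $\mathcal{P}$ absorbs this into $\prod_j[\mathcal{P}(\mathbf x)]_j\prod_j[\mathcal{P}(\mathbf y)]_j$. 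Your sketch never specifies the cutoff $k'$, the exponent carried by $C_{\mathbf z}$, or how a hypothesis controlling ordered products $\prod_j x^{\downarrow}_jy^{\downarrow}_j$ is to be applied to the mixed square-root terms that the reverse AM--GM produces; these are precisely the points where a naive pairing of the two AM--GM estimates fails to close.

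The second gap is structural: you plan a separate scalar comparison for every $k\ge l_{\mathbf z}$, which is exactly the bookkeeping you (rightly) fear. The paper avoids it altogether: since $\mathcal{P}(\mathbf z)$ is constant on $j\ge l_{\mathbf z}$ [Eq.~\eqref{eq_p2}] and $\ln\mathcal{P}(\mathbf x)+\ln\mathcal{P}(\mathbf y)$ is nonincreasing, the intermediate partial sums $l_{\mathbf z}\le k<d$ follow from the $k<l_{\mathbf z}$ cases together with a single total-sum inequality by the ``maximum $\ge$ mean'' induction of Eq.~\eqref{theorem_d_optimal_3} [see Eqs.~\eqref{theorem_d_log_6}--\eqref{theorem_d_log_7}]; the only global input that induction needs is $\tr(\ln\mathcal{P}(\mathbf x))+\tr(\ln\mathcal{P}(\mathbf y))\ge\tr(\ln\mathcal{P}(\mathbf z))$, i.e.\ Eq.~\eqref{theorem_d_log_1}. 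If you adopt that device and supply the explicit chain above, your outline becomes the paper's proof; without them it is a plausible plan rather than a proof.
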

\begin{proof}
	To begin with,  w.l.o.g.~we will just assume $m=d+1$, which means that $\mathcal{P}(\mathbf{x})$ is only one dimension less than $\mathbf{x}$. Indeed, for any $m>d$, the concentration process [Eq.~\eqref{eq_p}] can be equally constructed by applying such a one-dimension-less $\mathcal{P}(\mathbf{x})$ to $\mathbf{x}$ recursively $m-d$ times. Thus, if our statement holds for $m=d+1$, then it holds for general $m$ too.
	
	First, we would like to prove the following inequality: %
	\begin{equation}
	\label{theorem_d_log_1}
	\left(\prod_{j=1}^{d}\left[\mathcal{P}(\mathbf{x})\right]_j\right)\left(\prod_{j=1}^{d}\left[\mathcal{P}(\mathbf{y})\right]_j\right) \ge \prod_{j=1}^{d}\left[\mathcal{P}(\mathbf{z})\right]_j.
	\end{equation}
	By the definition of $\mathcal{P}(\mathbf{x})$ [Eq.~\eqref{eq_p2}], we have
	\begin{equation*}
	\label{theorem_d_log_2}
	\prod_{j=1}^{d}\left[\mathcal{P}(\mathbf{z})\right]_j=\left(\prod_{j=1}^{l-1}z^{\downarrow}_j\right)\left[{\sum_{j=l}^{d+1} z^{\downarrow}_j}/{\left(d+1-l\right)}\right]^{d+1-l},
	\end{equation*}
	where $l$ separates $\mathcal{P}(\mathbf{z})$ and satisfies $\left[\mathcal{P}(\mathbf{z})\right]_j=z^{\downarrow}_j$ for $j <l$ and $\left[\mathcal{P}(\mathbf{z})\right]_j>z^{\downarrow}_j$ for $l \le j \le d$. 	

	Now we take advantage of a special reverse AM--GM inequality which we will introduce and prove in Appendix~\ref{sec_reverse}. Let $\Delta=z^{\downarrow}_{d+1}$, $\varepsilon_{j}=z^{\downarrow}_{d+1-j}$, and $E_k=\Delta+\sum_{j=1}^{k}\varepsilon_{j}$, $k=1,2,\cdots,d+1-l$. We can see that $\varepsilon_{k} \le \varepsilon_{k+1}\le {E_k}/k$, $\forall k$ [implied by Eq.~\eqref{eq_p}]. Thus, using Corollary~\ref{theorem_reverse-d},
	\begin{eqnarray*}
		\label{theorem_d_log_3}
		\left[{\sum_{j=l}^{d+1} z^{\downarrow}_j}/{\left(d+1-l\right)}\right]^{d+1-l}&&\le \exp({z^{\downarrow}_{d+1}/z^{\downarrow}_d}) \prod_{j=l}^{d} z^{\downarrow}_j\nonumber\\
		&&\le \left(1+\sqrt{{z^{\downarrow}_{d+1}}/{z^{\downarrow}_d}}\right)^2 \prod_{j=l}^{d} z^{\downarrow}_j,
		\quad\quad
	\end{eqnarray*}
	where the second inequality holds since ${z^{\downarrow}_{d+1}\le z^{\downarrow}_d}$. 
	Thus,
	\begin{eqnarray*}
		\label{theorem_d_log_5}
		&&\prod_{j=1}^{d}\left[\mathcal{P}(\mathbf{z})\right]_j \le 
		\left(1+\sqrt{{z^{\downarrow}_{d+1}}/{z^{\downarrow}_d}}\right)^2 \prod_{j=1}^{d} z^{\downarrow}_j \nonumber\\
		&=&\left(z^{\downarrow}_{d}+z^{\downarrow}_{d+1}+2\sqrt{z^{\downarrow}_{d} z^{\downarrow}_{d+1}}\right) \prod_{j=1}^{d-1} z^{\downarrow}_j\nonumber\\	
		&\le&  \left(x^{\downarrow}_{d}y^{\downarrow}_{d}+x^{\downarrow}_{d+1}y^{\downarrow}_{d+1}+2\sqrt{x^{\downarrow}_{d}y^{\downarrow}_{d}x^{\downarrow}_{d+1}y^{\downarrow}_{d+1}}\right) \prod_{j=1}^{d-1} x^{\downarrow}_j y^{\downarrow}_j\nonumber\\
		&\le& \left(x^{\downarrow}_{d}+x^{\downarrow}_{d+1}\right)\left(y^{\downarrow}_{d}+y^{\downarrow}_{d+1}\right) \prod_{j=1}^{d-1} x^{\downarrow}_j y^{\downarrow}_j \nonumber\\
		&\le& \left(\prod_{j=1}^{d}\left[\mathcal{P}(\mathbf{x})\right]_j\right)\left(\prod_{j=1}^{d}\left[\mathcal{P}(\mathbf{y})\right]_j\right),
	\end{eqnarray*}	
	where in the third step we use Theorem~\ref{theorem_sum-product}, in the fourth step we use the usual AM--GM inequality,  and in the last step we use the definition of the concentration process [Eq.~\eqref{eq_p}]. Hence Eq.~\eqref{theorem_d_log_1} is proved.
	
	Now we will complete the proof by induction. Suppose there exists $\left[\mathcal{P}(\mathbf{z})\right]_l>z^{\downarrow}_l$ for some $l$, and
	\begin{equation}
	\label{theorem_d_log_6}
	\sum_{j=1}^{k}\ln\left[ \mathcal{P}(\mathbf{x})\right]_j+\sum_{j=1}^{k}\ln\left[ \mathcal{P}(\mathbf{y})\right]_j\ge \sum_{j=1}^{k}\ln \left[\mathcal{P}(\mathbf{z})\right]_j,
	\end{equation}
	$\forall k=1,\cdots,l-1$, we would like to prove
	\begin{equation}
	\label{theorem_d_log_7}
	\sum_{j=1}^{l}\ln \left[\mathcal{P}(\mathbf{x})\right]_j+\sum_{j=1}^{l}\ln\left[ \mathcal{P}(\mathbf{y})\right]_j \ge \sum_{j=1}^{l}\ln \left[ \mathcal{P}(\mathbf{z})\right]_j.
	\end{equation}
	The proof structure is the same as in Eq.~\eqref{theorem_d_optimal_3}, while the only difference is that, for the second inequality in Eq.~\eqref{theorem_d_optimal_3} to hold, we need to show that 
	$\tr (\ln \mathcal{P}(\mathbf{x}))+\tr (\ln \mathcal{P}(\mathbf{y})) \ge \tr (\ln  \mathcal{P}(\mathbf{z}))$.
	This has been proved by Eq.~\eqref{theorem_d_log_1}. Hence Eq.~\eqref{theorem_d_log_7} is proved.
	
	Finally, choose the minimum $l$ that satisfies $\left[\mathcal{P}(\mathbf{z})\right]_l>z^{\downarrow}_l$. In other words, $\left[\mathcal{P}(\mathbf{z})\right]_j=z^{\downarrow}_j$ for $j <l$. Hence
	\begin{eqnarray*}
		\label{theorem_d_log_8}
		&&\sum_{j=1}^{k}\ln \left[\mathcal{P}(\mathbf{z})\right]_j=\sum_{j=1}^{k} \ln z^{\downarrow}_j \le \sum_{j=1}^{k} \ln x^{\downarrow}_j+ \sum_{j=1}^{k} \ln y^{\downarrow}_j\nonumber\\ 
		&&\le \sum_{j=1}^{k}\ln \left[\mathcal{P}(\mathbf{x})\right]_j+\sum_{j=1}^{k}\ln  \left[\mathcal{P}(\mathbf{y})\right]_j, \forall k=1,\cdots,l-1 \qquad
	\end{eqnarray*}
	which completes the induction.
\end{proof}

\subsection{Simple Series}
For simple series network topology [Fig.~\ref{fig_seri}], the swapping protocol is responsible for entanglement transmission. %
What is the maximum of the final average $k$-concurrence [from Eq.~\eqref{eq_swap}]
\begin{equation*}
\sum_\alpha p_\alpha C^{\alpha}_k=\sum_\alpha \left\|\boldsymbol{\Psi}^{\alpha}\right\|_2^2 C_k( \sigma^2({\left\|\boldsymbol{\Psi}^{\alpha}\right\|_2^{-1}}{\boldsymbol{\Psi}^{\alpha}}))
\end{equation*}
between A and B that can be obtained by the swapping protocol? How does it compare with the deterministic series rule? This is answered by the following theorem.
\begin{theorem}
	\label{theorem_simple-series}
	Given a simple series QN, 
	compared with generic entanglement swapping protocols of probabilistic outcomes, 
	the series rule (Table~\ref{table_qdet-rules}) produces the optimal average $G$-concurrence between A and B.
\end{theorem}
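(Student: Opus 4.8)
\emph{Proof idea.---}The plan is to settle the single-relay case first and then iterate along the chain. Consider A--R--B sharing normalized links $\boldsymbol{\lambda}_a,\boldsymbol{\lambda}_b$. Recall that the $G$-concurrence is $C_d(\boldsymbol{\lambda})=d\,(\lambda_1\lambda_2\cdots\lambda_d)^{1/d}$ (Appendix~\ref{sec_concurrence}), so it depends on a normalized pure state only through the determinant of its Schmidt vector. A generic swapping protocol is a POVM $\{\mathbf{X}^{\alpha}\}$ obeying Eq.~\eqref{eq_povm}; its outcome $\boldsymbol{\Psi}^{\alpha}=\mathrm{diag}(\boldsymbol{\lambda}_a)^{1/2}\mathbf{X}^{\alpha}\mathrm{diag}(\boldsymbol{\lambda}_b)^{1/2}$ [Eq.~\eqref{eq_swap}] occurs with probability $p_{\alpha}=\|\boldsymbol{\Psi}^{\alpha}\|_2^2$, and its normalized state has $G$-concurrence $C_d^{\alpha}=d\,|\det\boldsymbol{\Psi}^{\alpha}|^{2/d}/p_{\alpha}$. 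Using multiplicativity of the determinant on that factorization, the average $G$-concurrence produced is
\[
\sum_{\alpha}p_{\alpha}C_d^{\alpha}=d\,(\det\boldsymbol{\lambda}_a)^{1/d}(\det\boldsymbol{\lambda}_b)^{1/d}\sum_{\alpha}|\det\mathbf{X}^{\alpha}|^{2/d},
\]
so the whole question reduces to maximizing $\sum_{\alpha}|\det\mathbf{X}^{\alpha}|^{2/d}$ over POVMs.

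The key step is the bound $\sum_{\alpha}|\det\mathbf{X}^{\alpha}|^{2/d}\le d$, which I would obtain from the ordinary AM--GM inequality applied to the squared singular values of each $\mathbf{X}^{\alpha}$, namely $|\det\mathbf{X}^{\alpha}|^{2/d}\le\frac1d\|\mathbf{X}^{\alpha}\|_2^2$, combined with the trace of the completeness relation Eq.~\eqref{eq_povm} (set $\mu=\mu'$, $\nu=\nu'$ and sum over $\mu,\nu$), which gives $\sum_{\alpha}\|\mathbf{X}^{\alpha}\|_2^2=d^2$. Hence no swapping protocol can beat $d^2(\det\boldsymbol{\lambda}_a\det\boldsymbol{\lambda}_b)^{1/d}=C_d(\boldsymbol{\lambda}_a)C_d(\boldsymbol{\lambda}_b)$. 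That this value is attained by the deterministic series rule is an immediate consistency check: the rule outputs the single normalized state with Schmidt vector $\mathcal{S}(\boldsymbol{\lambda}_a,\boldsymbol{\lambda}_b)$ [Eq.~\eqref{eq_s}], whose determinant is $d^d\det\boldsymbol{\lambda}_a\det\boldsymbol{\lambda}_b$ by Lemma~\ref{theorem_s_det}, so $C_d(\mathcal{S}(\boldsymbol{\lambda}_a,\boldsymbol{\lambda}_b))=d\,(d^d\det\boldsymbol{\lambda}_a\det\boldsymbol{\lambda}_b)^{1/d}=C_d(\boldsymbol{\lambda}_a)C_d(\boldsymbol{\lambda}_b)$; equivalently, the AM--GM inequality is saturated because the rule's POVM elements are, up to the factor $d^{-1/2}$, Weyl-type unitaries with all singular values equal. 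Thus $C_d$ is multiplicative under the deterministic series rule, and for a single relay that rule is optimal.

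For a simple series QN of $n$ links $\boldsymbol{\lambda}_1,\ldots,\boldsymbol{\lambda}_n$, I would then induct on $n$. The deterministic series rule achieves the final $G$-concurrence $\prod_{i=1}^{n}C_d(\boldsymbol{\lambda}_i)$ by the multiplicativity just established. For the matching upper bound: by the reasoning of Theorem~\ref{theorem_single-link} (the average $C_d$ cannot increase under LOCC acting on any single link, using convexity of $\mathcal{S}$, Lemma~\ref{theorem_s_convex}, and convexity/isotonicity of $C_d$) we may discard any step that merely replaces a link by an ensemble; the first swap that fuses two adjacent links $\boldsymbol{\lambda}_i,\boldsymbol{\lambda}_{i+1}$ into an outer link $\boldsymbol{\mu}^{\alpha}$ then leaves, for each outcome $\alpha$ of probability $p_{\alpha}$, a chain of $n-1$ links; combining the single-relay bound $\sum_{\alpha}p_{\alpha}C_d(\boldsymbol{\mu}^{\alpha})\le C_d(\boldsymbol{\lambda}_i)C_d(\boldsymbol{\lambda}_{i+1})$ with the induction hypothesis on each residual chain, and factoring out the untouched $C_d(\boldsymbol{\lambda}_j)$, bounds the total average $G$-concurrence by $\prod_{i=1}^{n}C_d(\boldsymbol{\lambda}_i)$. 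I expect the main obstacle to be precisely this last stage: verifying that the per-swap bound composes correctly for a fully general, adaptive, out-of-order choice of swapping POVMs along the chain, and keeping the normalizations straight when passing between unnormalized outcomes and normalized states. The two-link determinant identity and the AM--GM estimate themselves are routine.
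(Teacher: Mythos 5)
Your proposal matches the paper's own proof essentially step for step: the same determinant factorization of the average $G$-concurrence into $d\,(\det\boldsymbol{\lambda}_a\det\boldsymbol{\lambda}_b)^{1/d}\sum_{\alpha}|\det\mathbf{X}^{\alpha}|^{2/d}$, the same AM--GM bound $|\det\mathbf{X}^{\alpha}|^{2/d}\le \tr(\mathbf{X}^{\alpha\dagger}\mathbf{X}^{\alpha})/d$ combined with the POVM completeness relation to obtain the cap $C_d(\boldsymbol{\lambda}_a)C_d(\boldsymbol{\lambda}_b)$, and Lemma~\ref{theorem_s_det} to show the deterministic series rule saturates it. The only difference is that you spell out an explicit induction for the $n$-link chain, a step the paper simply asserts follows from the properties of $\mathcal{S}$, and your sketch of it is consistent with the paper's argument.
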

\begin{proof}
	It suffices to investigate two links $\boldsymbol{\lambda}_a$ and $\boldsymbol{\lambda}_b$ in series. The proof can be easily generalized to $n$ links.

	Let $\{\mathbf{X}^{\alpha}\}$ denote a set of quantum measurements as used in the entanglement swapping protocol [Eq.~\eqref{eq_swap}]. %
	In general, given probability $p_\alpha$ and the corresponding $k$-concurrence $C^{\alpha}_k$ of outcome $\alpha$, the maximum average $k$-concurrence $\max_{\{\mathbf{X}^{\alpha}\}}\sum_\alpha p_\alpha C^{\alpha}_k$ is intractable, and the corresponding optimal $\{\mathbf{X}^{\alpha}\}$ should implicitly depend on $\boldsymbol{\lambda}_a$ and $\boldsymbol{\lambda}_b$. However, this is not the case for $G$-concurrence. To see this, 
	let $k=d$; then, 
	\begin{eqnarray}
	\label{theorem_simple-series_2}
	&&\sum_\alpha p_\alpha C^{\alpha}_d= d \sum_\alpha \left|\det(\boldsymbol{\Psi}^{\alpha})\right|^{{2}/{d}} \nonumber\\
	=&& d \sum_\alpha \det(\boldsymbol{\lambda}_a)^{1/d} \left|\det(\mathbf{X}^{\alpha})\right|^{2/d} \det(\boldsymbol{\lambda}_b)^{1/d} \nonumber\\
	=&& d^{-1}C_d(\boldsymbol{\lambda}_a) C_d(\boldsymbol{\lambda}_b)\sum_\alpha \left|\det(\mathbf{X}^{\alpha})\right|^{2/d}\nonumber\\
	\le&& d^{-2}C_d(\boldsymbol{\lambda}_a) C_d(\boldsymbol{\lambda}_b)\sum_\alpha \tr(\mathbf{X}^{\alpha\dagger}\mathbf{X}^{\alpha})=C_d(\boldsymbol{\lambda}_a) C_d(\boldsymbol{\lambda}_b),\qquad
	\end{eqnarray}
	where the AM--GM inequality is used in the second-to-last step and Eq.~\eqref{eq_povm} is used in the last step. Equation~\eqref{theorem_simple-series_2} indicates that the final average $G$-concurrence will never be greater than the product of the $G$-concurrences of $\boldsymbol{\lambda}_a$ and $\boldsymbol{\lambda}_b$. 
 
 {Note that this proof was originally given by Gour in Ref.~\cite{k-concurrence_g05}. The validity of the proof comes from the unique feature of the multiplicity of determinants, which is not applicable to other $k$-concurrence monotones~\cite{k-concurrence_g05}. A summary of the use of determinant-based entanglement measures compared with other probabilistic measures for certain QN topologies can be found in Ref.~\cite{QEP-detail_pcalw08}.}
	
	It remains to be shown that the equality in Eq.~\eqref{theorem_simple-series_2} holds for $\mathcal{S}(\boldsymbol{\lambda}_a,\boldsymbol{\lambda}_b)$. By Lemma~\ref{theorem_s_det},
	\begin{eqnarray}
	\label{theorem_simple-series_3}
	&& C_d(\mathcal{S}(\boldsymbol{\lambda}_a,\boldsymbol{\lambda}_b))= d \det(\mathcal{S}(\boldsymbol{\lambda}_a,\boldsymbol{\lambda}_b))^{1/d}\nonumber\\
	&=& d^2 \det(\boldsymbol{\lambda}_a)^{1/d} \det(\boldsymbol{\lambda}_b)^{1/d}= C_d(\boldsymbol{\lambda}_a) C_d(\boldsymbol{\lambda}_b).
	\end{eqnarray}
	Thus,
	\begin{equation}
	\label{theorem_simple-series_4}
	\sum_\alpha p_\alpha C_d^\alpha \le C_d(\mathcal{S}(\boldsymbol{\lambda}_a,\boldsymbol{\lambda}_b)),
	\end{equation}
	which completes the proof. 
\end{proof}

\emph{Remark.---}Note that Eq.~\eqref{theorem_simple-series_4} says nothing about the majorization preorder. In fact, usually there is
\begin{equation}
\label{theorem_simple-series_5}
\sum_\alpha p_\alpha \sigma^2(\boldsymbol{\Psi}^{\alpha}/\left\|\boldsymbol{\Psi}^{\alpha}\right\|_2)^{\downarrow} = \sum_\alpha \sigma^2(\boldsymbol{\Psi}^{\alpha})^{\downarrow}\nsucc \mathcal{S}(\boldsymbol{\lambda}_a,\boldsymbol{\lambda}_b).
\end{equation}
Would the inequality in Eq.~\eqref{theorem_simple-series_5} hold, one could prove Eq.~\eqref{theorem_simple-series_4} for not only $k=d$ but $k<d$ as well. Unfortunately, this is not true.

\subsection{Simple Parallel}
For simple parallel network topology [Fig.~\ref{fig_para}], the concentration protocol is responsible for entanglement transmission. Similarly, we ask:
What is the maximum of the final average $k$-concurrence $\sum_\alpha p_\alpha C^{\alpha}_k$ between A and B that can be obtained by the concentration protocol? How does it compare with the deterministic parallel rule? The answer is given by the following theorem.
\begin{theorem}
	\label{theorem_simple-parallel}
	Given a simple parallel QN, 
	compared with generic entanglement concentration protocols of probabilistic outcomes, 
	the parallel rule (Table~\ref{table_qdet-rules}) produces the optimal average $k$-concurrence between A and B for $k=1,2,\cdots,d$.
\end{theorem}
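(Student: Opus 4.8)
\emph{Proof proposal.---}The plan is to reduce the simple parallel configuration to a single bipartite purification problem and then leverage the majorization extremity of $\mathcal{P}$ (Lemma~\ref{theorem_d_optimal}) together with the concavity and isotonicity of the concurrence monotones $C_k$. First I would note that the $n$ parallel links between A and B, taken as a whole, constitute one bipartite pure state whose Schmidt vector is, up to reordering, $\boldsymbol{\Lambda}=\boldsymbol{\lambda}_1\otimes\boldsymbol{\lambda}_2\otimes\cdots\otimes\boldsymbol{\lambda}_n\in\mathbb{R}_{+}^{d^n}$, obtained by grouping all of A's carriers on one side and all of B's on the other and performing a singular value decomposition. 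A generic entanglement purification protocol acting on this QN is then nothing but an LOCC map from $|\boldsymbol{\Lambda}\rangle$ to a probabilistic ensemble of $d$-dimensional bipartite pure states $\{(p_\alpha,\boldsymbol{\mu}_\alpha)\}$ with $\sum_\alpha p_\alpha=1$, whereas the parallel rule deterministically delivers the single state with Schmidt vector $\mathcal{P}(\boldsymbol{\Lambda})$ [Eq.~\eqref{eq_p_locc}]. The goal is the inequality $\sum_\alpha p_\alpha C_k(\boldsymbol{\mu}_\alpha)\le C_k(\mathcal{P}(\boldsymbol{\Lambda}))$ for every $k=1,2,\cdots,d$.

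Next I would apply the fundamental locality constraint of LOCC, Eq.~\eqref{theorem_single-link_1}, to the initial state $\boldsymbol{\Lambda}$: padding each $\boldsymbol{\mu}_\alpha$ with zeros to dimension $d^n$, one gets $\sum_\alpha p_\alpha\boldsymbol{\mu}_\alpha^{\downarrow}\oplus(\overbrace{0,0,\cdots,0}^{d^n-d})\succ\boldsymbol{\Lambda}$. The decisive step is then to upgrade this into a statement about $\mathcal{P}(\boldsymbol{\Lambda})$ rather than $\boldsymbol{\Lambda}$: since $\mathbf{z}'\equiv\sum_\alpha p_\alpha\boldsymbol{\mu}_\alpha^{\downarrow}\in\mathbb{R}_{+}^{d}$ obeys $\tr\mathbf{z}'=1=\tr\boldsymbol{\Lambda}$ and its zero-padding majorizes $\boldsymbol{\Lambda}$, Lemma~\ref{theorem_d_optimal} yields $\sum_\alpha p_\alpha\boldsymbol{\mu}_\alpha^{\downarrow}\succ\mathcal{P}(\boldsymbol{\Lambda})$; that is, the probabilistically averaged Schmidt vector produced by \emph{any} purification protocol is majorized by---hence no more entangled than---what the parallel rule produces. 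Finally I would transfer this to $k$-concurrences exactly as in the proof of Theorem~\ref{theorem_single-link}: concavity of $C_k$ [Eq.~\eqref{eq_concurrence_convex}] gives $\sum_\alpha p_\alpha C_k(\boldsymbol{\mu}_\alpha)\le C_k(\sum_\alpha p_\alpha\boldsymbol{\mu}_\alpha^{\downarrow})$, while Eq.~\eqref{eq_concurrence_isotone} (a larger Schmidt vector in the majorization order has a smaller $C_k$) applied to $\sum_\alpha p_\alpha\boldsymbol{\mu}_\alpha^{\downarrow}\succ\mathcal{P}(\boldsymbol{\Lambda})$ gives $C_k(\sum_\alpha p_\alpha\boldsymbol{\mu}_\alpha^{\downarrow})\le C_k(\mathcal{P}(\boldsymbol{\Lambda}))$; chaining the two bounds proves the claim for every $k$ simultaneously. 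Passing from two parallel links to $n$ requires nothing extra here, but if one prefers an inductive formulation it follows from the associativity of $\mathcal{P}$ [Eq.~\eqref{eq_p_assoc}].

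The main obstacle---really the only genuine idea---is recognizing that the naive bound $\sum_\alpha p_\alpha\boldsymbol{\mu}_\alpha^{\downarrow}\succ\boldsymbol{\Lambda}$ coming straight from LOCC locality is too weak: by itself it only yields $\sum_\alpha p_\alpha C_k(\boldsymbol{\mu}_\alpha)\le C_k(\boldsymbol{\Lambda})$, and $C_k(\boldsymbol{\Lambda})\ge C_k(\mathcal{P}(\boldsymbol{\Lambda}))$, so it does not establish optimality. That gap is closed precisely by Lemma~\ref{theorem_d_optimal}, which identifies $\mathcal{P}(\boldsymbol{\Lambda})$ as the majorization-least---equivalently, most entangled---$d$-dimensional state reachable from $\boldsymbol{\Lambda}$. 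A subsidiary technical point is the dimension bookkeeping: one must confirm that it suffices to restrict to $d$-dimensional probabilistic outcomes (any lower-dimensional outcome is zero-padded into $\mathbb{R}_{+}^{d}$, and the intended comparison is with protocols whose final link between A and B has the same dimension $d$ as the one the parallel rule outputs), so that $\mathbf{z}'$ is genuinely a vector in $\mathbb{R}_{+}^{d}$ and Lemma~\ref{theorem_d_optimal} literally applies. Note that, unlike the simple series case---where the corresponding statement (Theorem~\ref{theorem_simple-series}) holds only for $k=d$ and rests on the multiplicativity of determinants---the present result holds for every $k$, precisely because purification enjoys the clean majorization structure of Lemma~\ref{theorem_d_optimal}, whereas entanglement swapping does not (cf.\ the failure of the analogous majorization relation, Eq.~\eqref{theorem_simple-series_5}).
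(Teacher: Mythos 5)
Your proposal is correct and follows essentially the same route as the paper: the LOCC locality constraint with zero-padding, Lemma~\ref{theorem_d_optimal} to conclude $\sum_\alpha p_\alpha\boldsymbol{\mu}_\alpha^{\downarrow}\succ\mathcal{P}(\boldsymbol{\Lambda})$, and then concavity [Eq.~\eqref{eq_concurrence_convex}] plus isotonicity [Eq.~\eqref{eq_concurrence_isotone}] of the $k$-concurrences. The only cosmetic difference is that you treat all $n$ links at once in $\mathbb{R}_{+}^{d^n}$, whereas the paper argues for two links and invokes the properties of $\mathcal{P}$ to generalize.
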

\begin{proof}
	Again, it suffices to look at two links $\boldsymbol{\lambda}_a$ and $\boldsymbol{\lambda}_b$ in parallel. The proof can be easily generalized to $n$ links.
	
	If we can obtain an ensemble of outcomes $\boldsymbol{\lambda}_\alpha'$, each with probability $p_\alpha$, from $\boldsymbol{\lambda}_a\otimes\boldsymbol{\lambda}_b$, then, by the fundamental limit of LOCC~\cite{monotone_v00},
	\begin{equation*}
	\label{theorem_simple-parallel_1}
	\left(\sum_\alpha p_\alpha \boldsymbol{\lambda}_\alpha'^{\downarrow}\oplus(\overbrace{0,0,\cdots,0}^{d^2-d}) \right)\succ \boldsymbol{\lambda}_a\otimes\boldsymbol{\lambda}_b.
	\end{equation*}
	Hence, by Lemma~\ref{theorem_d_optimal},
	\begin{eqnarray}
	\label{theorem_simple-parallel_2}
	\sum_\alpha p_\alpha \boldsymbol{\lambda}_\alpha'^{\downarrow} \succ \mathcal{P}(\boldsymbol{\lambda}_a\otimes\boldsymbol{\lambda}_b),
	\end{eqnarray}
	and thus
	\begin{eqnarray*}
		\label{theorem_simple-parallel_3}
		\sum_\alpha p_\alpha C_k(\boldsymbol{\lambda}_\alpha') \le C_k(\sum_\alpha p_\alpha \boldsymbol{\lambda}_\alpha'^{\downarrow}) \le C_k(\mathcal{P}(\boldsymbol{\lambda}_a\otimes\boldsymbol{\lambda}_b))\quad
	\end{eqnarray*}
	for $k=1,2,\cdots,d$.
\end{proof}

\emph{Remark.---} Note that unlike Eq.~\eqref{theorem_simple-series_5} for the series rule, Eq.~\eqref{theorem_simple-parallel_2} does hold. Therefore the optimality of the parallel rule is for all $k$-concurrence, not just $G$-concurrence.
{This implies that $\mathcal{P}(\boldsymbol{\lambda}_a\otimes\boldsymbol{\lambda}_b)$ always contains the maximum allowable entanglement that can be concentrated from $\boldsymbol{\lambda}_a\otimes\boldsymbol{\lambda}_b$ on average.}

\subsection{Parallel-then-Series}
For parallel-then-series network topology [Fig.~\ref{fig_para-then-seri}], both the swapping and concentration protocols are responsible for entanglement transmission. We prove the following theorem.
\begin{theorem}
	\label{theorem_parallel-then-series}
	Given a parallel-then-series QN, 
	compared with generic entanglement swapping and concentration protocols of probabilistic outcomes, 
	the series and parallel rules (Table~\ref{table_qdet-rules}) produce the optimal average $G$-concurrence between A and B.
\end{theorem}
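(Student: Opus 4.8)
The plan is to bolt together the two sub-results already available for the simple series and simple parallel topologies, using the determinant-multiplicativity of the swapping function $\mathcal{S}$ (Lemma~\ref{theorem_s_det}) and the majorization-extremity of the purification function $\mathcal{P}$ (Lemma~\ref{theorem_d_optimal}). Write the parallel-then-series QN with super-nodes $A=v_0,v_1,\dots,v_n=B$, the $i$-th bundle carrying $m_i$ parallel links with Schmidt vectors $\boldsymbol{\lambda}_{i,1},\dots,\boldsymbol{\lambda}_{i,m_i}$. The DET output is obtained by first applying the parallel rule bundle-by-bundle, $\boldsymbol{\mu}_i=\mathcal{P}(\boldsymbol{\lambda}_{i,1}\otimes\cdots\otimes\boldsymbol{\lambda}_{i,m_i})$, and then the series rule along the resulting chain, giving some $\boldsymbol{\nu}$. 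First I would evaluate its $G$-concurrence: iterating Lemma~\ref{theorem_s_det} gives $\det(\boldsymbol{\nu})=d^{(n-1)d}\prod_{i=1}^{n}\det(\boldsymbol{\mu}_i)$, hence $C_d(\boldsymbol{\nu})=d\,\det(\boldsymbol{\nu})^{1/d}=\prod_{i=1}^{n}C_d(\boldsymbol{\mu}_i)$, exactly as in Eq.~\eqref{theorem_simple-series_3}. By Theorem~\ref{theorem_simple-parallel} at $k=d$, each $C_d(\boldsymbol{\mu}_i)$ is precisely the largest average $G$-concurrence obtainable from the $i$-th bundle by any LOCC conversion into a $d$-dimensional link; so the DET value is the product of the per-bundle optima, and the theorem reduces to showing that no interleaving of swapping and purification protocols can exceed this product.

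For the upper bound I would argue by a potential/monotonicity induction on the progress of an arbitrary protocol built from the swapping POVMs [Eq.~\eqref{eq_swap}] and purifications [Eq.~\eqref{eq_purify}]. At a given stage the (stochastic) configuration is a series arrangement of bundles $S_1,\dots,S_r$, with $S_a$ carrying links $\{\boldsymbol{\kappa}_{a,t}\}_t$; assign to it the potential $\Phi=\prod_{a}C_d\bigl(\mathcal{P}(\bigotimes_t\boldsymbol{\kappa}_{a,t})\bigr)$. Initially $\Phi=\prod_iC_d(\boldsymbol{\mu}_i)=C_d(\boldsymbol{\nu})$, and at termination (a single $A$--$B$ link $\boldsymbol{\lambda}$) $\Phi=C_d(\mathcal{P}(\boldsymbol{\lambda}))=C_d(\boldsymbol{\lambda})$, because Eq.~\eqref{eq_p1} together with trace preservation [Eq.~\eqref{eq_p_trace}] forces $\mathcal{P}$ to be the identity on $d$-dimensional vectors. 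Hence, once $\mathbb{E}[\Phi]$ is shown non-increasing under each elementary move, we obtain $\sum_\alpha p_\alpha C^{\alpha}_d=\mathbb{E}[C_d(\boldsymbol{\lambda})]\le C_d(\boldsymbol{\nu})$, with equality attained by DET.

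The monotonicity I would establish move-by-move. A purification acts on one bundle viewed as a single bipartite state $\mathbf{w}_a=\bigotimes_t\boldsymbol{\kappa}_{a,t}$ and (in the LOCC limit) returns an ensemble $\{q_\beta,\mathbf{x}_\beta\}$ with $\sum_\beta q_\beta\mathbf{x}_\beta^{\downarrow}\oplus(0,\dots,0)\succ\mathbf{w}_a$ [Eq.~\eqref{theorem_single-link_1}]; convexity of $\mathcal{P}$ (Lemma~\ref{theorem_d_convex}), majorization extremity (Lemma~\ref{theorem_d_optimal}), and Schur-concavity of $C_d$ [Eqs.~\eqref{eq_concurrence_isotone} and \eqref{eq_concurrence_convex}] then give $\sum_\beta q_\beta C_d(\mathcal{P}(\mathbf{x}_\beta))\le C_d(\mathcal{P}(\mathbf{w}_a))$, so the factor of $S_a$ only shrinks. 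A swapping move at a relay, iterated over all parallel links there, merges two adjacent bundles $S_a,S_{a+1}$; for this I would invoke the $G$-concurrence computation behind Theorem~\ref{theorem_simple-series} — the bound $\sum p_\alpha C^{\alpha}_d\le C_d(\,\cdot\,)\,C_d(\,\cdot\,)$ of Eq.~\eqref{theorem_simple-series_2} applied to the swapped link — and then re-absorb the swapped link(s) into the merged bundle using Theorem~\ref{theorem_simple-parallel} again, concluding $\mathbb{E}[\text{new factor}]\le C_d(\mathcal{P}(\mathbf{w}_a))\,C_d(\mathcal{P}(\mathbf{w}_{a+1}))$. Together these would close the induction, and equality throughout is realized precisely by the DET order ``purify then swap''.

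The hard part is the swapping move: the instant one link of $S_a$ is swapped against one link of $S_{a+1}$, the intermediate configuration is strictly more general than a series chain of bundles (un-swapped links at the common relay coexist with a link spanning two hops), so $\Phi$ and its monotonicity must be defined and checked on this broader family, and Lemma~\ref{theorem_s_det} has to be combined with Lemma~\ref{theorem_d_optimal} inside one careful bookkeeping step rather than used separately. A cleaner route I would try first is to prove a normal-form lemma — leveraging Theorem~\ref{theorem_single-link} and the fact that purification is the universal single-bipartite-state LOCC conversion — stating that any admissible protocol can be re-ordered into ``purify every bundle completely to a (possibly probabilistic) $d$-dimensional link, then swap along the resulting chain'' without lowering the average $G$-concurrence; with that normal form in hand the bound collapses to one use of Theorem~\ref{theorem_simple-parallel} per bundle followed by one use of Eq.~\eqref{theorem_simple-series_2} per relay, and the passage from two bundles to $n$ bundles follows from the trace- and determinant-preserving properties of $\mathcal{S}$ exactly as in the simple-series proof (``the nice properties of $\mathcal{S}$ guarantee that the proof can be generalized to $n$ links'').
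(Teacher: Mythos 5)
There is a genuine gap, and it sits exactly where you flagged the ``hard part.'' The theorem is stated against \emph{generic} swapping protocols, which at the relay may apply a joint POVM to the full $d^{2}$-dimensional systems of the two bundles, i.e.\ $\boldsymbol{\Psi}^{\alpha}=\text{diag}(\boldsymbol{\lambda}_a\otimes\boldsymbol{\lambda}_b)^{1/2}\mathbf{X}^{\alpha}\,\text{diag}(\boldsymbol{\lambda}_c\otimes\boldsymbol{\lambda}_d)^{1/2}$ with $\mathbf{X}^{\alpha}$ a $d^{2}\times d^{2}$ matrix, followed by purification of the $d^{2}$-dimensional outcome down to $d$ dimensions. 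Your potential argument only analyzes swaps ``iterated over all parallel links,'' i.e.\ tensor products of per-link POVMs (the nested-repeater structure), which is a strict subclass; and your merging step invokes the bound of Eq.~\eqref{theorem_simple-series_2}, which is proved for $d$-dimensional links where $C_d$ of the outcome is literally $d|\det\boldsymbol{\Psi}^{\alpha}/\|\boldsymbol{\Psi}^{\alpha}\|_2|^{2/d}$ and determinant multiplicativity plus AM--GM close the argument. Once the swap output is $d^{2}$-dimensional and must be compressed by $\mathcal{P}$, the relevant quantity is $\det(\mathcal{P}(\sigma^{2}(\boldsymbol{\Psi}^{\alpha})))$, and $\det$ does not commute with $\mathcal{P}$: the paper needs the Gel'fand--Naimark log-majorization [Eq.~\eqref{theorem_parallel-then-series_1}] transferred \emph{through} $\mathcal{P}$ by Lemma~\ref{theorem_d_log}, whose proof rests on the special reverse AM--GM inequality of Appendix~\ref{sec_reverse}. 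Nothing in your outline supplies a substitute for this step; Theorem~\ref{theorem_simple-parallel} and Lemma~\ref{theorem_d_optimal} control majorization of the purified output of a \emph{fixed} bundle, not the determinant of the purification of a state produced by a joint measurement straddling two bundles.

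Your proposed fallback, the ``normal-form lemma'' (reorder any admissible protocol into purify-each-bundle-then-swap without lowering the average $G$-concurrence), is essentially a restatement of the theorem itself: the whole content of Theorem~\ref{theorem_parallel-then-series} is that interposing a generic $d^{2}$-dimensional swap before purification cannot beat purify-then-swap, and you offer no proof of the reordering beyond citing Theorems~\ref{theorem_simple-series} and~\ref{theorem_simple-parallel}, which, as above, do not apply to that configuration. The correct setup (the paper's) is simpler than your stage-by-stage induction: reduce to four links, write the most general relay operation as a single $d^{2}$-dimensional POVM, bound $\ln\sigma^{2}(\boldsymbol{\Psi}^{\alpha})$ by Gel'fand--Naimark, push the weak log-majorization through $\mathcal{P}$ via Lemma~\ref{theorem_d_log} to get $\det(\mathcal{P}(\sigma^{2}(\boldsymbol{\Psi}^{\alpha})))\le\det(\mathcal{P}(\boldsymbol{\lambda}_a\otimes\boldsymbol{\lambda}_b))\det(\mathcal{P}(\sigma^{2}(\mathbf{X}^{\alpha})))\det(\mathcal{P}(\boldsymbol{\lambda}_c\otimes\boldsymbol{\lambda}_d))$, and only then run the AM--GM/completeness argument as in Eq.~\eqref{theorem_simple-series_2}. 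The first paragraph of your proposal (DET value equals $\prod_i C_d(\boldsymbol{\mu}_i)$ via Lemma~\ref{theorem_s_det}) is correct, and your purification-move bound is fine, but without an analogue of Lemma~\ref{theorem_d_log} the swapping/merging step, and hence the theorem, is not established.
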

\begin{proof}
	It suffices to prove the theorem for four links: $\boldsymbol{\lambda}_a$ and $\boldsymbol{\lambda}_b$ are connected in parallel, $\boldsymbol{\lambda}_c$ and $\boldsymbol{\lambda}_d$ are connected in parallel, and then the two parallel groups are connected in series.
	Now, let $\boldsymbol{\Psi}^{\alpha}=\text{diag}(\boldsymbol{\lambda}_a\otimes \boldsymbol{\lambda}_b)^{1/2} \mathbf{X}^{\alpha} \text{diag}(\boldsymbol{\lambda}_c\otimes \boldsymbol{\lambda}_d)^{1/2}$, as used in the entanglement swapping protocol [Eq.~\eqref{eq_swap}]. The difference, however, is that $\boldsymbol{\Psi}^{\alpha}$ is not a $d\times d$ matrix, but a $d^2\times d^2$ matrix. This is because we are considering a generic swapping protocol, which must be applied to the full $d^2$ dimensions. Then, the final $d$-dimensional outcome $\alpha$ will be given by $\mathcal{P}(\sigma^2(\boldsymbol{\Psi}^{\alpha}))$.
	
	By the Gelfand-Naimark theorem on singular values (Theorem III.4.5 of Ref.~\cite{matrix-anal}), we have
	\begin{equation}
	\label{theorem_parallel-then-series_1}
	\ln \sigma^2(\boldsymbol{\Psi}^{\alpha}) \prec \ln \left(\boldsymbol{\lambda}_a\otimes \boldsymbol{\lambda}_b\right)^{\downarrow}+\ln \sigma^2(\mathbf{X}^{\alpha})+\ln \left(\boldsymbol{\lambda}_c\otimes \boldsymbol{\lambda}_d\right)^{\downarrow}.
	\end{equation}
	Thus, by Lemma~\ref{theorem_d_log} we have, in particular,
	\begin{eqnarray*}
		&&\det(\mathcal{P}(\sigma^2(\boldsymbol{\Psi}^{\alpha})))\nonumber\\
		&\le& 
		\det(\mathcal{P}(\boldsymbol{\lambda}_a\otimes \boldsymbol{\lambda}_b))
		\det(\mathcal{P}(\sigma^2(\mathbf{X}^{\alpha})))
		\det(\mathcal{P}(\boldsymbol{\lambda}_c\otimes \boldsymbol{\lambda}_d)).%
	\end{eqnarray*}
	Hence, similar to Eq.~\eqref{theorem_simple-series_2},
	\begin{eqnarray}
	\label{theorem_parallel-then-series_3}
	&&\sum_\alpha p_\alpha C^{\alpha}_d= d \sum_\alpha \det(\mathcal{P}(\sigma^2(\boldsymbol{\Psi}^{\alpha})))^{{1}/{d}} \nonumber\\
	\le&& d^{-2}C_d(\mathcal{P}(\boldsymbol{\lambda}_a\otimes \boldsymbol{\lambda}_b)) C_d(\mathcal{P}(\boldsymbol{\lambda}_c\otimes \boldsymbol{\lambda}_d))\sum_\alpha \tr(\mathcal{P}(\sigma^2(\mathbf{X}^{\alpha})))\nonumber\\
	=&& d^{-2}C_d(\mathcal{P}(\boldsymbol{\lambda}_a\otimes \boldsymbol{\lambda}_b)) C_d(\mathcal{P}(\boldsymbol{\lambda}_c\otimes \boldsymbol{\lambda}_d))\sum_\alpha \tr(\mathbf{X}^{\alpha\dagger}\mathbf{X}^{\alpha})\nonumber\\
	=&& C_d(\mathcal{P}(\boldsymbol{\lambda}_a\otimes \boldsymbol{\lambda}_b)) C_d(\mathcal{P}(\boldsymbol{\lambda}_c\otimes \boldsymbol{\lambda}_d)),\qquad
	\end{eqnarray}
	where the AM--GM inequality is used in the second step, Eq.~\eqref{eq_p_trace} is used in the third step, and Eq.~\eqref{eq_povm} is used in the last step. Again, Eq.~\eqref{theorem_parallel-then-series_3} indicates that the final average $G$-concurrence will never be greater than the product of the $G$-concurrences of $\mathcal{P}(\boldsymbol{\lambda}_a\otimes \boldsymbol{\lambda}_b)$ and $\mathcal{P}(\boldsymbol{\lambda}_c\otimes \boldsymbol{\lambda}_d)$, which is equal to $C_d(\mathcal{S} (\mathcal{P}(\boldsymbol{\lambda}_a\otimes \boldsymbol{\lambda}_b), \mathcal{P}(\boldsymbol{\lambda}_c\otimes \boldsymbol{\lambda}_d)))$ [Eq.~\eqref{theorem_simple-series_3}].
\end{proof}

\emph{Remark.---}At first glimpse, it seems that we would have more degrees of freedom to design a swapping protocol directly on the full $d^2$-dimensional $\boldsymbol{\lambda}_a\otimes \boldsymbol{\lambda}_b$ and $\boldsymbol{\lambda}_c\otimes \boldsymbol{\lambda}_d$, and we would get a better result in terms of $G$-concurrence. Interestingly though, Theorem~\ref{theorem_parallel-then-series} says that the best strategy is to first reduce them to $d$-dimensional $\mathcal{P}(\boldsymbol{\lambda}_a\otimes \boldsymbol{\lambda}_b)$ and $\mathcal{P}(\boldsymbol{\lambda}_c\otimes \boldsymbol{\lambda}_d)$ and then apply a swapping protocol on them. Our result implies a strong practical convenience in entanglement transmission and that it is in fact unnecessary to store all $d^m$ dimensions for $m$ parallel links. We can safely concentrate and produce a $d$-dimensional link from the $m$ parallel links and then swap it with other links connected in series. The final $G$-concurrence is guaranteed to be optimal.

Note particularly that the nested repeater protocol~\cite{q-repeater_bdcz98} is designed to swap $\boldsymbol{\lambda}_a\otimes \boldsymbol{\lambda}_b$ and $\boldsymbol{\lambda}_c\otimes \boldsymbol{\lambda}_d$ using two independent swapping protocols on $a,c$ and $b,d$, respectively, before applying concentration protocols on the outcomes. However, the tensor product of the two swapping protocols is just a special case of a $d^2$-dimensional swapping protocol. Thus the nested concentration protocol is neither the optimal nor the convenient approach to produce the final average $G$-concurrence.

\subsection{Series-then-Parallel}
For series-then-parallel network topology [Fig.~\ref{fig_seri-then-para}], the final average $k$-concurrence between A and B is \emph{not} maximized by the series and parallel rules but instead by some generic nondeterministic swapping and concentration protocols. 

A counterexample can be easily constructed. Let $d=2$ and $\boldsymbol{\lambda}_a=\boldsymbol{\lambda}_b=\boldsymbol{\lambda}_c=\left(0.9,0.1\right)$. $\boldsymbol{\lambda}_a$ and $\boldsymbol{\lambda}_b$ are connected in series and then with $\boldsymbol{\lambda}_c$ in parallel. The series and parallel rules (Table~\ref{table_qdet-rules}) yield a deterministic final outcome %
$\boldsymbol{\lambda}=\left(9\left(25+4\sqrt{34}\right)/500, 1-9\left(25+4\sqrt{34}\right)/500\right)\approx\left(0.870, 0.130\right)$, the $G$-concurrence of which is $\approx0.673$.
However, applying a special nondeterministic swapping protocol (the ZZ basis~\cite{QEP-detail_pcalw08}) on $\boldsymbol{\lambda}_a$ and $\boldsymbol{\lambda}_b$, we derive four probabilistic outcomes, $\boldsymbol{\lambda}_1=\boldsymbol{\lambda}_2=\left(81/82,1/82\right)$ and $\boldsymbol{\lambda}_3=\boldsymbol{\lambda}_4=\left(1/2,1/2\right)$, with probabilities $p_1=p_2=0.41$ and $p_3=p_4=0.09$, respectively. The final average $G$-concurrence is $2 p_1 C_d(\mathcal{P}(\boldsymbol{\lambda}_c\otimes \boldsymbol{\lambda}_1))+2 p_3 C_d(\mathcal{P}(\boldsymbol{\lambda}_c\otimes \boldsymbol{\lambda}_3))\approx0.695$. Thus, the series and parallel rules are not optimal. The reason behind this is that the inequality in Eq.~\eqref{theorem_simple-series_5} does not hold in general. For this example, we clearly have $ p_1 \boldsymbol{\lambda}_1^{\downarrow}+ p_2 \boldsymbol{\lambda}_2^{\downarrow}+p_3 \boldsymbol{\lambda}_3^{\downarrow}+ p_4 \boldsymbol{\lambda}_4^{\downarrow}=\left(0.819,0.181\right) \nsucc \mathcal{S}(\boldsymbol{\lambda}_a,\boldsymbol{\lambda}_b)\approx\left(0.966, 0.034\right)$.

\subsection{Series-Parallel}
Finally, for arbitrary series-parallel network topology [Fig.~\ref{fig_seri-para}], we already know from the above that no optimality can be observed for the final average $k$-concurrence. That being said, for qubits  ($d=2$), a special theorem of optimality can indeed be made in terms of the final \emph{worst-case} $G$-concurrence (Appendix~\ref{sec_concurrence}). We prove the following theorem.
\begin{theorem}
	\label{theorem_series-parallel}
	(For $d=2$ only.) Given a series-parallel QN, 
	compared with generic entanglement swapping and concentration protocols of probabilistic outcomes,
	the series and parallel rules (Table~\ref{table_qdet-rules}) produce the optimal worst-case $G$-concurrence between A and B.
\end{theorem}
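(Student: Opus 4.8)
The plan is a structural induction on the series--parallel decomposition of the network $N$ between A and B, carrying the hypothesis $H(N)$: \emph{every LOCC protocol applied to $N$ produces an ensemble $\{(p_\alpha,\boldsymbol{\lambda}_\alpha)\}$ of A--B states with $\min_\alpha C_d(\boldsymbol{\lambda}_\alpha)\le C_d(\boldsymbol{\lambda}^*(N))$}, where $\boldsymbol{\lambda}^*(N)$ is the deterministic output of the series and parallel rules (Table~\ref{table_qdet-rules}). Since $d=2$, the $G$-concurrence $C_2(\boldsymbol{\lambda})=2\sqrt{\lambda_1\lambda_2}$ is strictly decreasing in $\lambda_1^{\downarrow}$ and majorization of two-component vectors is a total order, so $H(N)$ is equivalent to $\max_\alpha\lambda_{\alpha,1}^{\downarrow}\ge\lambda_1^*(N)$; I will use whichever form is convenient, because the series rule acts multiplicatively on $C_d$ [Eq.~\eqref{theorem_simple-series_3}] while the parallel rule acts almost multiplicatively on the top Schmidt number (for $d=2$ one checks from Eq.~\eqref{eq_p} that $\left[\mathcal{P}(\mathbf{x}\otimes\mathbf{y})\right]_1=\max\{x_1^{\downarrow}y_1^{\downarrow},\tfrac12\}$). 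The base case $N=$ single link $\boldsymbol{\lambda}$ is immediate: the locality bound~\eqref{theorem_single-link_1} gives $\sum_\alpha p_\alpha\lambda_{\alpha,1}^{\downarrow}\ge\lambda_1^{\downarrow}$, hence $\max_\alpha\lambda_{\alpha,1}^{\downarrow}\ge\lambda_1^{\downarrow}=\lambda_1^*$.

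For the inductive step, a two-terminal series--parallel network is a single link, a series composition of $N_1$ and $N_2$ at a relay R, or a parallel composition of $N_1$ and $N_2$. In either composite case I first argue (step~(i)) that an arbitrary protocol on $N$ may be reorganized, without changing the relevant extremum, so that it processes the links of $N_1$ into an ensemble $\{\boldsymbol{\lambda}^{(1)}_a\}$, processes the links of $N_2$ into $\{\boldsymbol{\lambda}^{(2)}_b\}$, and finally applies a ``joining'' LOCC to $\boldsymbol{\lambda}^{(1)}_a\otimes\boldsymbol{\lambda}^{(2)}_b$ for each branch $(a,b)$ (with later phases possibly conditioned on earlier outcomes). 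Then (step~(ii)) I bound the joining step and interchange the extremum with a product. In the series case the joining is an LOCC on the chain A--R--B with links $\boldsymbol{\lambda}^{(1)}_a,\boldsymbol{\lambda}^{(2)}_b$; by Theorem~\ref{theorem_single-link} (pre-processing of either link cannot raise the average $G$-concurrence), Theorem~\ref{theorem_simple-series} [Eq.~\eqref{theorem_simple-series_2} bounds the swapping step], and monotonicity of $C_d$ under post-processing, its output sub-ensemble $\{\boldsymbol{\lambda}_{ab\gamma}\}$ obeys $\min_\gamma C_d(\boldsymbol{\lambda}_{ab\gamma})\le\sum_\gamma p_{ab\gamma}C_d(\boldsymbol{\lambda}_{ab\gamma})\le C_d(\boldsymbol{\lambda}^{(1)}_a)\,C_d(\boldsymbol{\lambda}^{(2)}_b)$. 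Taking $\min$ over $\gamma$, then $b$, then $a$, and using $\min_{a,b}f(a)g(b)=(\min_a f(a))(\min_b g(b))$ for nonnegative $f,g$ together with $H(N_1)$, $H(N_2)$, gives $\min_{a,b,\gamma}C_d(\boldsymbol{\lambda}_{ab\gamma})\le C_d(\boldsymbol{\lambda}^*(N_1))\,C_d(\boldsymbol{\lambda}^*(N_2))=C_d(\boldsymbol{\lambda}^*(N))$, the last equality from Eq.~\eqref{theorem_simple-series_3} and the associativity~\eqref{eq_s_assoc} of $\mathcal{S}$ (valid for $d=2$).

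In the parallel case the joining is an LOCC on the single $d^2$-dimensional A--B state $\boldsymbol{\lambda}^{(1)}_a\otimes\boldsymbol{\lambda}^{(2)}_b$; by the padded locality bound used in the proof of Theorem~\ref{theorem_simple-parallel}, its output sub-ensemble satisfies $\sum_\gamma p_{ab\gamma}\boldsymbol{\lambda}_{ab\gamma}^{\downarrow}\oplus(0,\dots,0)\succ\boldsymbol{\lambda}^{(1)}_a\otimes\boldsymbol{\lambda}^{(2)}_b$, so Lemma~\ref{theorem_d_optimal} gives $\sum_\gamma p_{ab\gamma}\boldsymbol{\lambda}_{ab\gamma}^{\downarrow}\succ\mathcal{P}(\boldsymbol{\lambda}^{(1)}_a\otimes\boldsymbol{\lambda}^{(2)}_b)$; comparing first components, $\max_\gamma\lambda_{ab\gamma,1}^{\downarrow}\ge\sum_\gamma p_{ab\gamma}\lambda_{ab\gamma,1}^{\downarrow}\ge\left[\mathcal{P}(\boldsymbol{\lambda}^{(1)}_a\otimes\boldsymbol{\lambda}^{(2)}_b)\right]_1=\max\{\lambda^{(1)\downarrow}_{a,1}\lambda^{(2)\downarrow}_{b,1},\tfrac12\}$. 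Taking $\max$ over $\gamma$, then $b$, then $a$, and using $\max_{a,b}f(a)g(b)=(\max_a f(a))(\max_b g(b))$ for nonnegative $f,g$, $H(N_1)$, $H(N_2)$, and the trivial bound $\lambda_{ab\gamma,1}^{\downarrow}\ge\tfrac12$, yields $\max_{a,b,\gamma}\lambda_{ab\gamma,1}^{\downarrow}\ge\max\{\lambda_1^*(N_1)\lambda_1^*(N_2),\tfrac12\}=\left[\mathcal{P}(\boldsymbol{\lambda}^*(N_1)\otimes\boldsymbol{\lambda}^*(N_2))\right]_1=\lambda_1^*(N)$, using the associativity~\eqref{eq_p_assoc} of $\mathcal{P}$. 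This closes the induction and establishes the theorem.

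The hard part is step~(i): showing that an arbitrary LOCC protocol on a series (resp.\ parallel) composition may be replaced, without improving its worst case, by one that first processes $N_1$ and $N_2$ separately and only then joins. This is plausible because operations on $N_1$'s carriers and on $N_2$'s carriers act on disjoint tensor factors and hence commute, and the only nodes shared by $N_1$ and $N_2$ are the terminals---the relay R in the series case, or A and B in the parallel case---whose genuinely joint operations are precisely the joining step. The two delicate points are (a) that a shared terminal's joint operations performed early can be deferred to the joining phase, and (b) that conditioning later phases on earlier classical outcomes cannot help: for the \emph{worst-case} figure of merit the latter follows by stratifying the $\min$/$\max$ over the conditioning outcomes and using that the deterministic target $C_d(\boldsymbol{\lambda}^*(N_i))$ is outcome-independent, so the interchanges above survive verbatim. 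A cleaner but equivalent framing is to note first that the largest worst-case $G$-concurrence over \emph{all} protocols on $N$ equals the largest $G$-concurrence over \emph{deterministic} protocols---append to each branch of a worst-case-$C$ protocol the Nielsen-deterministic conversion to the unique two-dimensional state of concurrence $C$, and conversely a deterministic protocol is its own worst case---and then run the same induction with every intermediate state pure; the decomposition of step~(i) remains the crux in either formulation.
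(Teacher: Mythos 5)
Your per-block estimates are sound and are essentially the paper's ingredients: for $d=2$ the total order of two-component majorization, the bound $\sum_\gamma p_\gamma C_d \le C_d(\boldsymbol{\lambda}_a)C_d(\boldsymbol{\lambda}_b)$ for the series block [Eq.~\eqref{theorem_simple-series_2}], Lemma~\ref{theorem_d_optimal} for the parallel block, and isotonicity to propagate a bad branch. However, your argument has a genuine gap exactly where you flag it: step~(i), the claim that an arbitrary protocol on a series (or parallel) composition can be reorganized, without weakening its worst case, into ``process $N_1$ separately, process $N_2$ separately, then join.'' This is not a routine commutation argument. The shared terminal (R in the series case, A and B in the parallel case) may perform a \emph{joint} operation on carriers belonging to both subnetworks early in the protocol; after such an operation the joint state is no longer a product of an $N_1$-part and an $N_2$-part, and all subsequent rounds in both subnetworks may be classically conditioned on its outcome, so there is no well-defined ``ensemble produced by processing $N_1$'' to which the inductive hypothesis $H(N_1)$ applies. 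Your proposed fix --- deferring the terminal's joint operations to a final joining phase and stratifying over conditioning outcomes --- is asserted, not proved, and deferral generally changes the protocol's capabilities precisely because later local steps depend on the deferred outcome. Your closing reduction to deterministic protocols does not remove this obstruction, as you yourself note. Since the entire induction hinges on this decomposition, the proof is incomplete as it stands; moreover, by quantifying $H(N)$ over \emph{all} LOCC protocols you are attempting a stronger statement than the theorem asserts, which compares only against generic swapping and purification protocols applied over the series-parallel structure.

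For contrast, the paper sidesteps the decomposition problem: it considers a protocol in which some simple series or parallel block is handled by a generic (probabilistic) swapping or purification protocol, invokes Theorems~\ref{theorem_simple-series} and~\ref{theorem_simple-parallel} to exhibit at that block one outcome whose $G$-concurrence is no larger than the deterministic rule's output, uses $d=2$ to upgrade this to a majorization $\boldsymbol{\lambda}_\alpha \succ \boldsymbol{\lambda}_{\rm det}$, and then uses isotonicity [Eqs.~\eqref{eq_s_isotone} and~\eqref{eq_p_isotone}] to push that single branch through the remaining deterministic rules, so it stays the worst case at the output. If you restrict your comparison class in the same way (each block treated by some swapping/purification protocol, possibly probabilistic), your induction closes with the isotonicity argument replacing step~(i), and the rest of your calculation goes through.
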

\begin{proof}
	From Theorems~\ref{theorem_simple-series}~and~\ref{theorem_simple-parallel} we know that, for simple series and simple parallel network topologies, given a generic ensemble of probabilistic outcomes $\alpha$, there must be at least one $\boldsymbol{\lambda}_\alpha$ of which the $G$-concurrence is less than (at most equal to) the $G$-concurrence which the series and parallel rules would produce, respectively. Since $d=2$, this $\boldsymbol{\lambda}_\alpha$ majorizes what the series and parallel rules would produce. Then, given that the series and parallel rules are isotone [Eqs.~\eqref{eq_s_isotone}~and~\eqref{eq_p_isotone}], the final probabilistic outcome related to $\boldsymbol{\lambda}_\alpha$ will remain as the worst case when the series and parallel rules are applied over the full series-parallel network.
\end{proof}
\emph{Remark.---}This special optimality on the worst-case $G$-concurrence for arbitrary series-parallel network topology  with $d=2$ was first discussed in Ref.~\cite{conpt_mgh21}. When $d>2$, Theorem~\ref{theorem_series-parallel} does not hold. This is because other deterministic series or parallel rules different from those in Table~\ref{table_qdet-rules} can exist, and such a rule can sometimes deterministically produce a $\boldsymbol{\lambda}$ with greater $G$-concurrence (which is only possible when the series-then-parallel network topology is involved); hence the worst-case $G$-concurrence is also greater.

{
\subsection{Comparison with Generalized CEP}
\label{sec_cep}
After obtaining a partially entangled state as the DET outcome, it is always possible to further convert it to a maximally entangled state $\boldsymbol{\lambda}_{\max}=\left(1/d,1/d,\cdots,1/d\right)$ with the help of probabilistic entanglement concentration [Eq.~\eqref{eq_concentrate}], of which the success probability represents another useful
{figure of merit} than concurrences for QN. We adopt this figure of merit to directly compare the DET scheme with the benchmark result of the classical entanglement percolation (CEP) (Sec.~\ref{sec_scheme}).
For qubits, it has been proved that deterministic protocols yield higher success probability than CEP~\cite{conpt_mgh21}. However, for general qudits this is not obvious. Here, as our first step, we generalize the idea of CEP to qudits, denoting $p_{\boldsymbol{\lambda}_\to\boldsymbol{\lambda}_{\max}}$ [Eq.~\eqref{eq_concentrate}] as the success probability of converting each QN link $\boldsymbol{\lambda}$ into $\boldsymbol{\lambda}_{\max}$. The overall probability of connecting two distant nodes by a path of maximally entangled states depends on the network topology, which, if series-parallel, allows the probability to be decomposed and calculated by similar series and parallel rules (Table~\ref{table_cep-rules})~\cite{conpt_mgh21}.
\begin{table}[h]
	\centering 
	\caption{\label{table_cep-rules}Series and parallel rules for generalized CEP.\hfill\hfill}
	\begin{tabular}{p{1.5cm}| p{6.7cm}}
		\hline\hline
		Series & 
		$p=p_{\boldsymbol{\lambda}_1\to\boldsymbol{\lambda}_{\max}}p_{\boldsymbol{\lambda}_2\to\boldsymbol{\lambda}_{\max}}\cdots 
        p_{\boldsymbol{\lambda}_n\to\boldsymbol{\lambda}_{\max}}$
		\\
		\hline
		Parallel &  $\left(1-p\right)=\left(1-p_{\boldsymbol{\lambda}_1\to\boldsymbol{\lambda}_{\max}}\right)\cdots 
        \left(1-p_{\boldsymbol{\lambda}_n\to\boldsymbol{\lambda}_{\max}}\right)$
        \\
		\hline\hline
	\end{tabular}
\end{table}
}

{
To compare it with DET, note that for any $\boldsymbol{\lambda}_a$ and $\boldsymbol{\lambda}_b$,
\begin{eqnarray}
\label{eq_cep_1}
p_{\boldsymbol{\lambda}_a\to\boldsymbol{\lambda}_{\max}}\leq p_{\boldsymbol{\lambda}_b\to\boldsymbol{\lambda}_{\max}} \text{ if } \boldsymbol{\lambda}_a\succ \boldsymbol{\lambda}_b.
\end{eqnarray}
Also, for any $p_{\boldsymbol{\lambda}\to\boldsymbol{\lambda}_{\max}}$ [Eq.~\eqref{eq_concentrate}], it is easy to check that
\begin{eqnarray}
\label{eq_cep_2}
\boldsymbol{\lambda}\prec\left(p_{\boldsymbol{\lambda}\to\boldsymbol{\lambda}_{\max}}\right) \boldsymbol{\lambda}_{\max}+\left(1-p_{\boldsymbol{\lambda}\to\boldsymbol{\lambda}_{\max}}\right)\boldsymbol{\lambda}_{\min},
\end{eqnarray}
where $\boldsymbol{\lambda}_{\min}=\left(1,0,\cdots,0\right)$.
Now w.l.o.g.~considering the series rule of DET applied to two links $\boldsymbol{\lambda}_a$ and $\boldsymbol{\lambda}_b$, %
Eq.~\eqref{eq_cep_2} leads to
\begin{eqnarray}
    \label{eq_s_cep}
    \!\mathcal{S}(\boldsymbol{\lambda}_a,\boldsymbol{\lambda}_b)&\prec&
    \mathcal{S}(\left(p_{\boldsymbol{\lambda}_a\to\boldsymbol{\lambda}_{\max}}\right) \boldsymbol{\lambda}_{\max}+\left(1-p_{\boldsymbol{\lambda}_a\to\boldsymbol{\lambda}_{\max}}\right)\boldsymbol{\lambda}_{\min},\nonumber\\
    &&\quad\; \left(p_{\boldsymbol{\lambda}_b\to\boldsymbol{\lambda}_{\max}}\right) \boldsymbol{\lambda}_{\max}+\left(1-p_{\boldsymbol{\lambda}_b\to\boldsymbol{\lambda}_{\max}}\right)\boldsymbol{\lambda}_{\min})\nonumber\\
    &\prec&\left(p_{\boldsymbol{\lambda}_a\to\boldsymbol{\lambda}_{\max}}\right)\!
    \left(p_{\boldsymbol{\lambda}_b\to\boldsymbol{\lambda}_{\max}}\right)\mathcal{S}(\boldsymbol{\lambda}_{\max},\boldsymbol{\lambda}_{\max})\nonumber\\
    &+&\left(p_{\boldsymbol{\lambda}_a\to\boldsymbol{\lambda}_{\max}}\right)\!
    \left(1-p_{\boldsymbol{\lambda}_b\to\boldsymbol{\lambda}_{\max}}\right)\mathcal{S}(\boldsymbol{\lambda}_{\max},\boldsymbol{\lambda}_{\min})\nonumber\\
    &+&\left(1-p_{\boldsymbol{\lambda}_a\to\boldsymbol{\lambda}_{\max}}\right)\! \left(p_{\boldsymbol{\lambda}_b\to\boldsymbol{\lambda}_{\max}}\right)\mathcal{S}(\boldsymbol{\lambda}_{\min},\boldsymbol{\lambda}_{\max})\nonumber\\
    &+&\!\left(1-p_{\boldsymbol{\lambda}_a\to\boldsymbol{\lambda}_{\max}}\right)\!
    \left(1-p_{\boldsymbol{\lambda}_b\to\boldsymbol{\lambda}_{\max}}\right)\mathcal{S}(\boldsymbol{\lambda}_{\min},\boldsymbol{\lambda}_{\min})\nonumber\\
    &=&p_{\boldsymbol{\lambda}_a\to\boldsymbol{\lambda}_{\max}}
    p_{\boldsymbol{\lambda}_b\to\boldsymbol{\lambda}_{\max}}\boldsymbol{\lambda}_{\max}\nonumber\\
    &+&\left(1-p_{\boldsymbol{\lambda}_a\to\boldsymbol{\lambda}_{\max}}
    p_{\boldsymbol{\lambda}_b\to\boldsymbol{\lambda}_{\max}}\right)\boldsymbol{\lambda}_{\min},
\end{eqnarray}
where used in the first step is the isotone property of the swapping function $\mathcal{S}$  [Eq.~\eqref{eq_s_isotone}], used in the second step is convexity (Lemma~\ref{theorem_s_convex}), and used in the last step are the facts that $\mathcal{S}(\boldsymbol{\lambda}_{\max},\boldsymbol{\lambda})=\boldsymbol{\lambda}$ and $\mathcal{S}(\boldsymbol{\lambda}_{\min},\boldsymbol{\lambda})=\boldsymbol{\lambda}_{\min}$. Hence, by Eq.~\eqref{eq_cep_1}, the probability of converting the LHS of Eq.~\eqref{eq_s_cep} to 
$\boldsymbol{\lambda}_{\max}$ is never less than the probability of doing so for the RHS, which is exactly the generalized CEP result $p_{\boldsymbol{\lambda}_a\to\boldsymbol{\lambda}_{\max}}
    p_{\boldsymbol{\lambda}_b\to\boldsymbol{\lambda}_{\max}}$ (Table~\ref{table_cep-rules}).
}

{
For the parallel rule of DET applied to two links $\boldsymbol{\lambda}_a$ and $\boldsymbol{\lambda}_b$, the proof is almost identical by noticing that the concentration function $\mathcal{P}$ has the same isotone [Eq.~\eqref{eq_p_isotone}] and convexity (Lemma~\ref{theorem_d_convex}) properties, and that $\mathcal{P}(\boldsymbol{\lambda}_{\max}\otimes\boldsymbol{\lambda})=\boldsymbol{\lambda}_{\max}$ and $\mathcal{P}(\boldsymbol{\lambda}_{\min}\otimes\boldsymbol{\lambda})=\boldsymbol{\lambda}$. Thus, we have
\begin{eqnarray}
    \label{eq_p_cep}
    &&\mathcal{P}(\boldsymbol{\lambda}_a\otimes\boldsymbol{\lambda}_b)\prec
    \left(1-p_{\boldsymbol{\lambda}_a\to\boldsymbol{\lambda}_{\max}}\right)
    \left(1-p_{\boldsymbol{\lambda}_b\to\boldsymbol{\lambda}_{\max}}\right)\boldsymbol{\lambda}_{\min}\nonumber\\
    &+&\left(p_{\boldsymbol{\lambda}_a\to\boldsymbol{\lambda}_{\max}}+p_{\boldsymbol{\lambda}_b\to\boldsymbol{\lambda}_{\max}}
    -p_{\boldsymbol{\lambda}_a\to\boldsymbol{\lambda}_{\max}}
    p_{\boldsymbol{\lambda}_b\to\boldsymbol{\lambda}_{\max}}\right)\boldsymbol{\lambda}_{\max}.\nonumber\\
\end{eqnarray}
Together, Eqs.~\eqref{eq_s_cep}~and~\eqref{eq_p_cep} indicate that the DET yields higher success probability in terms of both the series and parallel rules. Since both $\mathcal{S}$ and $\mathcal{P}$ are isotones, it is not difficult to extrapolate the proof to any series-parallel QN by iteration, thereby showing that the DEP improves on CEP across arbitrary series-parallel network topology.
}

\section{Quantum Circuit Implementation}
\label{sec_implement}

In this section, we demonstrate the experimental feasibility of the DET scheme using IBM's quantum computing platform \textit{Qiskit}~\cite{qiskit_a21}, showing how to design quantum circuits for the corresponding protocols on qubits ($d=2$).
The performance of the circuits is numerically tested on %
a 5-qubit noisy simulator with its parameters exactly matching the realistic scenario of IBM's quantum hardware. %

Compared with Bell-state-based schemes, the partially-entangled-state-based DET scheme has notable pros and cons: On the one hand, all inputs in DET are only {partially entangled} {and likely easier to generate,} and hence the fidelity will be, in general, higher than using Bell states; on the other hand, the circuit parameters explicitly depend on the inputs, and thus the initial states must be properly estimated by, {e.g.,~heralding~\cite{q-netw_hkmsvtmh18} or tomography~\cite{q-tomogr_bcb21}} before executing the scheme. %
{Since we have shown that any obtained partially entangled qubits can be converted into a singlet for Bell-state-based tasks with higher conversion probability than CEP, 
we believe that our demonstration of DET, representing a thorough study of general operations on partially entangled states, is not only a proof of principle but also practically useful.}

\begin{figure*}[t!]
	\centering
	\includegraphics[height=1.8in]{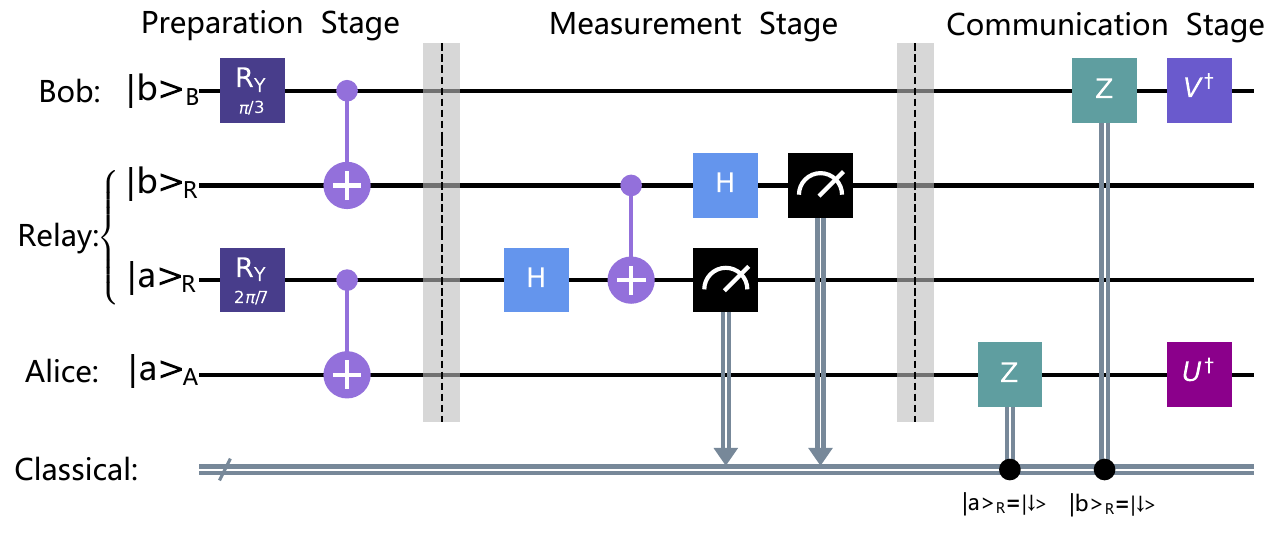}
	\caption{\label{fig_seri_circuit}Deterministic entanglement swapping protocol that implements the series rule (Table~\ref{table_qdet-rules}). Different stages are separated by dashed lines. From left to right: preparation stage (preparing two partially entangled pure states as inputs), %
    measurement stage (transforming into a set of new basis), and communication stage (where Alice and Bob apply local transformations according to the measurement results of the Relay).
		\hfill\hfill}
\end{figure*}

\begin{figure}[t!]
	\centering
	\includegraphics[height=1.6in]{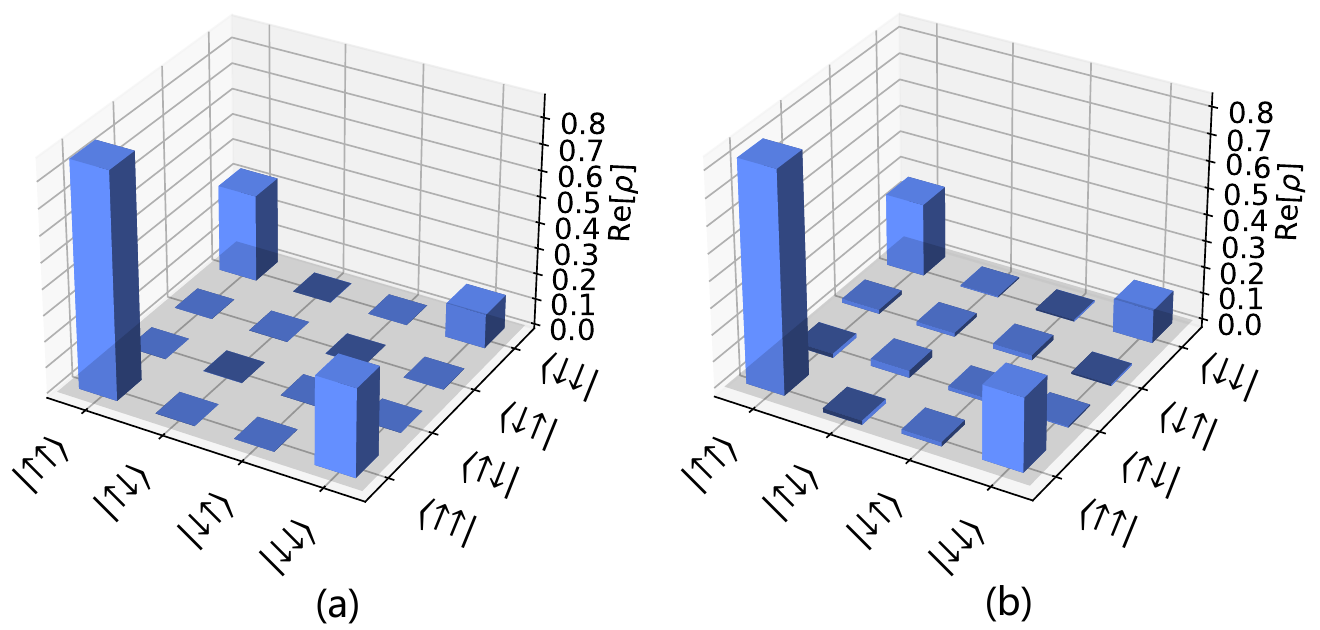}
	\caption{\label{fig_seri_density_matrix}Outcome of the series rule (by Fig.~\ref{fig_seri_circuit}). (a) The noiseless outcome is given by a deterministic pure state $\rho_\text{AB}=\left(0.932\left|\uparrow\uparrow\right\rangle+0.363\left|\downarrow\downarrow\right\rangle\right)\left(0.932\left\langle\uparrow\uparrow\right|+0.363\left\langle\downarrow\downarrow\right|\right)$. 
	(b) The noisy outcome differs from the theoretical value by a fidelity of $92.4\%$, tested on an IBM quantum computation model (``Manila'').
	The imaginary part of $\rho_\text{AB}$ is zero.
		\hfill\hfill}
\end{figure}

\subsection{Series Rule}
The circuit that implements the swapping function $\mathcal{S}(\mathbf{x},\mathbf{y})$ resembles the common design of the Bell-state-based swapping protocol~\cite{entangle-swap_zzhe93}. The main difference is that the Relay (R) measures its qubits in, instead of the Bell basis, a different set of maximally entangled basis,
\begin{eqnarray*}
	\begin{cases}
		\frac{1}{2}\left(+\left|\uparrow\uparrow\right\rangle+\left|\uparrow\downarrow\right\rangle+\left|\downarrow\uparrow\right\rangle-\left|\downarrow\downarrow\right\rangle\right)\\
		\frac{1}{2}\left(+\left|\uparrow\uparrow\right\rangle-\left|\uparrow\downarrow\right\rangle+\left|\downarrow\uparrow\right\rangle+\left|\downarrow\downarrow\right\rangle\right)\\
		\frac{1}{2}\left(+\left|\uparrow\uparrow\right\rangle+\left|\uparrow\downarrow\right\rangle-\left|\downarrow\uparrow\right\rangle+\left|\downarrow\downarrow\right\rangle\right)\\
		\frac{1}{2}\left(+\left|\uparrow\uparrow\right\rangle-\left|\uparrow\downarrow\right\rangle-\left|\downarrow\uparrow\right\rangle-\left|\downarrow\downarrow\right\rangle\right),\\
	\end{cases}
\end{eqnarray*}
which can be implemented by a %
controlled-not (CX) gate
plus two Hadamard gates. 
Here, we use the convention $\left|\uparrow\right\rangle$ and $\left|\downarrow\right\rangle$ to denote a physical qubit's two states, keeping the physical convention distinguishable from the logical convention of the Schmidt numbers.
According to the Relay's measurement results, Alice (A) and Bob (B) choose whether to apply a Z gate to their own qubits, originally prepared as partially entangled with the Relay's qubits given by $\boldsymbol{\lambda}_a=\left(\lambda_{a,1},\lambda_{a,2}\right)$ and $\boldsymbol{\lambda}_b=\left(\lambda_{b,1},\lambda_{b,2}\right)$ respectively, now resulting in a new state $\sqrt{\lambda_{a,1}\lambda_{b,1}}\left|\uparrow\uparrow\right\rangle+\sqrt{\lambda_{a,1}\lambda_{b,2}}\left|\uparrow\downarrow\right\rangle+\sqrt{\lambda_{a,2}\lambda_{b,1}}\left|\downarrow\uparrow\right\rangle-\sqrt{\lambda_{a,2}\lambda_{b,2}}\left|\downarrow\downarrow\right\rangle$ deterministically shared by A and B. The next step is for A and B to apply local unitary transformations $U^\dagger$ and $V^\dagger$ to transform the new deterministic state into a diagonal form with Schmidt numbers $\boldsymbol{\lambda}=\mathcal{S}(\boldsymbol{\lambda}_a,\boldsymbol{\lambda}_b)=\left({\frac{1+\sqrt{1-16\lambda_{a,1}\lambda_{a,2}\lambda_{b,1}\lambda_{b,2}}}{2}},{\frac{1-\sqrt{1-16\lambda_{a,1}\lambda_{a,2}\lambda_{b,1}\lambda_{b,2}}}{2}}\right)$, given by the singular value decomposition (SVD),
\begin{eqnarray*}
	\begin{pmatrix}
		\sqrt{\lambda_{a,1}\lambda_{b,1}}& \sqrt{\lambda_{a,1}\lambda_{b,2}}\\
		\sqrt{\lambda_{a,2}\lambda_{b,1}}& -\sqrt{\lambda_{a,2}\lambda_{b,2}}
	\end{pmatrix}=
	U\text{diag}(\boldsymbol{\lambda})^{1/2}V^\dagger.
\end{eqnarray*}

The full circuit diagram is shown in Fig.~\ref{fig_seri_circuit}, where the input states shared within A--R and R--B are prepared as two partially entangled states, $\boldsymbol{\lambda}_a=\left(\cos^2{\frac{\pi}{7}},\sin^2{\frac{\pi}{7}}\right)$ and $\boldsymbol{\lambda}_b=\left(\cos^2{\frac{\pi}{6}},\sin^2{\frac{\pi}{6}}\right)$, produced by a Y-rotation (RY) gate followed by a CX gate on the four qubits which are initially all in $\left|\uparrow\right\rangle$ (preparation stage).
The noiseless deterministic outcome is $\boldsymbol{\lambda}=\mathcal{S}(\boldsymbol{\lambda}_a,\boldsymbol{\lambda}_b)\approx\left(0.868,0.132\right)$, a partially entangled pure state as expected  [Fig.~\ref{fig_seri_density_matrix}(a)]. 

Testing the circuit on Qiskit, we find that the noisy outcome differs from the theoretical value by a fidelity of $92.4\%$ [Fig.~\ref{fig_seri_density_matrix}(b)]. Note that the mismatch in the outcomes also includes the noise introduced in the preparation stage, which are not part of the swapping protocol. We are convinced that an experimental demonstration of the protocol with similar high fidelity should be possible.

The swapping protocol is relatively mature due to its simplicity, as optimally only one two-qubit gate (CX gate) is needed during the measurement stage. In contrast, the concentration protocol requires more two-qubit gates, as we will see below.

\begin{figure*}[t!]
	\centering
	\includegraphics[height=1.95in]{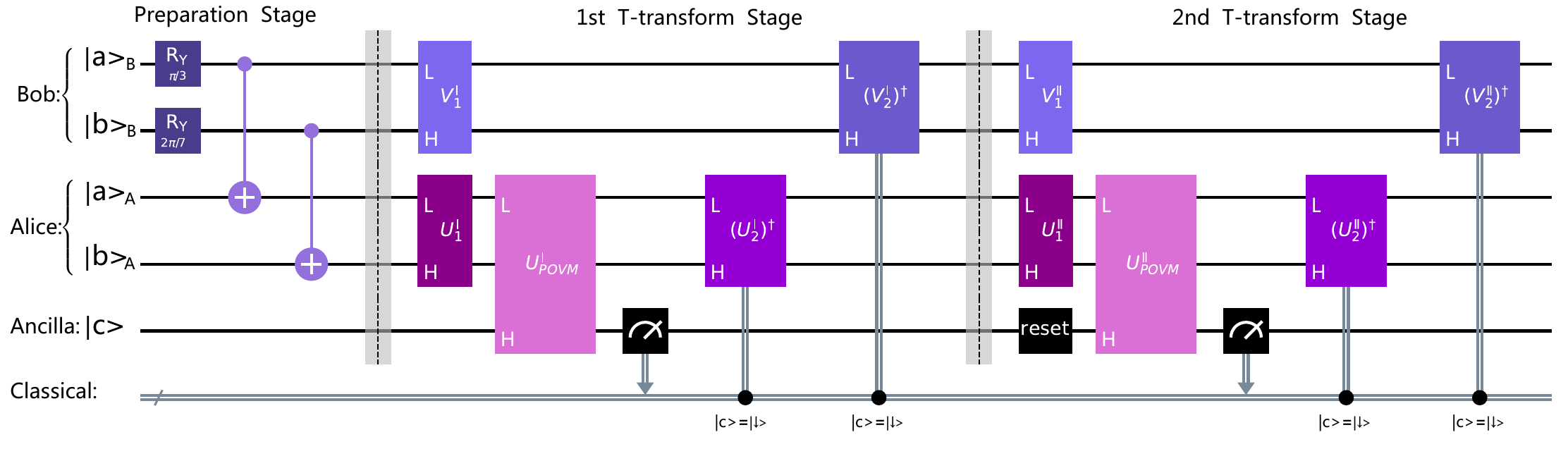}
	\caption{\label{fig_para_circuit}Deterministic entanglement concentration protocol that implements the parallel rule (Table~\ref{table_qdet-rules}). Different stages are separated by dashed lines. From left to right: preparation stage (cf.~Fig.~\ref{fig_seri_circuit}), $1$st T-transform stage, and $2$nd T-transform stage. The final entanglement is concentrated between the qubits $\left|b\right\rangle_\text{A}$ and $\left|b\right\rangle_\text{B}$, which constitute either a partially entangled state or a singlet (in which case a $3$rd T-transform is needed) depending on the initial inputs. 
		\hfill\hfill}
\end{figure*}

\begin{figure}[t!]
	\centering
	\includegraphics[height=1.6in]{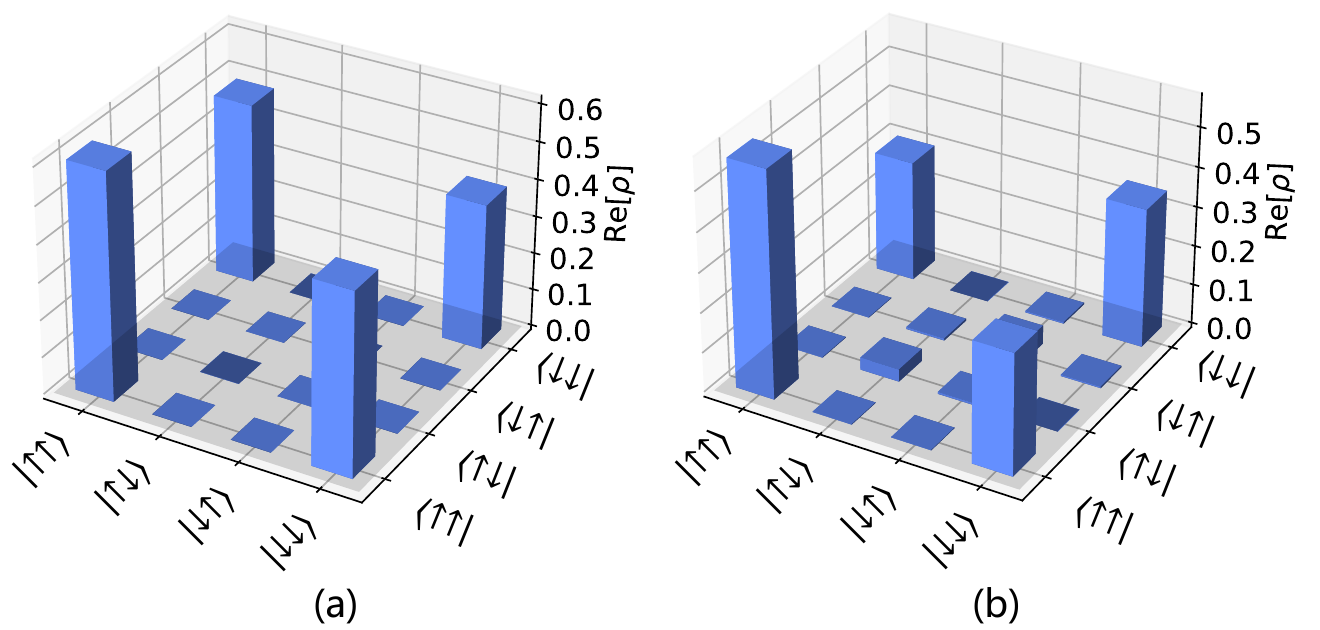}
	\caption{\label{fig_para_density_matrix}Outcome of the parallel rule (by Fig.~\ref{fig_para_circuit}). (a) $\rho_\text{AB}=\left(0.780\left|\uparrow\uparrow\right\rangle+0.625\left|\downarrow\downarrow\right\rangle\right)\left(0.780\left\langle\uparrow\uparrow\right|+0.625\left\langle\downarrow\downarrow\right|\right)$  (cf.~Fig.~\ref{fig_seri_density_matrix}). 
	(b) Fidelity: $78.2\%$.
		\hfill\hfill}
\end{figure}

\subsection{Parallel rule}
The circuit that implements the concentration function $\mathcal{P}(\mathbf{x})$ should deterministically convert the tensor product of the two bipartite states $\boldsymbol{\lambda}_a\otimes\boldsymbol{\lambda}_b=\left(\lambda_{11},\lambda_{12},\lambda_{21},\lambda_{22}\right)\equiv\left(\lambda_{a,1}\lambda_{b,1},\lambda_{a,1}\lambda_{b,2},\lambda_{a,2}\lambda_{b,1},\lambda_{a,2}\lambda_{b,2}\right)$ shared by A and B (w.l.o.g.~$\lambda_{a,1}>\lambda_{b,1}$) to a new state that is equal to (i) $\mathcal{P}(\boldsymbol{\lambda}_a\otimes\boldsymbol{\lambda}_b)=\left(\lambda_{11},1-\lambda_{11}\right)$ when $\lambda_{11}\ge1/2$; or (ii) simply $\mathcal{P}(\boldsymbol{\lambda}_a\otimes\boldsymbol{\lambda}_b)=\left(1/2,1/2\right)$ when $\lambda_{11}<1/2$ [Eq.~\eqref{eq_p}]. Nielsen's original protocol~\cite{nielsen_n99}, based on a series of \emph{T-transforms} to interchange two Schmidt numbers at a time, must be constructed differently for the above two cases. For case (i), two T-transforms are required:
\begin{eqnarray*}
	\left(\lambda_{11},\lambda_{12},\lambda_{21},\lambda_{22}\right)&\overset{\text{T}}{\longrightarrow}&\left(\lambda_{11},\lambda_{12}+\lambda_{22},\lambda_{21},0\right)\\
	&\overset{\text{T}}{\longrightarrow}&\left(\lambda_{11},\lambda_{12}+\lambda_{21}+\lambda_{22},0,0\right).
\end{eqnarray*}
For case (ii), three T-transforms are required:
\begin{eqnarray*}
	&&\left(\lambda_{11},\lambda_{12},\lambda_{21},\lambda_{22}\right)\overset{\text{T}}{\longrightarrow}\left(1/2,\lambda_{12},\lambda_{21},\lambda_{11}+\lambda_{22}-1/2\right)\\
	&&\hspace{15mm}\overset{\text{T}}{\longrightarrow}\left(1/2,\lambda_{11}+\lambda_{12}+\lambda_{22}-1/2,\lambda_{21},0\right)\\
	&&\hspace{15mm}\overset{\text{T}}{\longrightarrow}\left(1/2,\lambda_{11}+\lambda_{12}+\lambda_{21}+\lambda_{22}-1/2,0,0\right).
\end{eqnarray*}
The above arrangements are to guarantee that after each T transform, the Schmidt numbers are still arranged in descending order. 

For simplicity, we only show the circuit diagram for case (i) (Fig.~\ref{fig_para_circuit}). During the first T transform (stage I), A and B begin by applying $U_1^{\text{I}}$ and $V_1^{\text{I}}$ to their qubits, respectively, given by the SVD
\begin{eqnarray*}
	&&\begin{pmatrix}
		\sqrt{\lambda_{11}}& & & \\
		&  \sqrt{\frac{\lambda_{12}+\lambda_{22}}{2}} & & \\
		& & \sqrt{\lambda_{21}} & \\
		& \frac{\lambda_{12}-\lambda_{22}}{\sqrt{2\left(\lambda_{12}+\lambda_{22}\right)}}
		& & \sqrt{\frac{2\lambda_{12}\lambda_{22}}{\lambda_{12}+\lambda_{22}}}
	\end{pmatrix}\\
	&=&
	U_1^{\text{I}}\text{diag}(\left(\lambda_{11},\lambda_{12},\lambda_{21},\lambda_{22}\right))^{1/2}V_1^{\text{I}\dagger}.
\end{eqnarray*}
Next, A applies a POVM (see below)
that probabilistically transforms the above matrix into {either a diagonal-ready matrix, $\text{diag}(\left(\lambda_{11},\lambda_{12}+\lambda_{22},\lambda_{21},0\right))^{1/2}$, or}
\begin{eqnarray*}
	&&\begin{pmatrix}
		\sqrt{\lambda_{11}}& & & \\
		& 0  & & \\
		& & \sqrt{\lambda_{21}} & \\
		& \frac{\lambda_{12}-\lambda_{22}}{\sqrt{\lambda_{12}+\lambda_{22}}}
		& & 2\sqrt{\frac{\lambda_{12}\lambda_{22}}{\lambda_{12}+\lambda_{22}}}
	\end{pmatrix}\\
	&=&
	U_2^{\text{I}}\text{diag}(\left(\lambda_{11},\lambda_{12}+\lambda_{22},\lambda_{21},0\right))^{1/2}V_2^{\text{I}\dagger},
\end{eqnarray*}
which requires extra transformations $U_2^{\text{I}\dagger}$ and $V_2^{\text{I}\dagger}$ applied to A and B before transforming into a diagonal matrix. The two resultant diagonal matrices are identical, indicating that the outcome is deterministic.

Our design of the POVM is as follows. By adding an ancilla qubit initially in $\left|\uparrow\right\rangle$, A can implement the desired POVM by unitarily transforming all her three qubits ($\left|c\right\rangle$, $\left|b\right\rangle_\text{A}$, and $\left|a\right\rangle_\text{A}$) according to %
\begin{eqnarray*}
	U_\text{POVM}^\text{I}=\begin{pmatrix}
		\frac{1}{\sqrt{2}} & 0 & 0 & 0 & -\frac{1}{\sqrt{2}} & 0 & 0 & 0 \\
		0 & 1 & 0 & 0 & 0 & 0 & 0 & 0 \\
		0 & 0 & \frac{1}{\sqrt{2}} & 0 & 0 & 0 & \frac{1}{\sqrt{2}} & 0 \\
		0 & 0 & 0 & 0 & 0 & 0 & 0 & -1 \\
		\frac{1}{\sqrt{2}} & 0 & 0 & 0 & \frac{1}{\sqrt{2}} & 0 & 0 & 0 \\
		0 & 0 & 0 & 0 & 0 & -1 & 0 & 0 \\
		0 & 0 & \frac{1}{\sqrt{2}} & 0 & 0 & 0 & -\frac{1}{\sqrt{2}} & 0 \\
		0 & 0 & 0 & 1 & 0 & 0 & 0 & 0 \\
	\end{pmatrix}
\end{eqnarray*}
and then measuring the ancilla qubit $\left|c\right\rangle$ and recording the result. 
Note that the local unitary transformation $U_\text{POVM}^\text{I}$ is not unique. We choose the above $U_\text{POVM}^\text{I}$ because we find that its most efficient implementation requires only two two-qubit gates, thus easily realizable.

The second T transform (stage II) follows the same procedure, except that the unitary transformations $U_1^{\text{II}}$, $V_1^{\text{II}}$, $U_2^{\text{II}\dagger}$, and $V_2^{\text{II}\dagger}$ are now given by
\begin{eqnarray*}
	&&\begin{pmatrix}
		\sqrt{\lambda_{11}}& & & \\
		&  \sqrt{\frac{\lambda_{12}+\lambda_{22}+\lambda_{21}}{2}} & & \\
		& 
		\frac{\lambda_{12}+\lambda_{22}-\lambda_{21}}{\sqrt{2\left(\lambda_{12}+\lambda_{22}+\lambda_{21}\right)}}
		& \sqrt{\frac{2\left(\lambda_{12}+\lambda_{22}\right)\lambda_{21}}{\lambda_{12}+\lambda_{22}+\lambda_{21}}} & \\
		& & & 0
	\end{pmatrix}\\
	&=&
	U_1^{\text{II}}\text{diag}(\left(\lambda_{11},\lambda_{12}+\lambda_{22},\lambda_{21},0\right))^{1/2}V_1^{\text{II}\dagger}
\end{eqnarray*}
and 
\begin{eqnarray*}
	&&\begin{pmatrix}
		\sqrt{\lambda_{11}}& & & \\
		&  0 & & \\
		& 
		\frac{\lambda_{12}+\lambda_{22}-\lambda_{21}}{\sqrt{\lambda_{12}+\lambda_{22}+\lambda_{21}}}
		&2 \sqrt{\frac{\left(\lambda_{12}+\lambda_{22}\right)\lambda_{21}}{\lambda_{12}+\lambda_{22}+\lambda_{21}}} & \\
		& & & 0
	\end{pmatrix}\\
	&=&
	U_2^{\text{II}}\text{diag}(\left(\lambda_{11},\lambda_{12}+\lambda_{21}+\lambda_{22},0,0\right))^{1/2}V_2^{\text{II}\dagger}.
\end{eqnarray*}
The POVM is given by, accordingly,
\begin{eqnarray*}
	U_\text{POVM}^\text{II}=\begin{pmatrix}	
			\frac{1}{\sqrt{2}} & 0 & 0 & 0 & -\frac{1}{\sqrt{2}} & 0 & 0 & 0 \\
			0 & 1 & 0 & 0 & 0 & 0 & 0 & 0 \\
			0 & 0 & 0 & 0 & 0 & 0 & -1 & 0 \\
			0 & 0 & 0 & \frac{1}{\sqrt{2}} & 0 & 0 & 0 & \frac{1}{\sqrt{2}} \\				\frac{1}{\sqrt{2}} & 0 & 0 & 0 & \frac{1}{\sqrt{2}} & 0 & 0 & 0 \\
			0 & 0 & 0 & 0 & 0 & -1 & 0 & 0 \\
			0 & 0 & 1 & 0 & 0 & 0 & 0 & 0 \\
			0 & 0 & 0 & \frac{1}{\sqrt{2}} & 0 & 0 & 0 & -\frac{1}{\sqrt{2}} \\
	\end{pmatrix}.
\end{eqnarray*}

Given two initial states %
$\boldsymbol{\lambda}_a=\left(\cos^2{\frac{\pi}{7}},\sin^2{\frac{\pi}{7}}\right)$ and $\boldsymbol{\lambda}_b=\left(\cos^2{\frac{\pi}{6}},\sin^2{\frac{\pi}{6}}\right)$
shared between A and B (the same as used in the series rule, Fig.~\ref{fig_seri_circuit}),
the noiseless deterministic outcome is given by %
$\boldsymbol{\lambda}=\mathcal{P}(\boldsymbol{\lambda}_a\otimes\boldsymbol{\lambda}_b)\approx\left(0.609,0.391\right)$,
as expected [Fig.~\ref{fig_para_density_matrix}(a)].

We further decompose all unitary transformations and POVMs into single-qubit gates and CX gates only. 
The overall decomposition produces only $13$ CX gates in total: (1) Decomposing the combination of $V_1^{\text{I}}$, $V_2^{\text{I}\dagger}$, $V_1^{\text{II}}$, and $V_2^{\text{II}\dagger}$ yields only $2$ CX gates, since their POVM-dependent operations are carried out by adjusting the rotation angles of some single-qubit gates only, not involving two-qubit gates. (2) $U_1^{\text{I}}$ requires $2$ CX gates. (3) The specific form of $U_\text{POVM}^\text{I}$ that we choose warrants only $2$ CX gates as well. (4) $U_2^{\text{I}\dagger}$ and $U_1^{\text{II}}$ together require $2$ CX gates [for the same reason as given in item (1)]. 
(5) $U_\text{POVM}^\text{II}$ requires $4$ CX gates, since it is equivalent to adding a CX gate to $\left|b\right\rangle_\text{A}$ and $\left|a\right\rangle_\text{A}$ before $U_\text{POVM}^\text{I}$ and a CX gate after. 
(6) $U_2^{\text{II}\dagger}$ requires $1$ CX gate. 

Requiring $13$ two-qubit gates at maximum, we find that the noisy outcome of the circuit differs from the theoretical value by a fidelity of $78.2\%$ [Fig.~\ref{fig_para_density_matrix}(b)].
Case (ii), however, would be more complicated and require more two-qubit gates, which is not considered here. {Note also that an alternative design of the concentration protocol is available~\cite{entangle-conc_glg06}, which, although arguably simpler to build than Nielsen's~\cite{nielsen_n99}, must use generalized Toffoli gates.}

\section*{Discussion}
What makes the formulation of deterministic entanglement transmission (DET) scalable with network size and adaptable to different (at least series-parallel) network topologies is the use of deterministic quantum communication protocols~\cite{conpt_mgh21}. One may wonder if the specific DET scheme based on the series and parallel rules (Table~\ref{table_qdet-rules}) that we have introduced is the only possible scheme using deterministic protocols. The answer is negative:
For example, the matrix $\mathbf{V}$ in the recipe of $\mathcal{S}(\boldsymbol{\lambda}_a,\boldsymbol{\lambda}_b)$ [Eq.~\eqref{eq_s}] that makes it deterministic is not unique if $\mathbf{V}(\boldsymbol{\lambda}_a,\boldsymbol{\lambda}_b)$ can generally depend on $\boldsymbol{\lambda}_a$ and $\boldsymbol{\lambda}_b$. When properly chosen, the new swapping function can even produce a better $k$-concurrence ($k\neq d$) than our proposed swapping function~\cite{k-concurrence_g05}. Moreover, if quantum catalysts~\cite{q-catal_jp99} are allowed, then it is also possible to have a new concentration function $\mathcal{P}(\boldsymbol{\lambda})$ that is deterministic without satisfying the majorization relation [Eq.~\eqref{eq_p_locc}]. These new functions can be used to build a different set of series and parallel rules.

What remains is the question of whether we can also find deterministic protocols that are scalable for non-series-parallel QN. The complexity of answering this lies in the fact that multipartite protocols~\cite{QEP-GHZ_pclla10} and multilink-based QN routing~\cite{q-netw-route_p19} may have to be used for producing deterministic outcomes. The existence of such deterministic protocols can help us achieve entanglement transmission for a more general QN.

We also find it interesting to apply our scheme to an \emph{infinite} series-parallel QN. %
We expect that percolation-like criticality in terms of $\boldsymbol{\lambda}$ of each link can be observed for entanglement transmission in the thermodynamic limit. However, note that to each QN link we have assigned not a single number but a sequence of Schmidt numbers which have more than one degree of freedom when $d>2$. This prohibits us from establishing an exact one-to-one mapping from $\boldsymbol{\lambda}$ to a real temperature-like parameter. It is unknown how criticality behaves in such case. %

\appendix
\section{Basic Notations}
\label{sec_majorization}
We briefly recall some basic notions in matrix analysis and majorization theory~\cite{matrix-anal}:

Let $\mathbf{x}=\left(x_1,x_2,\cdots,x_n\right)\in\mathbb{R}_{+}^{n}$. Note that some notations for matrices %
can also be defined for $\mathbf{x}$. Given the diagonal matrix of $\mathbf{x}$,
\begin{equation*}
\text{diag}(\mathbf{x})\equiv
\begin{pmatrix}
x_1 & 0&\hdots\\
0 & x_2&\hdots\\
\vdots & \vdots&\ddots
\end{pmatrix},
\end{equation*}
we can simply define the trace of $\mathbf{x}$ as $\tr(\mathbf{x})\equiv \tr(\text{diag}(\mathbf{x}))= \sum_{j=1}^{n}x_j$, the determinant of $\mathbf{x}$ as $\det(\mathbf{x})\equiv \det(\text{diag}(\mathbf{x}))=x_1x_2\cdots x_n$, and
the adjugate of $\mathbf{x}$ as $\mathrm{adj}(\mathbf{x})\equiv %
\left(\det(\mathbf{x})/x_1,\det(\mathbf{x})/x_2,\cdots,\det(\mathbf{x})/x_n\right)$.

Let $\mathbf{x}^{\downarrow}$ and  $\mathbf{x}^{\uparrow}$ be the sequences given by rearranging the coordinates of $\mathbf{x}$ in decreasing order and increasing order, respectively. In other words, $x^{\downarrow}_1, x^{\downarrow}_2,\cdots,x^{\downarrow}_n$ as the coordinates of $\mathbf{x}^{\downarrow}$ satisfy $x^{\downarrow}_1 \ge x^{\downarrow}_2 \ge\cdots\ge x^{\downarrow}_n$. Similarly, for  $\mathbf{x}^{\uparrow}$ there is $x^{\uparrow}_1 \le x^{\uparrow}_2 \le\cdots\le x^{\uparrow}_n$.

Let $\mathbf{x},\mathbf{y}\in\mathbb{R}_{+}^{n}$. We say that $\mathbf{x}$ is \emph{weakly submajorized} by $\mathbf{y}$, or $\mathbf{x}\prec_w \mathbf{y}$~\cite{matrix-anal}, if 
\begin{equation}
\sum_{j=1}^{k}x^{\downarrow}_j \le \sum_{j=1}^{k}y^{\downarrow}_j, \quad \forall k=1,\cdots,n.
\end{equation}
In particular, if $\mathbf{x}$ is weakly submajorized by $\mathbf{y}$ and
\begin{equation}
\sum_{j=1}^{n}x^{\downarrow}_j = \sum_{j=1}^{n}y^{\downarrow}_j,\text{ or, }\tr(\mathbf{x})=\tr(\mathbf{y}),
\end{equation}
then we say that $\mathbf{x}$ is \emph{majorized} by $\mathbf{y}$, or $\mathbf{x}\prec \mathbf{y}$~\cite{matrix-anal}. 

For example, when $\mathbf{x}\in\mathbb{R}_{+}^{n}$ and  $\tr(\mathbf{x})=1$, we always have $\left(1/n,\cdots,1/n\right)\prec \mathbf{x} \prec \left(1,0,\cdots,0\right)$.
On the other hand, given two arbitrary $\mathbf{x},\mathbf{y}\in\mathbb{R}_{+}^{n}$ with $\tr(\mathbf{x})=\tr(\mathbf{y})=1$, we may have both $\mathbf{x}\nprec \mathbf{y}$ and $\mathbf{x}\nsucc \mathbf{y}$. Thus, the majorization relation is not a total order. It is not a partial order either, because $\mathbf{x}\succ \mathbf{y}$ and $\mathbf{y}\succ \mathbf{x}$ do not necessarily imply $\mathbf{x}=\mathbf{y}$, since they may differ by a permutation~\cite{matrix-anal}. The majorization relation is only a \emph{preorder} on $\mathbb{R}_{+}^{n}$.

In this paper, our attention is mostly focused on $d^2$-dimensional normalized states with $d$ Schmidt numbers $\boldsymbol{\lambda}$ satisfying $\tr(\boldsymbol{\lambda})\equiv 1$. We exclusively use the symbol $\boldsymbol{\lambda}$ whenever we implicitly know that it has a \emph{trace of unity} and $d-1$ degrees of freedom. For general vectors, we use the symbols $\mathbf{x},\mathbf{y},\mathbf{z},\cdots\in\mathbb{R}_{+}^{n}$ instead.

\section{The $k$-Concurrence Family of Entanglement Monotones}
\label{sec_concurrence}
Given a $\left(d\times d\right)$-dimensional bipartite pure state, up to unitary equivalence,  $\left|\boldsymbol{\lambda}\right\rangle=\sum_{j=1}^{d}\sqrt{\lambda_{j}}\left|jj\right\rangle$, there are $d$ concurrence monotones, as developed in Ref.~\cite{k-concurrence_g05},
\begin{equation}
C_k(\boldsymbol{\lambda})\equiv\left[{S_k(\boldsymbol{\lambda})}/{S_k(\left(\frac{1}{d},\frac{1}{d},\cdots,\frac{1}{d}\right))}\right]^{1/k}, \quad \tr(\boldsymbol{\lambda})\equiv 1,
\end{equation}
$k=1,2,\cdots,d$, 
where $S_k(\boldsymbol{\lambda})$ is the $k$-th elementary symmetric polynomial~\cite{matrix-anal}, e.g.,
\begin{eqnarray*}
	S_0(\boldsymbol{\lambda})&=&1,\\
	S_1(\boldsymbol{\lambda})&=&\sum\nolimits_{1\le i\le d}\lambda_i,\\
	S_2(\boldsymbol{\lambda})&=&\sum\nolimits_{1\le i<j\le d}\lambda_i \lambda_j,\\
	S_3(\boldsymbol{\lambda})&=&\sum\nolimits_{1\le i<j<k \le d}\lambda_i \lambda_j \lambda_k,\\
	&\cdots&\\
	S_d(\boldsymbol{\lambda})&=&\lambda_1 \lambda_2 \cdots \lambda_d.
\end{eqnarray*}
Note that $C_k$ is named \emph{$k$-concurrence} since the whole monotone family is nothing but a generalization of $C_2(\boldsymbol{\lambda})$, the ``concurrence'' which is more commonly referred to in the context of quantum information theory. Another special concurrence of the family is $C_d$ (i.e.,~$k=d$), the \emph{$G$-concurrence}, as it stands for the \emph{geometric} mean of the Schmidt numbers~\cite{k-concurrence_g05}. The importance of $G$-concurrence can be understood from the main text.

All $k$-concurrences have the following properties~\cite{k-concurrence_g05}.
\begin{itemize}
	\item \emph{Permutation invariance 
 [$\boldsymbol{\lambda}\to\text{perm}(\boldsymbol{\lambda})$]:}
	\begin{equation}
	\label{eq_concurrence_perm}
	C_k(\boldsymbol{\lambda})=C_k(\text{perm}(\boldsymbol{\lambda})).
	\end{equation}
	\item \emph{Unit measure:}
	\begin{eqnarray}
	\label{eq_concurrence_unit}
	0\le C_k(\boldsymbol{\lambda})&\le& 1, \text{ and }\nonumber\\ 
	C_k(\boldsymbol{\lambda})&=& 1  \text{ if } \boldsymbol{\lambda}=\left(\frac{1}{d},\frac{1}{d},\cdots,\frac{1}{d}\right)
	\end{eqnarray}
	\item \emph{Isotone:}
	\begin{equation}
	\label{eq_concurrence_isotone}
	C_k(\boldsymbol{\lambda}_a) \le C_k(\boldsymbol{\lambda}_b) \text{ if } \boldsymbol{\lambda}_a\succ \boldsymbol{\lambda}_b.
	\end{equation}
	\item \emph{Concavity:}
	\begin{equation}
	\label{eq_concurrence_convex}
	C_k(\sum_\alpha p_\alpha \boldsymbol{\lambda}_\alpha)\ge \sum_\alpha p_\alpha C_k(\boldsymbol{\lambda}_\alpha), \quad p_\alpha\in\mathbb{R}_{+}.
	\end{equation}
\end{itemize}
In particular, if $\sum_\alpha p_\alpha=1$, then $p_\alpha$ and $\boldsymbol{\lambda}_\alpha$ can be explained as a probabilistic ensemble. The \emph{average} $k$-concurrence is then defined as the RHS of Eq.~\eqref{eq_concurrence_convex}, $\sum_\alpha p_\alpha C_k(\boldsymbol{\lambda}_\alpha)$, and the \emph{worst-case} $k$-concurrence is defined as $\min_\alpha C_k(\boldsymbol{\lambda}_\alpha)$, accordingly.

\section{Some Useful Theorems}
\label{sec_useful}
\begin{theorem}
	\label{theorem_convex-perm-isotone}
	Let $F:\mathbb{R}_{+}^m\to\mathbb{R}_{+}^n$ be convex, i.e.,
	$\sum_\alpha p_\alpha F(\mathbf{x}_\alpha) \succ F(\sum_\alpha p_\alpha \mathbf{x}_\alpha)$, $p_\alpha\in\mathbb{R}_{+}$. If, for all $\mathbf{x}$, %
	\begin{eqnarray}
	\label{theorem_convex-perm-isotone_1}
	F(\text{perm}(\mathbf{x}))=F(\mathbf{x}),
	\end{eqnarray}
	then $F$ is isotone, i.e.,~$F(\mathbf{x})\succ F(\mathbf{y})$ if $\mathbf{x}\succ \mathbf{y}$.
\end{theorem}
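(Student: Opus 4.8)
The plan is to reduce the statement to the classical Hardy--Littlewood--P\'olya characterization of majorization together with the Birkhoff--von~Neumann theorem on doubly stochastic matrices. Concretely, suppose $\mathbf{x}\succ\mathbf{y}$ with $\mathbf{x},\mathbf{y}\in\mathbb{R}_{+}^{m}$. Then $\mathbf{y}=D\mathbf{x}$ for some doubly stochastic $m\times m$ matrix $D$, and since the set of doubly stochastic matrices is the convex hull of the permutation matrices, we may write $\mathbf{y}=\sum_{\alpha}p_{\alpha}P_{\alpha}\mathbf{x}$ where each $P_{\alpha}$ is a permutation matrix, $p_{\alpha}\ge 0$, and $\sum_{\alpha}p_{\alpha}=1$; that is, $\mathbf{y}$ is a convex combination of the vectors $\text{perm}(\mathbf{x})$.

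Next I would feed this decomposition into the convexity hypothesis. Taking $\mathbf{x}_{\alpha}=P_{\alpha}\mathbf{x}$ in the convexity inequality gives
\begin{equation*}
\sum_{\alpha}p_{\alpha}F(P_{\alpha}\mathbf{x})\succ F\Big(\sum_{\alpha}p_{\alpha}P_{\alpha}\mathbf{x}\Big)=F(\mathbf{y}).
\end{equation*}
By the permutation invariance hypothesis [Eq.~\eqref{theorem_convex-perm-isotone_1}], $F(P_{\alpha}\mathbf{x})=F(\mathbf{x})$ for every $\alpha$, so the left-hand side collapses to $\big(\sum_{\alpha}p_{\alpha}\big)F(\mathbf{x})=F(\mathbf{x})$ because the weights sum to one. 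Hence $F(\mathbf{x})\succ F(\mathbf{y})$, which is exactly the isotone property.

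The only genuinely delicate point is the bookkeeping about whether the relation in the convexity hypothesis is full majorization or merely weak submajorization, and correspondingly whether one needs a separate trace identity to upgrade the conclusion. Since $\mathbf{x}\succ\mathbf{y}$ forces $\tr(\mathbf{x})=\tr(\mathbf{y})$ and the convex combination above preserves this, no extra trace argument is required; the normalization $\sum_{\alpha}p_{\alpha}=1$ guaranteed by Birkhoff's theorem is precisely what makes the collapse $\sum_{\alpha}p_{\alpha}F(\mathbf{x})=F(\mathbf{x})$ legitimate. So the proof is a two-line consequence of classical majorization theory once the right characterization is invoked; the main obstacle is simply recognizing that ``permutation invariant $+$ convex'' is the abstraction of Schur-convexity, for which this convex-hull-of-permutations argument is the standard device.
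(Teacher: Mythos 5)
Your proof is correct and is essentially the paper's own argument: both decompose $\mathbf{y}$ as a convex combination of permutations of $\mathbf{x}$ (the paper cites Theorem II.1.10 of Bhatia's \emph{Matrix Analysis}, which packages the Hardy--Littlewood--P\'olya/Birkhoff step you spell out), then apply convexity and permutation invariance to collapse $\sum_\alpha p_\alpha F(\mathbf{P}_\alpha\mathbf{x})$ to $F(\mathbf{x})$. No substantive difference.
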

\begin{proof}
	Let $\mathbf{x}\succ \mathbf{y}$ in $\mathbb{R}_{+}^m$. By Theorem II.1.10 of Ref.~\cite{matrix-anal} there exist a set of $m \times m$-dimensional permutation matrices $\mathbf{P}_1,\mathbf{P}_2,\cdots$ and a set of $p_1, p_2,\cdots\in\mathbb{R}_{+}$ with $\sum_{\alpha}p_{\alpha}=1$ such that $\mathbf{y}=\sum_{\alpha}p_{\alpha} \mathbf{P}_{\alpha} \mathbf{x}$. Thus
	\begin{eqnarray}
	\label{theorem_convex-perm-isotone_2}
	F(\mathbf{y})=F(\sum_{\alpha}p_{\alpha} \mathbf{P}_{\alpha} \mathbf{x})\prec \sum_{\alpha} p_{\alpha} F( \mathbf{P}_{\alpha} \mathbf{x})=\sum_{\alpha} p_{\alpha} F(\mathbf{x}),\nonumber\\
	\end{eqnarray}
	i.e.,~$F(\mathbf{x})\succ F(\mathbf{y})$.
\end{proof}

\begin{theorem}
	\label{theorem_sum-product}
	Let $\mathbf{x},\mathbf{y}\in\mathbb{R}_{+}^{n}$. If $\ln \mathbf{x} \succ_w \ln \mathbf{y}$, then
	\begin{eqnarray}
	\label{theorem_sum-product_1}
	\left(\prod_{j=1}^{l-1} x^{\downarrow}_j\right) \sum_{j=l}^{k} \left(x^{\downarrow}_j\right)^s
	\ge
	\left(\prod_{j=1}^{l-1} y^{\downarrow}_j\right) \sum_{j=l}^{k} \left(y^{\downarrow}_j\right)^s,
	\end{eqnarray}
	$\forall k=1,2,\cdots,n$ and $\forall  l=1,2,\cdots,k$, given that $0\le s \le k-l+1$.
\end{theorem}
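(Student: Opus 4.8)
The plan is to pass to logarithmic variables and recognise the left-hand side as a symmetric, coordinatewise-monotone, Schur-convex function of $\ln\mathbf{x}$, after which the hypothesis $\ln\mathbf{x}\succ_w\ln\mathbf{y}$ yields the claim at once. Only the prefix products $\prod_{j=1}^{m}x^{\downarrow}_j$ with $m\le k$ enter the statement, so we may take $n=k$. Writing $\mathbf{u}=\ln\mathbf{x}$ and $\mathbf{v}=\ln\mathbf{y}$ (entrywise), the left-hand side equals $g(\mathbf{u})$ with
\[
g(\mathbf{u})=\exp\!\Big(\sum\nolimits_{j=1}^{l-1}u^{\downarrow}_j\Big)\sum_{j=l}^{k}\exp\!\big(s\,u^{\downarrow}_j\big),
\]
the hypothesis is exactly $\mathbf{u}\succ_w\mathbf{v}$, and the goal is $g(\mathbf{u})\ge g(\mathbf{v})$. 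Since $g$ depends on $\mathbf{u}$ only through $\mathbf{u}^{\downarrow}$ it is symmetric, and a symmetric function that is nondecreasing in each argument and Schur-convex is automatically isotone for weak submajorization: writing $\mathbf{v}\prec_w\mathbf{u}$ as $\mathbf{v}^{\downarrow}\le\mathbf{w}^{\downarrow}$ with $\mathbf{w}\prec\mathbf{u}$ (the standard characterisation of $\prec_w$, cf.~Ref.~\cite{matrix-anal}) and chaining $g(\mathbf{v})\le g(\mathbf{w})\le g(\mathbf{u})$. So it suffices to prove these two properties of $g$.

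Monotonicity is straightforward: on the open cone $u_1>u_2>\cdots>u_k$ one computes $\partial_{u_i}g=g>0$ for $i\le l-1$ and $\partial_{u_i}g=s\,e^{su_i}\exp(\sum_{j<l}u_j)\ge0$ for $l\le i\le k$, so $g$ is nondecreasing there; since raising a coordinate of a vector raises every order statistic, monotonicity then extends to all of $\mathbb{R}^{k}$ by symmetry. For Schur-convexity I would use the Schur--Ostrowski criterion on the same cone, checking $\partial_{u_i}g\ge\partial_{u_j}g$ for $i<j$: if $i,j\le l-1$ both partials equal $g$; if $l\le i<j\le k$ the inequality is just $e^{su_i}\ge e^{su_j}$; the only substantive case is $i\le l-1<j\le k$, which reduces to $\sum_{m=l}^{k}e^{su_m}\ge s\,e^{su_j}$, and since $u_l\ge\cdots\ge u_j$ the left side dominates $(j-l+1)e^{su_j}$, so the inequality holds whenever $s\le1$.

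I expect this cross-case inequality to be the crux, and it is precisely where the admissible range of $s$ is decided. The coordinatewise Schur--Ostrowski bound above only secures it for $s\le1$ in the worst instance $j=l$ (where $\sum_{m=l}^{k}e^{su_m}$ can be arbitrarily close to $e^{su_l}$), so reaching the full range $0\le s\le k-l+1$ asserted here would require an argument that does not treat the product block $\{1,\ldots,l-1\}$ and the summation block $\{l,\ldots,k\}$ on an equal footing and that exploits $\ln\mathbf{x}\succ_w\ln\mathbf{y}$ more finely than plain Schur-convexity of $g$; if no such refinement is available, the statement should be read with $0\le s\le1$, which is in any case the form used afterwards (Lemma~\ref{theorem_d_log} invokes it with $s=1$, after replacing $\mathbf{x},\mathbf{y},\mathbf{z}$ by their entrywise square roots). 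Granting monotonicity and Schur-convexity, the conclusion follows from the symmetric-function chain of the first paragraph for all $l,k$ and all admissible $s$.
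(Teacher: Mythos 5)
Your argument is correct on the range $0\le s\le 1$, and it takes a genuinely different route from the paper's. The paper sets $c=\prod_{j=1}^{l-1}\bigl(y^{\downarrow}_j/x^{\downarrow}_j\bigr)\le 1$, rescales the tail of $\mathbf{y}$ by $c^{1/s}$, checks weak log-majorization between the tails, and then applies the standard fact that $t\mapsto e^{st}$ is convex and increasing (Example II.3.5(v) of Ref.~\cite{matrix-anal}); you instead work directly with the two-block function $g$ in log variables, prove coordinatewise monotonicity and Schur-convexity via the Schur--Ostrowski check on the sorted chamber, and finish with the decomposition of $\prec_w$ into a componentwise increase followed by a majorization. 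Both routes work for $s\le 1$; yours is self-contained, the paper's is shorter. (For full rigor you would add that $g$ is continuous and only piecewise smooth across coincidence hyperplanes, so the chamberwise derivative inequality should be integrated along T-transforms; that is cosmetic.) Your reading of the downstream use is also exactly right: Lemma~\ref{theorem_d_log} needs the theorem only for entrywise square-rooted vectors with $l=d$, $k=d+1$, $s=1$.

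Your hesitation about $1<s\le k-l+1$ is vindicated, and you should not look for a refinement: that part of the statement is false. Take $n=3$, $\mathbf{x}=(4,\tfrac14,\tfrac14)$, $\mathbf{y}=(1,1,\tfrac14)$, so $\ln\mathbf{x}\succ_w\ln\mathbf{y}$ (the second and third partial sums are equalities); for $l=2$, $k=3$, $s=2\le k-l+1$ the left-hand side is $4\bigl(\tfrac1{16}+\tfrac1{16}\bigr)=\tfrac12$ while the right-hand side is $1+\tfrac1{16}=\tfrac{17}{16}$, and a one-parameter variant ($y_3\to 0$ with $\mathbf{x}$ saturating the partial-product constraints) violates the claim for every $1<s\le 2$. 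The paper's own proof tacitly needs $s\le 1$ for the same reason your cross-block check does: the middle inequality of its chain, $c\prod_{j=l}^{m}y^{\downarrow}_j\ge c^{(m-l+1)/s}\prod_{j=l}^{m}y^{\downarrow}_j$, must hold for every tail length $m-l+1=1,\dots,k-l+1$ in order to produce the weak log-majorization fed into the convexity step, and for length $1$ it reads $c\ge c^{1/s}$, i.e. $s\le 1$; in the example above it indeed fails, since $x^{\downarrow}_2=\tfrac14<c^{1/2}y^{\downarrow}_2=\tfrac12$. So the theorem should be read with $0\le s\le 1$, which is all the paper ever uses, and with that reading your proof is complete.
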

\begin{proof}
	Let $c=\prod_{j=1}^{l-1} \left({y^{\downarrow}_j}/{x^{\downarrow}_j}\right)\le 1$. Then,
	\begin{eqnarray}
	\label{theorem_sum-product_2}
	\prod_{j=l}^{k} x^{\downarrow}_j\ge c\prod_{j=l}^{k}  y^{\downarrow}_j\ge c^{\left(k-l+1\right)/{s}}\prod_{j=l}^{k} y^{\downarrow}_j=\prod_{j=l}^{k} \left(c^{1/s} y^{\downarrow}_j\right),\qquad
	\end{eqnarray}
	$\forall k=l,l+1,\cdots,n$. Therefore, by Example II.3.5(v) of Ref.~\cite{matrix-anal}, we have
	\begin{eqnarray}
	\label{theorem_sum-product_3}
	\sum_{j=l}^{k} \left(x^{\downarrow}_j\right)^s\ge c \sum_{j=l}^{k} \left(y^{\downarrow}_j\right)^s,
	\end{eqnarray}
	which holds because $f(e^t)=\left(e^t\right)^s$ is convex and monotone increasing in $t$~\cite{matrix-anal}.	
\end{proof}

\section{A Reverse AM--GM Inequality}
\label{sec_reverse}

We will prove the following inequality.
\begin{theorem}
	\label{theorem_reverse}
	Let $\boldsymbol{\varepsilon}\in \mathbb{R}_{+}^{n}$ and $\Delta\ge0$. Let $E_k=\Delta+\sum_{j=1}^{k}\varepsilon_{j}$, $k=1,2,\cdots,n$. If $\varepsilon_{k} \le \varepsilon_{k+1}\le {E_k}/k$, $\forall k$, then
	\begin{equation}
	\label{theorem_reverse-am-gm}
	\left(\frac{E_n}{n}\right)^n \le \varepsilon_{n}\varepsilon_{n-1}\cdots\varepsilon_{1}\left(1+\frac{\Delta/\varepsilon_1}{n}\right)^n
	\end{equation}
	with equality if and only if %
	$\varepsilon_{n}=\varepsilon_{n-1}=\cdots=\varepsilon_{1}$.
\end{theorem}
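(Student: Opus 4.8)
The plan is to strip off the $n$-th power by taking $n$-th roots and then collapse everything to one elementary mean inequality, using only the single hypothesis $\varepsilon_n\le E_{n-1}/(n-1)$ (the $k=n-1$ case of the constraint). A direct induction on $n$ is awkward because the factor $(1+\Delta/(n\varepsilon_1))^n$ changes shape with $n$, so root-taking is the right first move. The case $n=1$ is immediate (both sides equal $\Delta+\varepsilon_1$), so assume $n\ge2$ and write $A=\frac1n\sum_{j=1}^n\varepsilon_j$ and $G=\bigl(\prod_{j=1}^n\varepsilon_j\bigr)^{1/n}$ for the arithmetic and geometric means of $\varepsilon_1,\dots,\varepsilon_n$. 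Taking $n$-th roots in Eq.~\eqref{theorem_reverse-am-gm} and clearing denominators (using $E_n=\Delta+nA$) shows it is equivalent to
\begin{equation*}
n(A-G) \le \frac{\Delta}{\varepsilon_1}(G-\varepsilon_1),
\end{equation*}
whose left side is nonnegative by AM--GM and whose right side is nonnegative because $\Delta\ge0$ and $G\ge\varepsilon_1$ (the geometric mean is at least the minimum).

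Next I would feed in the constraint. The inequality $\varepsilon_n\le E_{n-1}/(n-1)$ rearranges to $\Delta\ge(n-1)\varepsilon_n-\sum_{j=1}^{n-1}\varepsilon_j=n\varepsilon_n-nA=\sum_{j=1}^n(\varepsilon_n-\varepsilon_j)=n(\varepsilon_n-A)$. Substituting this lower bound on $\Delta$ (and using $G-\varepsilon_1\ge0$), it suffices to prove $\varepsilon_1(A-G)\le(\varepsilon_n-A)(G-\varepsilon_1)$, and expanding the products collapses this to the single compact inequality
\begin{equation*}
G(\varepsilon_1+\varepsilon_n-A) \ge \varepsilon_1\varepsilon_n .
\end{equation*}

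The heart of the argument is this last inequality. I would write $\varepsilon_1+\varepsilon_n-A=\frac1n\sum_{j=1}^n(\varepsilon_1+\varepsilon_n-\varepsilon_j)$ and apply AM--GM to the $n$ nonnegative numbers $\varepsilon_1+\varepsilon_n-\varepsilon_j$, then multiply through by $G=\bigl(\prod_j\varepsilon_j\bigr)^{1/n}$ to get $G(\varepsilon_1+\varepsilon_n-A)\ge\bigl(\prod_{j=1}^n\varepsilon_j(\varepsilon_1+\varepsilon_n-\varepsilon_j)\bigr)^{1/n}$. Now each factor obeys $\varepsilon_j(\varepsilon_1+\varepsilon_n-\varepsilon_j)\ge\varepsilon_1\varepsilon_n$: the map $t\mapsto t(\varepsilon_1+\varepsilon_n-t)$ is concave and takes the common value $\varepsilon_1\varepsilon_n$ at both endpoints $t=\varepsilon_1$ and $t=\varepsilon_n$, so it stays $\ge\varepsilon_1\varepsilon_n$ throughout $[\varepsilon_1,\varepsilon_n]$, which contains every $\varepsilon_j$. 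Multiplying these $n$ factor bounds gives $\bigl(\prod_j\varepsilon_j(\varepsilon_1+\varepsilon_n-\varepsilon_j)\bigr)^{1/n}\ge\varepsilon_1\varepsilon_n$, which closes the chain.

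For equality I would run the chain backwards: equality in the AM--GM step forces all $\varepsilon_1+\varepsilon_n-\varepsilon_j$ equal, hence $\varepsilon_1=\cdots=\varepsilon_n$; conversely, when all $\varepsilon_j$ coincide one checks directly that both sides of Eq.~\eqref{theorem_reverse-am-gm} equal $\bigl((\Delta+n\varepsilon_1)/n\bigr)^n$ for every $\Delta\ge0$. I expect the main obstacle to be not any single computation — each is short — but the two design choices that make the reduction go through: realizing that only the $k=n-1$ constraint is needed (the weaker constraints for $k<n-1$ are never used), and finding the concave-parabola-plus-AM--GM combination that produces exactly the bound $G(\varepsilon_1+\varepsilon_n-A)\ge\varepsilon_1\varepsilon_n$, since cruder estimates of $G$ in terms of $A,\varepsilon_1,\varepsilon_n$ alone turn out to be too lossy.
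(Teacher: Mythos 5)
Your proof is correct, but it follows a genuinely different route from the paper's. The paper proves Eq.~\eqref{theorem_reverse-am-gm} by induction on the length of the sequence, defining the gap $f_j$ between the two sides and showing via partial derivatives $\partial f_j/\partial\varepsilon_k\ge f_{j-1}\ge 0$ (together with the chain $E_n/n\le\cdots\le E_1$) that $f_j$ is minimized, with value zero, exactly at the all-equal configuration; the constraints for every $k$ enter through that chain. You instead take $n$-th roots to reduce the statement to $n(A-G)\le(\Delta/\varepsilon_1)(G-\varepsilon_1)$, invoke only the top constraint $\varepsilon_n\le E_{n-1}/(n-1)$ in the form $\Delta\ge n(\varepsilon_n-A)$ (note this is not extra generality: given $\varepsilon_1\le\cdots\le\varepsilon_n$, the $k=n-1$ constraint already implies the ones for smaller $k$, since $\sum_{j\le k}(\varepsilon_{k+1}-\varepsilon_j)\le\sum_{j\le n-1}(\varepsilon_n-\varepsilon_j)$), and collapse everything to the clean mean inequality $G(\varepsilon_1+\varepsilon_n-A)\ge\varepsilon_1\varepsilon_n$, proved by AM--GM on the reflected numbers $\varepsilon_1+\varepsilon_n-\varepsilon_j$ plus the pointwise bound $\varepsilon_j(\varepsilon_1+\varepsilon_n-\varepsilon_j)-\varepsilon_1\varepsilon_n=-(\varepsilon_j-\varepsilon_1)(\varepsilon_j-\varepsilon_n)\ge0$. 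All the reductions check out (the substitution of the lower bound for $\Delta$ is legitimate because $G\ge\varepsilon_1$ makes the right-hand side nondecreasing in $\Delta$), and the equality analysis works because equality in the original statement squeezes every link of your chain, in particular the AM--GM step, forcing $\varepsilon_1=\cdots=\varepsilon_n$; the converse is the direct computation you give. What each approach buys: yours is shorter, avoids calculus and induction, makes the equality case transparent, and isolates a self-contained min--max--AM--GM inequality of independent interest, while still yielding the paper's Corollary~\ref{theorem_reverse-m} by simply applying the result to the truncated sequence $(\varepsilon_1,\dots,\varepsilon_j)$ with the same $\Delta$; the paper's inductive argument produces that $j$-indexed family as an intrinsic byproduct of its induction, at the cost of a more delicate derivative bookkeeping.
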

\begin{proof}
	To start with, note that 
	${E_k}/k={\varepsilon_{k}/k+E_{k-1}}/k\le E_{k-1}/\left(k-1\right)$. Therefore the following inequality holds:
	\begin{equation}
	\label{theorem_reverse-am-gm_1}
	\frac{E_{n}}{n}\le\frac{E_{n-1}}{n-1}\le\cdots\le \frac{E_{1}}{1}.
	\end{equation}
	Now we will prove Eq.~\eqref{theorem_reverse-am-gm} by induction. Suppose
	\begin{equation}
	\label{theorem_reverse-am-gm_2}
	\left(\frac{E_{j-1}}{j-1}\right)^{j-1} \le \varepsilon_{j-1}\varepsilon_{j-2}\cdots\varepsilon_{1}\left(1+\frac{\Delta/\varepsilon_1}{j-1}\right)^{j-1};
	\end{equation}
	we would like to prove 
	\begin{equation}
	\label{theorem_reverse-am-gm_3}
	\left(\frac{E_j}{j}\right)^j \le \varepsilon_{j}\varepsilon_{j-1}\cdots\varepsilon_{1}\left(1+\frac{\Delta/\varepsilon_1}{j}\right)^j.
	\end{equation}
	To do so, define $f_j(\varepsilon_{j},\varepsilon_{j-1},\cdots,\varepsilon_{1})$ as the right-hand side minus the left-hand side of Eq.~\eqref{theorem_reverse-am-gm_3}. Taking the derivative of $f_j$ w.r.t. $\varepsilon_{k}$ for $k=2,\cdots,j$ yields
	\begin{eqnarray}
	\label{theorem_reverse-am-gm_4}
	\frac{\partial f_j}{\partial  \varepsilon_{k}}&=&\frac{\varepsilon_{j}\varepsilon_{j-1}\cdots\varepsilon_{1}}{\varepsilon_{k}}\left(1+\frac{\Delta/\varepsilon_1}{j}\right)^j-\left(\frac{E_j}{j}\right)^{j-1}\nonumber\\
	&\ge&\varepsilon_{j-1}\cdots\varepsilon_{1} \left(1+\frac{\Delta/\varepsilon_1}{j}\right)^j-\left(\frac{E_{j-1}}{j-1}\right)^{j-1}\nonumber\\
	&\ge&\varepsilon_{j-1}\cdots\varepsilon_{1} \left(1+\frac{\Delta/\varepsilon_1}{j-1}\right)^{j-1}-\left(\frac{E_{j-1}}{j-1}\right)^{j-1}\nonumber\\
	&\ge& f_{j-1} \ge 0
	\end{eqnarray}
	with equality if and only if $\varepsilon_{j}=\varepsilon_{j-1}=\cdots=\varepsilon_{1}$.
	Thus, fixing $\varepsilon_{1}$ and noticing that $\varepsilon_{2}$ is constrained by $\varepsilon_{1}$ only, we conclude that, since ${\partial f_j}/{\partial \varepsilon_{2}}\ge0$, $f_j$ takes the minimum if and only if $\varepsilon_{2}\ge \varepsilon_{1}$ actually takes the equality; next, fixing $\varepsilon_{1}$ and $\varepsilon_{2}=\varepsilon_{1}$ and noticing that $\varepsilon_{3}$ is constrained by $\varepsilon_{1}$ and $\varepsilon_{2}$ only, we conclude that $f_j$ takes the minimum if and only if $\varepsilon_{3}\ge \varepsilon_{2}$ actually takes the equality; \ldots. 
	
	Taken together, we conclude that $f_j$ takes the minimum if and only if $\varepsilon_{j}=\varepsilon_{j-1}=\cdots=\varepsilon_{1}$, which put back into $f_j$ yields $f_j(\varepsilon_{1},\varepsilon_{1},\cdots,\varepsilon_{1})= 0$. Hence $f_j\ge 0$.
	The induction is thus completed given $f_1= 0$.
\end{proof}

We note that, despite being a reversed inequality of the AM--GM type, Theorem~\ref{theorem_reverse} is tight. This is because $\Delta$ can take any nonnegative value. The two independent constraints $\varepsilon_{k+1}\ge \varepsilon_{k}$ and $\varepsilon_{k+1}\le {E_k}/k$ %
in Theorem~\ref{theorem_reverse} together imply that the deviations between the coordinates of $\boldsymbol{\varepsilon}$ cannot be too large, which are controlled by $\Delta$. In particular, if $\Delta=0$, then Eq.~\eqref{theorem_reverse-am-gm} just becomes the reverse of the usual AM--GM inequality. However, $\Delta=0$ also requires $\varepsilon_{k+1}=\varepsilon_{k}=\cdots=\varepsilon_{1}$ given the constraints. Therefore Eq.~\eqref{theorem_reverse-am-gm} will not be violated since it can only take the equality.

\begin{corollary}
	\label{theorem_reverse-m}
	(Same prerequisites as in Theorem~\ref{theorem_reverse})
	\begin{equation}
	\label{theorem_reverse-m_1}
	\left(\frac{E_j}{j}\right)^j \le \varepsilon_{j}\varepsilon_{j-1}\cdots\varepsilon_{1}\left(1+\frac{\Delta/\varepsilon_1}{j}\right)^j, \quad j=1,2,\cdots,n
	\end{equation}
	with equality if and only if %
	$\varepsilon_{j}=\varepsilon_{j-1}=\cdots=\varepsilon_{1}$.
\end{corollary}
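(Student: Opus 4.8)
The plan is to observe that Corollary~\ref{theorem_reverse-m} is an immediate consequence of Theorem~\ref{theorem_reverse} applied to truncated sequences. Fix any $j\in\{1,2,\cdots,n\}$ and consider the shortened vector $\boldsymbol{\varepsilon}'=\left(\varepsilon_{1},\varepsilon_{2},\cdots,\varepsilon_{j}\right)\in\mathbb{R}_{+}^{j}$ together with the \emph{same} offset $\Delta\ge 0$. The associated partial sums $E'_{k}=\Delta+\sum_{i=1}^{k}\varepsilon_{i}$ agree with $E_{k}$ for every $k\le j$, and the hypotheses $\varepsilon_{k}\le\varepsilon_{k+1}\le E_{k}/k$ required by Theorem~\ref{theorem_reverse} for $k=1,\cdots,j-1$ are inherited verbatim from the hypotheses of the Corollary, which are assumed for all $k$. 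Therefore Theorem~\ref{theorem_reverse}, with $n$ replaced by $j$, applies to $\boldsymbol{\varepsilon}'$ and yields precisely Eq.~\eqref{theorem_reverse-m_1}, together with the stated equality characterization $\varepsilon_{j}=\varepsilon_{j-1}=\cdots=\varepsilon_{1}$.

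Equivalently, and this is really the same remark phrased differently, one may simply note that the induction carried out in the proof of Theorem~\ref{theorem_reverse} establishes Eq.~\eqref{theorem_reverse-am-gm_3} at \emph{every} stage $j=1,2,\cdots,n$, not only at the terminal stage $j=n$; collecting those intermediate inequalities is exactly the content of Corollary~\ref{theorem_reverse-m}. In particular no new estimate is needed: the chain of inequalities in Eq.~\eqref{theorem_reverse-am-gm_4} that drives the induction is valid for each $j$, and it already encodes the monotone-deviation control supplied by the twin constraints $\varepsilon_{k+1}\ge\varepsilon_{k}$ and $\varepsilon_{k+1}\le E_{k}/k$.

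There is, accordingly, no substantive obstacle here. The only thing worth checking is that the hypotheses genuinely restrict to a truncation, which is automatic since the constraints for indices $k\le j-1$ form a sub-collection of those assumed for all $k$ in Theorem~\ref{theorem_reverse}. The Corollary is thus a repackaging of the induction hypothesis and conclusion already proved, and its proof can be written in a single line as ``apply Theorem~\ref{theorem_reverse} to $\left(\varepsilon_{1},\cdots,\varepsilon_{j}\right)$'' or, equally, as ``this is Eq.~\eqref{theorem_reverse-am-gm_3} read off for arbitrary $j$.''
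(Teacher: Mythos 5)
Your proposal is correct and matches the paper's own justification: the paper proves Corollary~\ref{theorem_reverse-m} precisely by observing that the induction in Theorem~\ref{theorem_reverse} already establishes Eq.~\eqref{theorem_reverse-am-gm_3} at every stage $j$, the theorem being the $j=n$ case. Your alternative phrasing via truncating to $\left(\varepsilon_1,\cdots,\varepsilon_j\right)$ with the same $\Delta$ is just a repackaging of the same observation and is equally valid.
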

\begin{proof}
	This has been proved in the proof of Theorem~\ref{theorem_reverse}, which is a special case ($j=n$) of this Corollary.
\end{proof}

\begin{corollary}
	\label{theorem_reverse-d}
	(Same prerequisites as in Theorem~\ref{theorem_reverse})
	\begin{equation}
	\label{theorem_reverse-d_1}
	\left(\frac{E_j}{j}\right)^j \le \varepsilon_{j}\varepsilon_{j-1}\cdots\varepsilon_{1}e^{\Delta/\varepsilon_1}, \quad j=1,2,\cdots,n
	\end{equation}
	with equality if and only if $\Delta=0$.
\end{corollary}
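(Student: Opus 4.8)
The plan is to obtain this Corollary as an immediate consequence of Corollary~\ref{theorem_reverse-m} together with the elementary exponential bound $\left(1+t/j\right)^j\le e^{t}$, valid for every $t\ge 0$ and every positive integer $j$, with strict inequality whenever $t>0$ (which follows from $\ln(1+u)<u$ for $u>0$ applied with $u=t/j$). First I would apply Corollary~\ref{theorem_reverse-m} with $t=\Delta/\varepsilon_1\ge 0$, which gives $\left(E_j/j\right)^j\le \varepsilon_j\varepsilon_{j-1}\cdots\varepsilon_1\left(1+(\Delta/\varepsilon_1)/j\right)^j$ for each $j=1,\dots,n$; then replacing the last factor by the larger quantity $e^{\Delta/\varepsilon_1}$ yields Eq.~\eqref{theorem_reverse-d_1} for all $j$.

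For the equality statement I would argue in both directions. If $\Delta=0$, then the exponential bound is saturated, and, more importantly, the hypotheses $\varepsilon_k\le\varepsilon_{k+1}\le E_k/k$ degenerate: with $\Delta=0$ the quantity $E_k/k$ is the arithmetic mean of the nondecreasing sequence $\varepsilon_1,\dots,\varepsilon_k$, so $E_k/k\le\varepsilon_k\le\varepsilon_{k+1}\le E_k/k$, which forces $\varepsilon_1=\cdots=\varepsilon_n$; both sides of Eq.~\eqref{theorem_reverse-d_1} then equal $\varepsilon_1^{\,j}$ and equality holds. Conversely, if equality holds in Eq.~\eqref{theorem_reverse-d_1}, then in particular the step $\left(1+(\Delta/\varepsilon_1)/j\right)^j\le e^{\Delta/\varepsilon_1}$ must be an equality, which by strictness forces $\Delta/\varepsilon_1=0$, i.e.,~$\Delta=0$.

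I do not expect any genuine obstacle: the only points to verify are the strictness of $\left(1+t/j\right)^j\le e^{t}$ for $t>0$ and the observation that the $\Delta=0$ case of the hypotheses collapses $\boldsymbol{\varepsilon}$ to a constant vector, which reconciles the ``$\Delta=0$'' equality condition stated here with the ``all $\varepsilon_j$ equal'' condition inherited from Corollary~\ref{theorem_reverse-m}. No analytic input beyond Corollary~\ref{theorem_reverse-m} is required.
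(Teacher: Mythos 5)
Your proposal is correct and follows essentially the same route as the paper, which likewise derives the result by combining Corollary~\ref{theorem_reverse-m} with the elementary bound $\left(1+\frac{\Delta/\varepsilon_1}{j}\right)^j\le e^{\Delta/\varepsilon_1}$ (equality iff $\Delta=0$). Your additional observation that $\Delta=0$ forces $\varepsilon_1=\cdots=\varepsilon_n$ under the hypotheses is the same point the paper makes in its remark following Theorem~\ref{theorem_reverse}, so nothing beyond the paper's argument is needed.
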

\begin{proof}
	This is because
	\begin{equation}
	\label{theorem_reverse-d_2}
	\left(1+\frac{\Delta/\varepsilon_1}{j}\right)^j\le e^{\Delta/\varepsilon_1}, \quad j>0
	\end{equation}
	with equality if and only if $\Delta=0$.
\end{proof}

\bibliography{ConDET}

\end{document}